\newcommand{\strucA}{\mathcal A}
\newcommand{\strucB}{\mathcal B}
\newcommand{\strucC}{\mathcal C}
\newcommand{\graphG}{\mathcal G}
\newcommand{\graphH}{\mathcal H}
\newcommand{\graphK}{\mathcal K}
\newcommand{\graphC}{\mathcal C}
\newcommand{\classA}{\mathfrak A}
\newcommand{\classB}{\mathfrak B}
\newcommand{\classC}{\mathfrak C}
\newcommand{\classD}{\mathfrak D}
\newcommand{\classH}{\mathfrak H}
\newcommand{\Hom}{\ensuremath{\operatorname{Hom}}}
\newcommand{\classall}{\textbf{\_}}
\newcommand{\dset}{\operatorname{\mathsf D}}
\newcommand{\dsize}{\operatorname{\mathsf d}}
\newcommand{\dsub}[1]{\operatorname{\mathsf d}_{#1}}
\newcommand{\key}{\operatorname{key}}
\newcommand{\dom}{\operatorname{dom}}
\newcommand{\Trans}{\ensuremath{\operatorname{Trans}}}
\newcommand{\tw}{\operatorname{\mathsf tw}}
\newcommand{\ar}{\operatorname{\mathsf ar}}
\newtheorem{theorem}{Theorem}%
\newtheorem{lemma}[theorem]{Lemma}
\newtheorem{corollary}[theorem]{Corollary}
\newtheorem{claim}[theorem]{Claim}
\newtheorem{definition}[theorem]{Definition}
\newtheorem{example}[theorem]{Example}
\newcommand{\confORfull}[3]{#2} 
\title{A dichotomy for succinct representations of homomorphisms}
\date{September 29, 2022}
\author{
  Christoph Berkholz\\
  Technische Universit{\"a}t Ilmenau \\
  \texttt{christoph.berkholz@tu-ilmenau.de}
  \and
  Harry Vinall-Smeeth\\
  Humboldt-Universit{\"a}t zu Berlin \\
  \texttt{harry.vinall-smeeth@informatik.hu-berlin.de}
}
\begin{document}

\maketitle

\begin{abstract}
The task of computing homomorphisms between two finite relational
structures $\strucA$ and $\strucB$ is a well-studied question with
numerous applications. Since the set $\Hom(\strucA,\strucB)$ of all
homomorphisms may be very large having a method of representing it in
a succinct way, especially one which enables us to perform efficient
enumeration and counting, could be extremely useful.

One simple yet powerful way of doing so is to decompose
$\Hom(\strucA,\strucB)$ using union and Cartesian product. 
Such data structures, called d-representations, have been introduced by
Olteanu and Z{\'a}vodn{\`y} \cite{olteanu2015} in the context of
evaluating conjunctive queries. Their results also imply that if the treewidth of the left-hand
side structure $\strucA$ is bounded, then a d-representation of
polynomial size can be found in polynomial time. We show that for
structures of bounded arity this is optimal: if the treewidth is
unbounded then there are instances where the size of any
d-representation is superpolynomial. Along the way we develop tools
for proving lower bounds on the size of d-representations, in
particular we define a notion of reduction suitable for this context
and prove an almost tight lower bound on the size of d-representations
of all $k$-cliques in a graph.

\end{abstract}

\section{Introduction}

\begin{figure}  
  \centering

\begin{tikzpicture}
  [wire/.style={thick, ->, shorten <= 0.3mm, shorten >= 0.5mm},
   edge/.style={ -stealth, shorten <= 0.3mm, shorten >= 0.5mm},
   inputgate/.style={inner sep=1pt,minimum size=1mm},
   gate/.style={draw,circle,inner sep=1pt,minimum size=1mm},
   vertex/.style={draw,circle,fill=black,inner sep=2pt,minimum
   size=1.5mm}]

    \node at (0.5,-2) {$\graphG$};
    \node at (3.5,-2) {$\graphH$};
    
    \node at (9,-2) {d-representation of $\operatorname{Hom}(\graphG,\graphH)$};
 
    \node[vertex,label=left:{$x$}] (x1) at (0,0) {};
    \node[vertex,label=right:{$y$}] (x2) at (1,.577) {};
    \node[vertex,label=right:{$z$}] (x3) at (1,-.577) {};
    \draw[edge] (x1) -- (x2);
    \draw[edge] (x1) -- (x3);
    \draw[edge] (x2) -- (x3);

    \node[vertex,label=left:{$a_1$}] (a1) at (2.5,0+.4) {};
    \node[vertex,label=left:{$a_2$}] (a2) at (2.5,0) {};
    \node[vertex,label=left:{$a_3$}] (a3) at (2.5,0-.4) {};

    \node[vertex,label=right:{$b_1$}] (b1) at (2.5+1,1.2+.4) {};
    \node[vertex,label=right:{$b_2$}] (b2) at (2.5+1,1.2) {};
    \node[vertex,label=right:{$b_3$}] (b3) at (2.5+1,1.2-.4) {};

    \node[vertex,label=right:{$d_1$}] (d1) at (2.5+1,-1.2+.4) {};
    \node[vertex,label=right:{$d_2$}] (d2) at (2.5+1,-1.2) {};
    \node[vertex,label=right:{$d_3$}] (d3) at (2.5+1,-1.2-.4) {};

    \node[vertex,label=right:{$c$}] (c) at (2.5+2,0) {};

    \draw[edge] (a1) -- (b1);
    \draw[edge] (a2) -- (b1);
    \draw[edge] (a3) -- (b1);

    \draw[edge] (a1) -- (b2);
    \draw[edge] (a2) -- (b2);
    \draw[edge] (a3) -- (b2);

    \draw[edge] (a1) -- (b3);
    \draw[edge] (a2) -- (b3);
    \draw[edge] (a3) -- (b3);

    \draw[edge] (a1) -- (d1);
    \draw[edge] (a2) -- (d1);
    \draw[edge] (a3) -- (d1);

    \draw[edge] (a1) -- (d2);
    \draw[edge] (a2) -- (d2);
    \draw[edge] (a3) -- (d2);

    \draw[edge] (a1) -- (d3);
    \draw[edge] (a2) -- (d3);
    \draw[edge] (a3) -- (d3);

    \draw[edge] (b1) -- (c);
    \draw[edge] (b2) -- (c);
    \draw[edge] (b3) -- (c);

    \draw[edge] (c) -- (d1);
    \draw[edge] (c) -- (d2);
    \draw[edge] (c) -- (d3);

    \draw[edge] (a1) -- (c);
    \draw[edge] (a2) -- (c);
    \draw[edge] (a3) -- (c);

    \begin{scope}[xshift=9cm,yshift=-1.5cm]
      \node[inputgate] (xa1) at (-.8,0) {\tiny$x\!\mapsto\! a_1$};
      \node[inputgate] (xa2) at (0,0) {\tiny$x\!\mapsto\! a_2$};
      \node[inputgate] (xa3) at (+.8,0) {\tiny$x\!\mapsto\! a_3$};

      \node[inputgate] (yb1) at (-2.5-.8,0) {\tiny$y\!\mapsto\! b_1$};
      \node[inputgate] (yb2) at (-2.5 ,0) {\tiny$y\!\mapsto\! b_2$};
      \node[inputgate] (yb3) at (-2.5+.8,0) {\tiny$y\!\mapsto\! b_3$};

      \node[inputgate] (zd1) at (2.5-.8,0) {\tiny$z\!\mapsto\! d_1$};
      \node[inputgate] (zd2) at (2.5,0) {\tiny$z\!\mapsto\! d_2$};
      \node[inputgate] (zd3) at (2.5+.8,0) {\tiny$z\!\mapsto\! d_3$};

      \node[inputgate] (zc) at (-4,0) {\tiny$z\!\mapsto\! c$};
      \node[inputgate] (yc) at (4,0) {\tiny$y\!\mapsto\! c$};

      \node[gate] (g1) at (-1.5,1) {\small $\cup$};
      \node[gate] (g2) at (0,1) {\small $\cup$};
      \node[gate] (g3) at (1.5,1) {\small $\cup$};
      
      \node[gate] (g4) at (-1,2) {\small $\times$};
      \node[gate] (g5) at (1,2) {\small $\times$};
      
      \node[gate] (g6) at (0,2.75) {\small $\cup$};

      \draw[wire] (yb1) -- (g1);
      \draw[wire] (yb2) -- (g1);
      \draw[wire] (yb3) -- (g1);
      
      \draw[wire] (xa1) -- (g2);
      \draw[wire] (xa2) -- (g2);
      \draw[wire] (xa3) -- (g2);

      \draw[wire] (zd1) -- (g3);
      \draw[wire] (zd2) -- (g3);
      \draw[wire] (zd3) -- (g3);

      \draw[wire] (zc) -- (g4);
      \draw[wire] (yc) -- (g5);

      \draw[wire] (g1) -- (g4);
      \draw[wire] (g2) -- (g4);

      \draw[wire] (g3) -- (g5);
      \draw[wire] (g2) -- (g5);

      \draw[wire] (g4) -- (g6);
      \draw[wire] (g5) -- (g6);

    \end{scope}
    
\end{tikzpicture}

 \caption{A deterministic d-representation of all homomorphisms from
  $\graphG$ to $\graphH$.}
\label{fig:representation}
\end{figure}

The task of computing homomorphisms between two finite relational
structures has a long history and numerous applications. Most notably,
as pointed out by Feder and Vardi \cite{feder1993},
it is the right abstraction for the constraint satisfaction problem
(CSP)---a framework for search problems that generalised
 Boolean satisfiability. Moreover, evaluating conjunctive queries on a
relational database is equivalent to computing homomorphisms from the
query structure to the database.
While deciding the existence of a homomorphism from a structure
$\strucA$ to a structure $\strucB$ 
is a classical NP-complete problem, several restrictions of the input
instance have been considered in order to understand the landscape of
tractability. One line of research investigates \emph{right-hand-side}
restrictions, where it is asked for which classes of structures $\strucB$ the
CSP becomes tractable and when it remains hard. This culminated in the solution
\cite{DBLP:conf/focs/Bulatov17,zhuk2020proof} of the CSP-dichotomy
 conjecture  \cite{feder1993} that
 characterises those $\strucB$ where finding a homomorphism from a
 given structure $\strucA$ can be done in polynomial time (assuming
 P$\neq$NP).

 Another line of research, to which we contribute in this paper, focuses on
 \emph{left-hand-side} restrictions: for which classes of structures $\strucA$ can
 we efficiently find a homomorphism from $\strucA$ to a given 
 $\strucB$?
 In this scenario, a dichotomy is only known when all relations have
 bounded arity, as is the case for graphs, digraphs, or $k$-uniform
 hypergraphs. Grohe \cite{Grohe2007} showed that, modulo complexity
 theoretic assumptions, for any class of structures $\classA$
 of bounded arity the decision
 problem, ``Given a structure $\strucA \in \classA$ and a
 structure $\strucB$, is there a homomorphism from $\strucA$ to
 $\strucB$?'' is in polynomial time if and only if the homomorphic
 core of every structure in $\classA$ has bounded treewidth.
 For classes of unbounded arity, polynomial time tractability has been
 shown for fractional hypertreewidth
 \cite{Atserias:2013,grohe2014constraint}, but a full characterisation
 of tractability has
 only been obtained in the parameterised setting using submodular
 width \cite{marx2013}.
Besides deciding the existence of a homomorphism, the complexity of
counting all homomorphism has also been characterised in the
right-hand-side regime \cite{DBLP:journals/jacm/Bulatov13} and for bounded-arity classes of left-hand-side
structures \cite{DBLP:journals/tcs/DalmauJ04}.
A third task, that is less well understood, is to \emph{enumerate} all
homomorphisms; here
only partial results on the complexity are known (e.\,g.\ \cite{bulatov_et_al:LIPIcs:2009:1838,DBLP:journals/ita/CreignouH97,DBLP:conf/stacs/SchnoorS07,DBLP:journals/constraints/GrecoS13}).

In this work we consider the task of \emph{representing} the set $\operatorname{Hom}(\strucA,\strucB)$ of all
homomorphisms in a succinct and accessible way. In particular, we want
to store all, potentially exponentially many, homomorphisms, in a data
structure of polynomial size that enables us to, e.\,g.\, generate a
stream of all homomorphisms. The data structures we are interested in---so-called d-representations---
were first introduced to represent homomorphisms in the context of join evaluation under the name
\emph{factorised databases} \cite{olteanu2015}.
They are conceptually very simple: the set of homomorphisms is
represented by a circuit, where the ``inputs'' are mappings of single
vertices and larger sets of mappings are generated by combining local
mappings using Cartesian product $\times$ and union
$\cup$. In the circuit previously computed sets of local
homomorphisms are represented by gates and can be used several times,
see Figure~\ref{fig:representation} for an example.
Such a representation is called \emph{deterministic} if every $\cup$-gate is
guaranteed to combine disjoint sets. Deterministic representations
have the advantage that the number of homomorphisms can be efficiently
counted by adding the sizes of the local homomorphism sets on every
$\cup$-gate and multiplying them on every $\times$-gate. Moreover, all
homomorphisms can be
efficiently enumerated where the delay between two outputs is only
linear in the size of every produced homomorphism (= size of the
universe of $\mathcal A$) \cite{DBLP:conf/icalp/AmarilliBJM17}.
It is known that if the treewidth of the left-hand side
structure is bounded, then a deterministic d-representation of
polynomial size can be
found in polynomial time \cite{olteanu2015}. Our main theorem shows that for
structures of bounded arity this is optimal: if the treewidth is
unbounded, then there are instances where the size of any (not necessarily deterministic)
representation is superpolynomial.

\begin{theorem}\label{thm:maintheorem}
  Let $r\in\mathbb N$, $\sigma$ a signature of arity $\leq r$ and $\classA$ a
  class of $\sigma$-structures. Then the following are equivalent:
  \begin{enumerate}
  \item There is a $w\in\mathbb N$ such that every structure in
    $\classA$ has treewidth at most $w$.
  \item A deterministic
    d-representation of $\operatorname{Hom}(\strucA,\strucB)$ can be
    computed in polynomial time, for any $\strucA \in \classA$ and any $\strucB$. 
  \item There is a $c\in\mathbb N$ such that for any $\strucA\in\classA$
    and any $\strucB$ there exists a (not necessarily deterministic) d-representation of
    $\operatorname{Hom}(\strucA,\strucB)$ of size $O\big(\,(\|\strucA\|+\|\strucB\|)^c\,\big)$.
  \end{enumerate}
\end{theorem}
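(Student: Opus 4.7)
The plan is to split the equivalence into the three implications (1)$\Rightarrow$(2)$\Rightarrow$(3)$\Rightarrow$(1). The implication (1)$\Rightarrow$(2) is the Olteanu--Z{\'a}vodn{\`y} construction \cite{olteanu2015}: given $\strucA$ of treewidth at most $w$ and any $\strucB$, one can use a tree decomposition of $\strucA$ of width $w$ to glue together partial homomorphisms along $\cup$-gates (over candidate images of each bag) and $\times$-gates (over independent branches of the decomposition), producing a deterministic d-representation of polynomial size in polynomial time. The implication (2)$\Rightarrow$(3) is immediate because a deterministic d-representation is in particular a d-representation.

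All the substance is in the contrapositive of (3)$\Rightarrow$(1): assuming $\classA$ has unbounded treewidth, I must exhibit, for every constant $c$, instances $(\strucA,\strucB)$ with $\strucA\in\classA$ for which every (not necessarily deterministic) d-representation of $\Hom(\strucA,\strucB)$ has size $\omega\bigl((\|\strucA\|+\|\strucB\|)^c\bigr)$. The strategy uses three ingredients, to be combined in sequence.

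First, a lower bound for the canonical hard instance: the $k$-clique. I would construct, for each $n$ and $k$, a graph $\graphH_n$ on $\Theta(n)$ vertices in which any d-representation of $\Hom(K_k,\graphH_n)$ has size $n^{\Omega(k/\log k)}$ or comparable. Since d-representations need not be deterministic, a naive size-counting at $\cup$-gates is unavailable; the approach is to view each gate of a d-representation as producing a set of partial mappings whose top-level output naturally decomposes into a union of combinatorial rectangles over small coordinate sets, and then to bound from below the number of such low-dimensional rectangles required to cover the set of all $k$-cliques of $\graphH_n$, via a sunflower- or fooling-set-style argument on the clique incidence structure. This is the principal obstacle, because standard lower bounds for d-representations exploit determinism.

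Second, a reduction framework: I would define a notion of reduction between pairs $(\strucA,\strucB)$ that preserves d-representation size up to polynomial blow-up, by rewriting the circuit so that restrictions/projections of homomorphisms $\strucA\to\strucB$ yield homomorphisms from a smaller ``sub-instance'' (in our case $K_k\to\graphH_n$) with correspondingly small d-representations. This is the abstract analogue of many-one reductions in parameterised complexity, tailored to the $\cup/\times$ structure.

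Third, the structural step. For signatures of arity $\leq r$, unbounded treewidth of $\classA$ implies, via the excluded grid theorem applied to Gaifman graphs and following Grohe's dichotomy \cite{Grohe2007}, that for every $k$ some $\strucA_k\in\classA$ contains (in a suitably robust sense, such as through a minor model or topological minor) a structure capable of simulating a $k$-clique on its homomorphisms. Using the reduction framework, I would encode $\Hom(K_k,\graphH_n)$ into $\Hom(\strucA_k,\strucB)$ for an appropriate $\strucB$ built from $\graphH_n$, so that the clique lower bound transfers to $\strucA_k$ and refutes (3). The main difficulty I anticipate lies entirely in the first step; once the non-deterministic clique lower bound is in hand, the reduction and the structural embedding follow the general template used for bounded-arity dichotomies.
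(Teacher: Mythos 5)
Your overall architecture matches the paper's: (1)$\Rightarrow$(2) by Olteanu--Z\'avodn\'y, (2)$\Rightarrow$(3) trivially, and the contrapositive of (3)$\Rightarrow$(1) via an unconditional clique lower bound, a size-preserving reduction framework, and the excluded grid theorem. The two places where your plan has genuine gaps are precisely the two places where the paper has to work hardest.

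First, the clique lower bound is only gestured at. Your rectangle-cover idea (every gate/accepting trace of a d-representation induces a product of mapping sets over a split of the $k$ variables, so $\Hom(\graphK_k,\graphH_n)$ is covered by few ``rectangles'') is a legitimate alternative route, in the spirit of DNNF lower bounds via communication complexity, but as stated it is not a proof: you never specify the graphs $\graphH_n$, and the statement cannot hold for arbitrary graphs --- for complete graphs the set of $k$-cliques factorises into very few rectangles, which is exactly the non-monotonicity issue the paper flags. Any rectangle/fooling-set argument here must hinge on a right-hand graph that simultaneously has no large complete bipartite subgraph and many $k$-cliques; the paper supplies this via a random graph (Lemma~\ref{thm: G}) and then proves the f-rep bound by showing a $\times$-gate cannot have two ``big'' children, and handles non-determinism not by rectangles but by converting an arbitrary d-rep into an f-rep with only polylogarithmic blow-up (flattening small definitions, and bounding the number of \emph{active} parents of big definitions again via biclique-freeness, Claims~\ref{claim: small} and~\ref{claim: big}). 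So either you carry out the rectangle decomposition for unstructured circuits and prove the per-partition cover bound on a biclique-free graph, or you need something like the paper's d-rep-to-f-rep conversion; at present neither is done. (Your weaker target $n^{\Omega(k/\log k)}$ would still suffice for the dichotomy, so the issue is the missing proof, not the strength of the bound.)

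Second, the structural step as you describe it fails: you propose to extract from a large-treewidth $\strucA_k$, via the excluded grid theorem, a structure containing $\graphK_k$ ``through a minor model or topological minor''. But the grid is planar, so it has no $\graphK_5$ minor (let alone topological minor), and this is exactly why a plain minor reduction from cliques cannot be chained through the grid. The paper has to relax the minor relation to an ``almost minor'' map (Definition~\ref{def: am}, Lemma~\ref{lem:gridcon}: $\graphK_k$ is an almost minor of the $(2k-2)\times(2k-2)$ grid) and prove a corresponding reduction lemma (Lemma~\ref{lem:almost-minor-red}); Marx-style embeddings would serve the same purpose, but some such device is indispensable and absent from your plan. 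Relatedly, making the reductions compose requires the individualisation/partitioned-graph machinery (Lemmas~\ref{lem:id-reduct}, \ref{lem:Gaifman-reduct}, \ref{lem:reductgraph}) together with the projection/restriction constructions of Lemma~\ref{lem:technical}, and it is the Gaifman-graph step that accounts for the arity-$r$ loss in the exponent; your sketch assumes this ``follows the general template'' but none of it is spelled out.
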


\subparagraph*{Related work}
The research on succinct data structures for homomorphism problems has
emerged from the two different perspectives. When fixing the
right-hand-side structure $\strucB$, then data structures
like multi-valued decision
diagrams (MDD) \cite{AmilhastreCompilingCSPsComplexity2014}, AND/OR multi-valued
decision diagrams (AOMDD) \cite{MateescuCompilingConstraintNetworks2006}, and
multi-valued decomposable decision graphs (MDDG)
\cite{KoricheCompilingConstraintNetworks2015} have been proposed,
which arose from representations for Boolean functions that are
studied in \emph{knowledge compilation} (see, e.\,g.,
\cite{DBLP:journals/jair/DarwicheM02}). The (deterministic)
d-representations studied in this paper can be interpreted as (deterministic) DNNF circuits
with zero-suppressed semantics \cite[Lemma~7.4]{DBLP:conf/icalp/AmarilliBJM17}, where a $\cup$-gate corresponds to a
(deterministic) $\lor$-gate and a $\times$-gate corresponds to a
\emph{decomposable} $\land$-gate.

In the left-hand-side regime, representations have been introduced in
the context of enumerating query results. Most notably,
Olteanu and Z{\'a}vodn{\`y} \cite{olteanu2015} introduced the notion
of \emph{factorised databases} that are used to decompose the result
relation of a conjunctive query using Cartesian product and
union. Their findings imply the upper bound part of our dichotomy
theorem: if $\strucA$ has bounded treewidth, its tree decomposition
defines a so-called \emph{d-tree}, which structures the polynomial
size d-representation. 
They have also shown a limited lower bound for \emph{structured} representations
(``d-representations respecting a d-tree''). However, this lower bound
considers only a small subclass of all possible d-representations\confORfull{.}{, see
Appendix~\ref{sec: extendedUB} for a discussion.}{, see
Appendix~\ref{sec: extendedUB} for a discussion.}  %
In a similar vein, in knowledge compilation there exist several
restrictions of DNNFs e.\,g.\ requiring $\vee$-gates to be 
decision or deterministic \cite{DBLP:journals/jair/DarwicheM02}, or
enforcing structuredness \cite{DBLP:conf/aaai/PipatsrisawatD08}. In this light, the significance of our lower
bound comes from the fact that it holds for the most general
notion of representations (d-representations), which correspond to unrestricted DNNFs.

The proof of our lower bound has some connections to the
conditional lower bound for the counting complexity of homomorphisms
\cite{DBLP:journals/tcs/DalmauJ04}, which in turn builds upon the
construction of Grohe \cite{Grohe2007}. The essence of these proofs
is to rely on an assumption about the hardness of the parametrised clique problem and
reduce this to all structures of unbounded treewidth. We take a
similar route: in Section~\ref{s: clique} we prove an unconditional
lower bound for representing cliques and obtain our main lower bound
using a sequence of reductions in Section~\ref{s: dichotomy}.

The circuit notion for representing the set of homomorphisms between
two given structures (or, equivalently, the result relation of a multiway
join query) in a succinct data structure might be confused with previous
work on the (Boolean or arithmetic) circuit complexity for deciding or
counting homomorphisms or subgraphs. In this research branch, a structure $\mathcal B$  over a universe of size
$n$ is given as input to a circuit $C_{\mathcal A, n}$, which decides the existence of or counts the
number of homomorphisms (or subgraph-embeddings) from $\mathcal A$ to $\mathcal B$. 
Examples include monotone circuits for finding cliques
\cite{DBLP:journals/combinatorica/AlonB87,DBLP:journals/siamcomp/Rossman14}, bounded-depth circuits for finding cliques and other
small subgraphs \cite{DBLP:conf/stoc/Rossman08,DBLP:journals/siamcomp/LiRR17} as well as graph polynomials and monotone arithmetic
circuits \cite{DBLP:journals/jgaa/Engels16,DBLP:conf/fsttcs/BlaserKS18,DBLP:conf/icalp/Komarath0R22} for counting
homomorphisms. In particular, the recent work of Komarath, Pandey and Rahul \cite{DBLP:conf/icalp/Komarath0R22}
studies monotone arithmetic circuits that have, for each pattern $\graphG$ and each $n$, an
input indicator variable $x_{\{u,v\}}$ for each potential edge $\{u,v\} \in
[n]^2$ in the second graph $\graphH$. For every input (i.e. setting indicator
variables according to a graph $\graphH$ on $n$
vertices), the arithmetic circuit has to compute the number of homomorphisms from
$\graphG$ to $\graphH$. 
Interestingly, Komorath et al. prove a tight bound and show that such arithmetic circuits need
size $n^{\operatorname{tw}(\graphG)+1}$. Unfortunately, this and related
results from circuit complexity (such as lower bounds for the clique problem) do not translate to the knowledge
compilation approach. Part of the reason is that we crucially have a different representation
\emph{for each pair $\graphG$, $\graphH$} and having, e.\,g.\, an arithmetic circuit
computing the constant number $|\!\operatorname{Hom}(\graphG,\graphH)|$ is
trivial. Moreover, due to monotonicity, the worst-case right-hand-side
instances $\graphH$ in \cite{DBLP:conf/icalp/Komarath0R22} are
complete graphs, whereas d-representations lack this form of
monotonicity: adding edges to $\graphH$ can make factorisation simpler
and in particular occurrences of patterns in complete graphs can be succinctly factorised.

Despite this, some techniques on a more general
level (e.\,g.\ arguing about the transversal of a circuit or using random
graphs as bad examples) are useful in circuit complexity as well as
for proving lower bounds on representations.

\section{Preliminaries}
We write $\mathbb{N}$ for the set of non-negative integers and define
$[n]:= \{1, \dots, n\}$ for any positive integer $n$. Given a set $S$
we write $2^S$ to denote the power set of $S$. Whenever writing $a$ to denote
a $k$-tuple, we write $a_i$ to denote the tuple's $i$-th component;
i.\,e., $a=(a_1,\dots, a_k)$. For a function $f \colon X \to Y$ and $X' \subset X$ we write $\pi_{X'} f$ to denote the projection of $f$ to $X'$. Given a set of functions, each of which has a domain containing $X'$,  we write $\pi_{X'}F := \{ \pi_{X'}f \, \mid \, f \in F \}$. 

\subparagraph*{Graphs, Minors, Structures, Tree Decompositions}
Whenever $\graphG$ is a graph
or a hypergraph we write $V(\graphG)$ and $E(\graphG)$ for the set of nodes and
the set of edges, respectively, of $\graphG$. We let $\graphK_k$ be
the complete graph on $k$ vertices, $\graphC_k$ the $k$-cycle graph, and
$\graphG_k$ the $k\times k$-grid graph. Given a graph $\graphG$ and
$\{u,v\} \in E(\graphG)$, we can form a new graph by \emph{edge contraction}: replacing $u$ and $v$ be a new vertex $w$ adjacent to all neighbours of $u$ and $v$. A graph $\graphH$ is a \emph{minor} of $\graphG$ if $\graphH$ can be obtained from $\graphG$ by repeatedly deleting vertices, deleting edges and contracting edges.

A \emph{tree decomposition} of a graph $\graphG$ is a pair
$(T,\beta)$ where $T$ is a tree and $\beta\colon V(T)\to 2^{V(\graphG)}$ associates to every node $t\in V(T)$ a bag $\beta(t)$ such
that the following is satisfied:
(1) For every $v\in V(\graphG)$ the set $\{t\in V(T)\mid v\in
  \beta(t)\}$ is non-empty and forms a connected set in $T$.
(2) For every $\{u,v\}\in E(\graphG)$ there is some $t\in V(T)$ such that $\{u,v\}\subseteq
  \beta(t)$.
The width of a tree decomposition is $\max_{t\in V(T)}|\beta(t)|-1$
and the \emph{treewidth of} $\graphG$ is the minimum width of any
tree decomposition of $\graphG$.

A (relational) signature $\sigma$ is a set of relation symbols $R$, each of which is equipped with an arity $r =r(R)$. A (finite, relational) $\sigma$-structure $\strucA$
consists of a finite universe $A$ and  relations
$R^{\strucA}\subseteq A^r$ for every $r$-ary relation
symbol $R\in \sigma$. We will write $\|\strucA\|:=
\sum_{R \in \sigma} |R^{\strucA}|$. The \emph{Gaifman graph} of
$\strucA$ is the graph with vertex set $A$ and edges $\{u,v\}$ for any
distinct $u,v$ that occur together in a tuple of a relation in
$\strucA$. The \emph{treewidth} of a structure is the treewidth of its
Gaifman graph. We say $\strucA$ is \emph{connected} if its Gaifman graph is connected and we will henceforth assume, without loss of generality, that all structures in this paper are connected.

\subparagraph*{Enumeration}
An enumeration algorithm for $\Hom(\strucA,\strucB)$ proceeds in two
stages. In the \emph{preprocessing} stage the algorithm does some
preprocessing on $\strucA$ and $\strucB$. In the \emph{enumeration phase} the algorithm enumerates, without repetition, all homomorphisms in $\Hom(\strucA,\strucB)$, followed by the end of enumeration message. The \emph{delay} is the maximum of three times: the time between the start of the enumeration phase and the first output homomorphism, the maximum time between the output of two consecutive homomorphisms and between the last tuple and the end of enumeration message. The \emph{preprocessing time} is the time the algorithm spends in the preprocessing stage, which may be $0$. Similarly given a d-representation $C$ for $\Hom(\strucA,\strucB)$, an enumeration algorithm for $C$ has a preprocessing stage, where it can do some preprocessing on $C$, and an enumeration phase defined as above.
  
\section{Homomorphisms and the complexity of constraint satisfaction}

 A \emph{homomorphism} $h\colon \strucA\to
\mathcal B$ between two $\sigma$-structures $\strucA$ and $\mathcal
B$ is a mapping from $A$ to $B$ that preserves
all relations, i.\,e., for every $r$-ary $R\in\sigma$ and $(a_1,\ldots,a_r)\in A^r$ it holds that
if $(a_1,\ldots,a_r)\in R^{\strucA}$, then $(h(a_1),\ldots,h(a_r))\in R^{\mathcal B}$.
We let $\Hom(\strucA, \mathcal B)$ be the set of all homomorphisms
from $\strucA$ to $\mathcal B$. A (homomorphic) core of a
structure $\strucA$ is an inclusion-wise minimal substructure $\strucA'\subseteq
\strucA$ such that there is a homomorphism from $\strucA$ to
$\strucA'$.
It is well known that all cores of a structure are
isomorphic, hence we will also speak of \emph{the} core of a structure. 

Following common notation we fix a (potentially infinite) signature $\sigma$ and define for classes of $\sigma$-structures $\classA$ and $\classB$ the (promise) decision problem CSP($\classA$, $\classB$) to be:
``Given two $\sigma$-structures $\strucA \in \classA$ and
  $\mathcal B\in \classB$, is there a homomorphism from $\strucA$ to
  $\mathcal B$?''
Similarly, the counting problem \#CSP($\classA$, $\classB$) asks:
  ``Given two $\sigma$-structures $\strucA \in \classA$ and
  $\mathcal B\in \classB$, what is the number of homomorphisms from $\strucA$ to $\mathcal B$?''
A lot of work has been devoted towards classifying the classes
of structures for which the problems are solvable in polynomial time. Normally either the left-hand-side $\classA$ or
the right-hand-side $\classB$ is restricted and the other part
($\classB$ or $\classA$) is the class $\classall$ of all structures. 
A related problem is Enum-CSP($\classA$, $\classB$) \cite{bulatov_et_al:LIPIcs:2009:1838}, which is the following task:
  ``Given two $\sigma$-structures $\strucA \in \classA$ and
  $\mathcal B\in \classB$, enumerate all homomorphisms from $\strucA$ to $\mathcal B$''.
One way of defining tractability for enumeration algorithms is 
\emph{polynomial delay enumeration}, where the preprocessing time and
the delay is polynomial in $\strucA$ and $\strucB$.

In this paper we focus on ``left-hand-side'' restrictions, where
$\classB$ is the class of all structures. Moreover, we assume that
the arity of each symbol in $\sigma$ is bounded by some constant
$r$. In this setting the complexity of CSP($\classA$, $\classall$) and
\#CSP($\classA$, $\classall$) is fairly well understood: the
decision problem is polynomial time tractable iff the core of every
structure in $\classA$ has bounded treewidth, while the counting
problem is tractable if every structure from $\classA$ itself has
bounded treewidth.
This is made precise by the following two theorems.

\begin{theorem}[\cite{Grohe2007}]\label{thm:decision-dichotomy}
  Let $r\in\mathbb N$, $\sigma$ be a signature of arity $\leq r$ and $\classA$ a
  class of $\sigma$-structures.
  Under the assumption that there is no $c\in\mathbb N$ such that for
  every $k\in\mathbb N$
  there is an algorithm that finds a $k$-clique in an $n$-vertex graph in time $O(n^c)$ the
  following two statements are equivalent.
  \begin{enumerate}
  \item There is a $w\in\mathbb N$ such that the core of
    every structure in $\classA$ has treewidth at most $w$.
  \item {\upshape CSP($\classA$,$\classall$)} is solvable in polynomial time.
  \end{enumerate}
\end{theorem}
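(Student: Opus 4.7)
The equivalence splits into an easy upper bound direction $1 \Rightarrow 2$ and a hard lower bound direction $2 \Rightarrow 1$; I expect the latter to be the real work.

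For $1 \Rightarrow 2$, suppose each $\strucA \in \classA$ has a core $\strucA^{*}$ of treewidth at most $w$. Since $\Hom(\strucA,\strucB)\ne\emptyset$ iff $\Hom(\strucA^{*},\strucB)\ne\emptyset$, one would like to run the standard tree-decomposition dynamic programming on $\strucA^{*}$; but computing the core is NP-hard, so I would instead invoke the Feder--Vardi / Dalmau--Kolaitis--Vardi characterisation that a structure has core of treewidth at most $w$ iff $(w{+}1)$-consistency is sound and complete for deciding the existence of a homomorphism into any $\strucB$. Running $(w{+}1)$-consistency is polynomial in $\|\strucA\|+\|\strucB\|$, giving membership in P uniformly over $\classA$.

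For $2 \Rightarrow 1$, assume the cores in $\classA$ have unbounded treewidth; the strategy is to reduce parameterised $k$-Clique to CSP$(\classA,\classall)$ uniformly in $k$. For each $k$, pick $\strucA\in\classA$ whose core $\strucA^{*}$ has treewidth at least some large enough function of $k$; by the Excluded Grid Theorem, the Gaifman graph of $\strucA^{*}$ then contains the grid $\graphG_k$ as a minor with explicit branch sets sitting inside $\strucA^{*}$. Given an input graph $\graphH$ in which we wish to find a $k$-clique, I would construct a target structure $\strucB$ on a blow-up of $V(\graphH)$ whose relations simulate the adjacencies along the branch sets of the grid minor, together with ``selector'' relations forcing a homomorphism to be constant on each branch set. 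The aim is the equivalence
\[
  \Hom(\strucA,\strucB)\ne\emptyset \iff \graphH \text{ contains a } k\text{-clique.}
\]
The $\Leftarrow$ direction is a direct read-off from a clique; the $\Rightarrow$ direction uses that any homomorphism $\strucA\to\strucB$ factors through the retract $\strucA^{*}$, which is rigid, so the grid substructure cannot collapse and must be realised inside $\graphH$. A polynomial-time algorithm for CSP$(\classA,\classall)$ would then yield an $O(n^{c})$ algorithm for $k$-Clique with $c$ independent of $k$, contradicting the hypothesis.

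The main obstacle is the middle step of the reduction: the grid is only a \emph{minor} of the Gaifman graph, so branch sets may be arbitrary connected subsets linked by relations of arity up to $r$, and $\strucB$ must be designed so that the only way to satisfy all these constraints simultaneously is to realise a genuine $k$-clique in $\graphH$, rather than e.g.\ collapsing the whole image to a single vertex. This is where the core hypothesis does the essential work: because $\strucA^{*}$ is its own retract, homomorphisms are already forced to be ``injective enough'' on the branch structure that clique patterns cannot be cheated away. Engineering this encoding so that it works uniformly for \emph{all} families of structures with unbounded core treewidth, and not just one specific family, is Grohe's main technical contribution and is where most of the effort would go.
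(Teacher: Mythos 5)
First, note that the paper does not prove Theorem~\ref{thm:decision-dichotomy} at all: it is imported verbatim from Grohe \cite{Grohe2007}, so the only meaningful comparison is with Grohe's own argument. Your upper-bound direction ($1\Rightarrow 2$) is fine and is essentially the standard route: by homomorphic equivalence one may replace $\strucA$ by its core, and the Dalmau--Kolaitis--Vardi result shows that $(w{+}1)$-consistency correctly decides $\Hom(\strucA,\strucB)\neq\emptyset$ whenever the core has treewidth at most $w$; only this direction of the characterisation is needed, and it runs in polynomial time uniformly over $\classA$.

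The lower-bound direction, however, has a genuine gap: you correctly identify the architecture (excluded grid theorem applied to the core, reduction from $k$-clique, the core property used to rule out degenerate homomorphisms), but the step that constitutes essentially all of Grohe's proof---the construction of $\strucB$ and the proof that homomorphisms cannot cheat---is not carried out, and the justifications you offer for it are wrong as stated. Concretely: (a) $\strucB$ must be a structure over the \emph{same fixed signature} $\sigma$, so you cannot add ``selector'' relations forcing a homomorphism to be constant on branch sets; Grohe instead builds $\strucB$ with universe consisting of pairs whose first coordinate lies in the core $\strucA^{*}$ and copies the relations of $\strucA^{*}$ on that coordinate, so that projection is a homomorphism $\strucB\to\strucA^{*}$; composing any homomorphism $\strucA^{*}\to\strucB$ with this projection yields an endomorphism of $\strucA^{*}$, which is an automorphism because $\strucA^{*}$ is a core---\emph{that} is what excludes collapse, not ``rigidity'' (cores need not be rigid, they merely have all endomorphisms bijective), and your claim that a homomorphism $\strucA\to\strucB$ ``factors through the retract'' is not correct; what is true is only that $\Hom(\strucA,\strucB)\neq\emptyset$ iff $\Hom(\strucA^{*},\strucB)\neq\emptyset$. (b) A $k\times k$ grid minor does not directly encode $k$-cliques; Grohe reduces from clique using a grid of dimensions roughly $k\times\binom{k}{2}$ (rows carrying vertex choices, columns carrying the $\binom{k}{2}$ adjacency checks), so the required core treewidth grows quadratically in $k$ and the bookkeeping along the branch sets of the minor map, with relations of arity up to $r$, must be handled explicitly. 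Since you yourself flag this encoding as ``where most of the effort would go,'' the proposal should be regarded as a plausible plan that matches Grohe's strategy in outline but does not yet contain a proof of the hard direction.
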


\begin{theorem}[\cite{DBLP:journals/tcs/DalmauJ04}]\label{thm:counting-dichotomy}
  Let $r\in\mathbb N$, $\sigma$ be a signature of arity $\leq r$ and $\classA$ a
  class of $\sigma$-structures.
  Under the assumption that there is no $c\in\mathbb N$ such that for
  every $k\in\mathbb N$
  there is an algorithm that counts the number of $k$-cliques in an $n$-vertex graph in time $O(n^c)$ the
  following two statements are equivalent.
  \begin{enumerate}
  \item There is a $w\in\mathbb N$ such that every structure in $\classA$ has treewidth at most $w$.
  \item {\upshape \#CSP($\classA$,$\classall$)} is solvable in polynomial time.
  \end{enumerate}
\end{theorem}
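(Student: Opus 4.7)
The plan is to prove the two implications separately: (1) $\Rightarrow$ (2) is a direct algorithm, and (2) $\Rightarrow$ (1) is a Turing reduction from $k$-clique counting, taken in contrapositive form.

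\emph{Upper bound (1 $\Rightarrow$ 2).} Assume every structure in $\classA$ has treewidth at most $w$. Given $\strucA \in \classA$ and $\strucB$, I first compute a tree decomposition $(T,\beta)$ of the Gaifman graph of $\strucA$ of width at most $w$ in polynomial time (e.g.\ a constant-factor approximation is enough). Root $T$ arbitrarily and, for each node $t$ and each map $f \colon \beta(t) \to B$, define
\[
 N[t,f] \;:=\; \#\bigl\{\, h \in \Hom(\strucA_t,\strucB) \;:\; \pi_{\beta(t)} h = f \,\bigr\},
\]
where $\strucA_t$ is the substructure of $\strucA$ induced by the vertices appearing in bags of the subtree at $t$. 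A standard leaf/introduce/forget/join recursion over a nice tree decomposition computes all $N[t,f]$ bottom-up. Each table has at most $|B|^{w+1}$ entries and each entry is built from children's entries in time polynomial in $|B|$. Summing $N[\text{root},f]$ over all $f$ yields $|\Hom(\strucA,\strucB)|$. The total running time is $(\|\strucA\|+\|\strucB\|)^{O(w)}$, polynomial for fixed $w$.

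\emph{Lower bound (2 $\Rightarrow$ 1), contrapositive.} Suppose $\classA$ has unbounded treewidth yet \#CSP($\classA$,$\classall$) is solvable in time $n^c$. I will exhibit, for every fixed $k$, an $O(n^{c'})$-time algorithm for counting $k$-cliques in $n$-vertex graphs, where $c'$ depends only on $c$ and $k$, contradicting the hypothesis. By the excluded grid theorem there is $g \colon \mathbb N \to \mathbb N$ such that every graph of treewidth at least $g(k')$ contains $\graphG_{k'}$ as a minor; moreover $\graphG_{k'}$ contains $\graphK_k$ as a minor for $k' = O(k^2)$. Since $\classA$ is of bounded arity and unbounded treewidth, we can pick (in constant time, as $k$ is fixed) some $\strucA = \strucA(k) \in \classA$ whose Gaifman graph has $\graphK_k$ as a minor, and fix a minor model $(V_1,\dots,V_k)$: pairwise disjoint vertex subsets of $V(\strucA)$, each inducing a connected subgraph in the Gaifman graph, with at least one Gaifman edge between each pair $V_i$ and $V_j$.

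Given an input graph $\graphG$ on $n$ vertices, I construct $\strucB(\graphG)$ over universe $V(\graphG)$ (augmented with a small constant number of auxiliary ``dummy'' elements to accommodate vertices of $\strucA$ not lying in any branch set) so that, by design, every homomorphism $h\colon \strucA \to \strucB(\graphG)$ is forced to collapse each $V_i$ to a single vertex $u_i$ of $\graphG$, and the resulting $k$-tuple $(u_1,\dots,u_k)$ spans a $\graphK_k$-homomorphism into $\graphG$ (with loops allowed). This is achieved by the standard Grohe/Dalmau--Jonsson gadgeteering: for every tuple $(a_1,\dots,a_r) \in R^{\strucA}$, place into $R^{\strucB(\graphG)}$ exactly the tuples that (i) are constant on indices belonging to the same branch set, (ii) respect an edge of $\graphG$ on indices belonging to different branch sets, and (iii) use dummy elements on non-branch indices. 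The count $|\Hom(\strucA,\strucB(\graphG))|$ then equals, up to a multiplicative constant $M$ (depending only on $\strucA$) counting extensions to the non-branch vertices, a linear combination indexed by partitions $\sigma$ of $[k]$ of the number of ``$\sigma$-respecting'' clique-like homomorphisms. Applying Möbius inversion on the partition lattice of $[k]$ — whose size depends only on $k$ — to the values $|\Hom(\strucA,\strucB_\sigma(\graphG))|$ obtained by merging the corresponding branch sets in $\strucB(\graphG)$, I recover the number of injective homomorphisms $\graphK_k \to \graphG$, which equals $k! \cdot \#\{k\text{-cliques in }\graphG\}$. Each of the constantly many calls runs in time $n^c$, giving an overall polynomial algorithm and contradicting the clique-counting assumption.

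\emph{Main obstacle.} The technical heart of the argument is the design of $\strucB(\graphG)$: the relations must simultaneously force each branch set to collapse to a single vertex of $\graphG$, encode the clique constraint on the collapsed vertices, and contribute a multiplicative extension factor for non-branch vertices that is independent of $\graphG$. Equally delicate is the bookkeeping of ``over-collapsing'' homomorphisms that identify several branch sets, which must be disentangled by Möbius inversion on the partition lattice; verifying that exactly the right linear combination of $\sigma$-respecting counts sums to $|\Hom(\strucA,\strucB(\graphG))|$ is the main combinatorial step and the only place where the specific structure of the minor model of $\graphK_k$ inside $\strucA$ is used in an essential way.
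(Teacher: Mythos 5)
First, note that the paper does not prove this theorem itself: it is quoted from Dalmau and Jonsson, so your sketch has to be measured against the known argument from that line of work (Grohe's construction and its counting adaptation). Your upper-bound direction is fine and standard. The lower-bound direction, however, contains a genuine gap at its very first structural step: you claim that the $k'\times k'$ grid contains $\graphK_k$ as a minor for $k'=O(k^2)$, and use this to extract from $\classA$ a structure whose Gaifman graph has a $\graphK_k$ minor. This is false for every $k\geq 5$: grids are planar, minors of planar graphs are planar, and planar graphs have no $\graphK_5$ minor. The excluded grid theorem therefore only gives you a large grid minor, never a large clique minor, and unbounded treewidth of $\classA$ does not supply the $\graphK_k$ minor model $(V_1,\dots,V_k)$ on which your entire gadget $\strucB(\graphG)$ is built (branch sets pairwise joined by Gaifman edges). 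Indeed, the present paper states this obstacle explicitly in Section~5.3 (``the grid does not have a $\graphK_k$ minor for $k\geq 5$'') and has to introduce the weaker ``almost minor'' relation for its own reduction. The real technical heart of the Grohe/Dalmau--Jonsson argument is exactly what your sketch bypasses: one must simulate the $k$-clique instance inside a \emph{grid} minor model, e.g.\ by letting the branch set at grid position $(i,j)$ carry a candidate pair of clique vertices and using the grid edges to enforce row/column consistency, rather than collapsing branch sets to single clique vertices. Your ``main obstacle'' paragraph (the bookkeeping of over-collapsing via M\"obius inversion) is not where the difficulty lies; in fact, with your own condition (ii) and a simple input graph $\graphG$, adjacency across distinct branch sets already forces the images $u_1,\dots,u_k$ to be distinct, so the partition-lattice inversion is superfluous as described.

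A secondary, repairable issue: you conclude with an algorithm running in time $O(n^{c'})$ where ``$c'$ depends only on $c$ and $k$''. As stated this contradicts nothing, since for each fixed $k$ a trivial $O(n^k)$ clique counter exists; the hypothesis you must refute asserts the nonexistence of a \emph{single} exponent valid for all $k$. Your construction would in fact deliver this (the instance $\strucB(\graphG)$ has size $O_k(n^r)$ because the arity is bounded by $r$, so each oracle call costs $O_k(n^{rc})$ with an exponent independent of $k$, and $k$ enters only through constant factors and the constantly many calls), but this uniformity is precisely the point of the bounded-arity assumption and needs to be argued explicitly rather than left implicit.
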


To understand the difference between these characterisations, consider the class $\classA$ of all
structures $\strucA_k$ that are complete graphs on $k$ vertices with an additional vertex with a self-loop. The
homomorphic core of such structures is just the self-loop and finding one homomorphism
from $\strucA_k$ to $\strucB$ is equivalent to finding a self-loop in
$\strucB$. However, counting homomorphisms from $\strucA_k$ to $\strucB$ is as hard as counting
$k$-cliques: if $\strucB$ is a simple graph $\graphG$ with one additional vertex
with a self-loop, then the number of homomorphisms from $\strucA_k$ to
$\strucB$ is the number of $k$-cliques in $\graphG$ plus one.

The complexity of the corresponding enumeration problem
Enum-CSP($\classA$, $\classall$) is still open. It has been shown
that polynomial delay enumeration is possible if $\classA$ has
bounded treewidth \cite{bulatov_et_al:LIPIcs:2009:1838}. On the
other hand, polynomial delay enumeration implies solvability of the
decision problem in polynomial time (because either the first solution
or an end-of-enumeration message has to
appear after polynomial time). Hence it follows from Theorem~\ref{thm:decision-dichotomy}, under
the same complexity assumption, that there is no polynomial delay
enumeration algorithm if the cores of the structures in $\classA$ have unbounded treewidth. For further discussion on this topic we refer the reader to \cite{bulatov_et_al:LIPIcs:2009:1838}.

Our main result (Theorem~\ref{thm:maintheorem}) can be viewed as an 
unconditional dichotomy for enumeration and counting in a restricted
class of algorithms: when the algorithm relies on local
decompositions into union and product, then the tractable instances
are exactly those that have bounded treewidth. Interestingly, this
matches the conditional dichotomy for the counting case (Theorem~\ref{thm:counting-dichotomy}).

\section{Factorised Representations} \label{s: FR}

In this section we formally introduce the factorisation formats for
CSPs. These formats agree with the factorised representations of
relations introduced by Olteanu and Z{\'a}vodn{\'y} \cite{olteanu2015} in the context of evaluating
conjunctive queries on relational databases. While we stick to the
naming conventions introduced there we provide a
slightly different circuit-based definition that is very
much inspired by \cite{DBLP:conf/icalp/AmarilliBJM17} and the notion
of \emph{set circuits} introduced in
\cite{DBLP:conf/pods/AmarilliBMN19}. 

A \textit{factorisation circuit} $C$ for two sets $A$ and $B$ is an acyclic directed graph with node
labels and a unique sink. Each node without incoming edges is called
an \textit{input gate} and
labelled by $\{a\mapsto b\}$ for some $a\in A$ and $b\in B$. Every
other node is labelled by either $\cup$ or $\times$ and called a
\textit{$\cup$-gate} or \textit{$\times$-gate}, respectively.
For each gate $g$ in the circuit we inductively define its \textit{domain}
$\dom(g)\subseteq A$ by $\dom(g)=\{a\}$ if $g$ is an input gate with
label $\{a\mapsto b\}$ and $\dom(g)=\bigcup^r_{i=1}\dom(g_i)$ if $g$ is a
non-input gate with child gates $g_1,\ldots,g_r$.

A factorisation circuit is \textit{well-defined} if for every gate $g$
with child gates $g_1,\ldots,g_r$ it holds that
$\dom(g)=\dom(g_1)=\cdots=\dom(g_r)$ if $g$ is a $\cup$-gate and
$\dom(g_i)\cap\dom(g_j)=\emptyset$ for all $i\neq j$ if $g$ is a
$\times$-gate.
For every gate $g$ in a well-defined factorisation circuit we let $S_g$ be a set
of mappings $h\colon \dom(g)\to B$ defined by
\begin{equation}
  \label{eq:1}
  S_g :=
  \begin{cases}
    \big\{\{a\mapsto b\}\big\}&\text{if $g$ is an input labelled by
      $\{a\mapsto b\}$}\\
    S_{g_1} \cup \cdots \cup S_{g_r}&\text{if $g$ is a $\cup$-gate with children $g_1,\ldots,g_r$,}\\
    \big\{\,h_1\cup\cdots \cup h_r\,\mid\,h_i\in S_{g_i},\, i\in [r]\,\big\}&\text{if $g$ is a $\times$-gate with children $g_1,\ldots,g_r$.}\\
  \end{cases}
\end{equation}

We define $S_C := S_s$ for the sink $s$ of $C$. For each
gate $g$ we let $C_g$ be the sub-circuit with sink $g$. By
$\|C\|$ we denote the size of a factorisation
circuit $C$, which is defined to be the number of gates plus the
number of wires. The number of gates in $C$ is denoted by $|C|$.

Before defining factorised representations for CSP-instances, we
introduce two special types of circuits. A
factorisation circuit is \textit{treelike} if the underlying graph is
a tree, i.\,e., every non-sink gate has exactly one parent.
Moreover, a well-defined factorisation circuit is \textit{deterministic} if for
every $\cup$-gate $g$ the set $S_g$ is a \textit{disjoint} union of its child
sets $S_{g_1},\ldots,S_{g_r}$. Note that while treelikeness is a
syntactic property of the circuit structure, being deterministic is a
semantic property that depends on the valuations of the gates.
Now we are ready to state a circuit-based definition of the
factorised representations defined in \cite{olteanu2015}.

\begin{definition}
Let $\mathcal A$ and $\mathcal B$ be two $\sigma$-structures.
\begin{enumerate}
\item 
A (deterministic) \emph{d-representation} for $\mathcal A$ and $\mathcal B$
is a well-defined (deterministic) factorisation circuit over
$V(\mathcal A)$ and $V(\mathcal B)$ where $S_C=\operatorname{Hom(\mathcal A, \mathcal
  B)}$.
\item A (deterministic) \emph{f-representation} is a (deterministic) d-representation with the additional
restriction that the circuit is treelike.  
\end{enumerate}
\end{definition}

For brevity we will sometimes refer to d/f-representations as d/f-reps. Note that a d-rep can be more succinct than a
f-rep and we will mostly deal with d-reps in
this paper. However, in the proofs it will sometimes be convenient to
expand out the circuit in order to make it treelike. More formally,
the \emph{transversal} $\Trans(C)$ of a d-rep $C$ is the
f-rep obtained from $C$ as follows: using a top-down
transversal starting at the output gate, we replace each gate $g$ with
parents $p_1,\ldots,p_d$ by $d$ copies $g_1,\dots,g_d$ such that the
in-edges of each $g_i$ are exactly the children of $g$ and $g_i$ has
exactly one out-edge going to $p_i$. This procedure produces a
treelike circuit that is well-defined/deterministic if $C$ was
well-defined/deterministic. Finally it can easily be verified
that $S_{\Trans(C)} = S_C$.

We will often want to construct new factorised circuits from old
ones. The following  lemma introduces two constructions that will be
particularly useful, the proof of correctness can be found
\confORfull{in the full version of this paper.}{in Appendix~\ref{ap:
    technical}.}{in Appendix~\ref{ap: technical}.} 

\begin{lemma} \label{lem:technical}
Let $\mathcal{A}, \mathcal{B}$ be $\sigma$-structures and $C$ be a d-rep of $\Hom(\mathcal{A}, \mathcal{B})$. Let $X=\{x_1,\dots,x_{\ell}\} \subseteq A$, $Y_1, \dots, Y_{\ell} \subseteq B$, $\ell \ge 1$. Then one can construct the following factorised circuits in time $O(\|C\|)$.
\begin{enumerate}
\item $C'$, such that $S_{C'}= \pi_{X}\Hom(\mathcal{A},\mathcal{B})$ and $\|C'\| \le \|C\|$.
\item $C''$, such that $S_{C''} = \{ h \in \Hom(\mathcal{A},\mathcal{B}) \, \mid \, h(x_i) \in Y_i, i \in [\ell] \}$ and $\|C''\| \le \|C\|$. 
\end{enumerate}
\end{lemma}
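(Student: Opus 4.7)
The plan for both parts is the same: process the gates of $C$ bottom-up and, at each gate $g$, produce a gate $g'$ (respectively $g''$) whose semantics is the desired projection (respectively selection) of $S_g$. In neither construction do we add new gates or wires, only delete or relabel, so the size bound $\|C'\|,\|C''\|\le\|C\|$ is automatic and the running time is the obvious $O(\|C\|)$ of one pass.

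For part 1, given $X\subseteq A$, I would aim to satisfy $S_{g'}=\pi_{\dom(g)\cap X}S_g$ by structural induction. Call a gate \emph{vanishing} if $\dom(g)\cap X=\emptyset$, i.e.\ $\pi_{\dom(g)\cap X}S_g=\{\varnothing\}$; such a gate will be deleted from $C'$ rather than represented. The rules are: (i) an input $\{a\mapsto b\}$ with $a\in X$ is copied unchanged, and with $a\notin X$ is marked vanishing; (ii) a $\cup$-gate $g$ with children $g_1,\dots,g_r$ becomes a $\cup$-gate over the corresponding $g'_1,\dots,g'_r$, and since well-definedness forces $\dom(g_1)=\dots=\dom(g_r)=\dom(g)$, either all $g'_i$ are vanishing (in which case $g'$ is vanishing) or none are; (iii) a $\times$-gate drops its vanishing children and keeps the non-vanishing ones (it becomes vanishing if all children vanish). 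Well-definedness of $C'$ is preserved because domains only shrink: child domains of a $\times$-gate remain pairwise disjoint, and after deletion the surviving child domains still partition $\dom(g)\cap X$; child domains of a $\cup$-gate all equal $\dom(g)\cap X$. Correctness, i.e.\ $S_{g'}=\pi_{\dom(g)\cap X}S_g$, is then a routine induction using that projection distributes over union and commutes with the disjoint-union product (where a vanishing factor contributes the neutral element $\varnothing$).

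For part 2, the construction is simpler because domains never change. I would maintain the invariant $S_{g''}=\{h\in S_g\mid h(x_i)\in Y_i\text{ for all }x_i\in\dom(g)\cap X\}$. Call a gate \emph{null} if $S_{g''}=\emptyset$. The rules are: (i) an input $\{a\mapsto b\}$ is kept if either $a\notin X$, or $a=x_i$ and $b\in Y_i$; otherwise it is null; (ii) a $\cup$-gate drops its null children (and becomes null only if all children are null); (iii) a $\times$-gate becomes null if any child is null, otherwise it is kept as is. Well-definedness is inherited directly from $C$ since we neither touch domains nor introduce new children. If the sink itself becomes null we output any convenient empty-set gadget (e.g.\ a $\cup$-gate with no children or discard $C''$ entirely); this edge case is the only minor subtlety. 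Correctness follows by induction from the fact that the selection predicate is conjunctive across the disjoint parts of a $\times$-gate and preserved by unions. As in part 1, no new gates or wires are introduced, giving $\|C''\|\le\|C\|$ and $O(\|C\|)$ construction time. The only ``obstacle'' in either case is the careful bookkeeping for vanishing / null gates, but since the quantifiers forcing all children of a $\cup$-gate to behave uniformly follow directly from well-definedness of $C$, this is more a matter of stating the induction cleanly than a technical difficulty.
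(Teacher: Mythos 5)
Your constructions are exactly the ones in the paper: delete the input gates that are not over $X$ (resp.\ that violate the $Y_i$-constraints) and propagate deletions bottom-up, dropping vanished factors at $\times$-gates and keeping $\cup$-gates unless all children die (resp.\ killing a $\times$-gate as soon as one child dies), so the size and time bounds are immediate. The only difference is presentational -- you argue correctness via the clean gate-level invariants $S_{g'}=\pi_{\dom(g)\cap X}S_g$ and $S_{g''}=\{h\in S_g \mid h(x_i)\in Y_i \text{ for } x_i\in\dom(g)\cap X\}$, while the paper uses a somewhat more laborious path-following induction -- so this is essentially the paper's proof and is correct.
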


A special f-rep is the \emph{flat representation}:
a depth-2 circuit with a single $\cup$-gate at the top followed by a
layer of $\times$-gates. Note that for any pair $\mathcal A$,
$\mathcal B$, of $\sigma$-structures the flat representation has size $1+|\!\Hom(\mathcal A,\mathcal B)|\cdot (2|A|+2)$. Intuitively, this representation corresponds to listing all homomorphisms and provides a trivial upper bound on representation size.

Deterministic d-representations have two desirable properties:  
they allow us to compute $|\!\Hom(\strucA,\strucB)|$ in time $O(\|C\|)$ and
to enumerate all homomorphisms with $O(|A|)$
delay after $O(\|C\|)$ preprocessing. Efficient counting is possible by computing bottom-up the
number $|S_g|$ for each gate using multiplication on every
$\times$-gate and summation on every (deterministic) $\cup$-gate. If, additionally, $C$ is normal---i.\,e., no parent
  of a $\cup$-gate is a $\cup$-gate and the in-degree of every
  $\cup$- and $\times$-gate is at least 2---Olteanu and Z{\'a}vodn{\'y} \cite[Theorem 4.11]{olteanu2015} show 
enumeration with $O(|A|)$ delay and \emph{no} preprocessing is possible by sequentially enumerating the sets $S_{g_i}$ of every child of a (deterministic) $\cup$-gate and by a nested loop to generate all combinations of child elements at $\times$-gates. The case where $C$ is not normal is shown in \cite[Theorem~7.5]{DBLP:conf/icalp/AmarilliBJM17} and is more involved. Note
that the delay is optimal in the sense that every homomorphism that is
enumerated is of size $O(|A|)$.

In the other direction this means that constructing a deterministic
d-rep
is at least as hard as counting the number of homomorphisms. Our main
theorem implies that, modulo the same assumptions as Theorem~\ref{thm:counting-dichotomy},
 the opposite is also true: for a class $\classA$ of
structures of bounded arity there is a polynomial time algorithm that constructs a
d-representation of polynomial size for two given structures $\mathcal
A\in\classA$ and $\mathcal B$ if and only if, there is a
polynomial-time algorithm that counts the number of homomorphisms
between $\mathcal
A\in\classA$ and $\mathcal B$.

\subparagraph*{Upper bounds on representation size} %
We have already argued that there is always a flat representation of
size $O(|A|\cdot |\!\Hom(\mathcal A, \mathcal B)|)$. Thus, as a
corollary of \cite{Atserias:2013} we get an upper bound of $O(|A|\cdot \|\mathcal
B\|^{\rho^\ast(\mathcal A)})$, where $\rho^\ast(\mathcal A)$ is the
\emph{fractional edge cover number of $\strucA$}. Note that, however, the fractional edge
cover number for structures of
bounded arity is quite large. More precisely, if all relations in
$\mathcal A$ have arity at most $r$, then $\rho^\ast(\mathcal A) \geq
\frac1r|A|$. 

Luckily in many cases we can do better: the results by Olteanu and
Z{\'a}vodn{\'y} in \cite{olteanu2015} imply that given a
tree-decomposition of $\mathcal{A}$ of width $w-1$ we can construct a
d-rep of $\Hom(\mathcal{A},\mathcal{B})$ of size
$O(\|\mathcal{A}\|^2 \|\mathcal{B}\|^{w})$ in time
$O(\operatorname{poly}(\|\mathcal{A}\|) \|\mathcal{B}\|^{w}
\log(\|\mathcal{B}\|))$. Moreover the d-reps produced are normal and
deterministic, meaning they allow us to perform efficient enumeration
and counting. Therefore if $\classA$ is a class of bounded treewidth 
this gives us one method for solving \#CSP($\classA$,$\classall$) in
polynomial time. In fact, the same holds true if $w$ is the more general
\emph{fractional hypertreewidth}, although for the case of bounded
arity structures the two measure differ only by a constant. We
discuss the unbounded arity case in the conclusion \confORfull{and, in more detail, in the full version of this paper.}{and Appendix~\ref{sec: extendedUB}.}{and Appendix~\ref{sec: extendedUB}.}

\section{A near-optimal bound for cliques} \label{s: clique}

The goal of this section is to prove the following two theorems:

\begin{theorem} \label{thm: f-clique}

For any $k \in \mathbb{N}$ there exist arbitrary large graphs $\graphG$
with $m$ edges
such that any f-rep of $\Hom(\graphK_k,\graphG)$ has size
$ \Omega ( m^{k/2} / \log^{k}(m) ) $. 
  
\end{theorem}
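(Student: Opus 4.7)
The plan is to construct a hard graph using a random graph and prove the lower bound by structural induction on f-reps, exploiting that any product decomposition of a subset of $\Hom(\graphK_d, \graphG)$ forces the corresponding vertex sets to form a biclique in $\graphG$.

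\textbf{The hard instance.} I take $\graphG$ to be an instance of the random graph $G(n, 1/2)$ for $n$ sufficiently large. Standard concentration yields $m = |E(\graphG)| = \Theta(n^2)$ and $|\Hom(\graphK_k, \graphG)| = \Theta(n^k)$ for fixed $k$. Set $\rho := \lceil 3 \log_2 n \rceil$. A union bound gives $\binom{n}{\rho}^2 2^{-\rho^2} = o(1)$, so with positive probability $\graphG$ contains no $\graphK_{\rho, \rho}$; equivalently, for any pair of disjoint sets $A, B \subseteq V(\graphG)$ that form a biclique in $\graphG$, $\min(|A|, |B|) < \rho$.

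\textbf{Key inductive lemma.} I would prove by induction on $|C|$ the statement: for every $d \geq 1$, every $S \subseteq \Hom(\graphK_d, \graphG)$, and every f-rep $C$ of $S$, one has $|C| \geq |S|/\rho^d$. The base case of a single input gate is immediate. If the root of $C$ is a $\cup$-gate with children $g_1, \dots, g_r$, then $|C| \geq \sum_i |C_{g_i}|$, and combining the inductive hypothesis with $|S| \leq \sum_i |S_{g_i}|$ closes this case. The interesting case is a $\times$-gate root whose children $g_1, \dots, g_r$ have domains partitioning $[d]$ into $D_1, \dots, D_r$. Define the vertex zones $V_i := \bigcup_{h \in S_{g_i}} h(D_i)$. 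Since every combination $\bigcup_i h_i$ with $h_i \in S_{g_i}$ lies in $\Hom(\graphK_d, \graphG)$, and is therefore injective with every cross-part pair mapping to an edge of $\graphG$, the $V_i$ are pairwise disjoint and pairwise form bicliques in $\graphG$. The biclique property of $\graphG$ then forces at most one $V_{i_0}$ to have $|V_{i_0}| \geq \rho$; for all $j \neq i_0$, $|S_{g_j}| \leq |V_j|^{|D_j|} \leq \rho^{|D_j|}$. Applying the inductive hypothesis to $C_{g_{i_0}}$ yields
\[ |C| \;\geq\; |C_{g_{i_0}}| \;\geq\; |S_{g_{i_0}}|/\rho^{|D_{i_0}|} \;\geq\; \prod_i |S_{g_i}|/\rho^d \;=\; |S|/\rho^d, \]
using $|S_{g_j}|/\rho^{|D_j|} \leq 1$ for $j \neq i_0$ to absorb the remaining factors.

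\textbf{Conclusion and main obstacle.} Applied to $S = \Hom(\graphK_k, \graphG)$, the lemma gives $\|C\| \geq |C| \geq |\Hom(\graphK_k, \graphG)|/\rho^k = \Omega(n^k/\log^k n) = \Omega(m^{k/2}/\log^k m)$. The main technical point to get right is the vertex-zone argument: forming the combination $\bigcup_i h_i$ (which needs each $S_{g_j}$ non-empty; otherwise $S = \emptyset$ and the bound is vacuous), and verifying that the biclique bound for $G(n,1/2)$ is robust enough to apply to every $\times$-gate of every f-rep simultaneously. The random graph analysis is routine; the conceptual heart of the argument is that the biclique constraint forces every product decomposition to concentrate almost entirely on a single ``large'' factor, enabling the clean multiplicative bound via induction.
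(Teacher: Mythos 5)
Your proposal is correct and follows essentially the same route as the paper: a random graph $G(n,1/2)$ chosen to have $\Theta(n^2)$ edges, $\Theta(n^k)$ $k$-cliques, and no $K_{a,a}$ with $a\geq 3\log n$, followed by a bottom-up induction showing $|S_g|\le |C_g|\cdot\rho^{|\dom(g)|}$, where the key step is that biclique-freeness prevents a $\times$-gate from having two ``big'' children. The only cosmetic difference is that you measure bigness by the whole image set $V_i$ while the paper uses the per-variable projections $\max_{x\in\dom(g)}|\pi_{\{x\}}S_g|$, which changes nothing in the resulting bound.
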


\begin{theorem} \label{thm: d-clique}

  For any $k \in \mathbb{N}$ there exist arbitrary large graphs $\graphG$
  with $m$ edges
  such that any d-rep of $\Hom(\graphK_k,\graphG)$ has size
   $ \Omega ( m^{k/2} / \log^{3k-1}(m) ) $.

\end{theorem}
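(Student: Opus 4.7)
The plan is to follow the strategy of Theorem~\ref{thm: f-clique} but upgrade it from trees to DAGs, paying an extra polylogarithmic factor to account for gate sharing. I would reuse the random graph used in the f-rep bound: $\graphG \sim G(n,p)$ for a suitable edge probability $p$, so that with high probability $m := |E(\graphG)| = \Theta(n^2 p)$ and $|\Hom(\graphK_k,\graphG)| = \tilde\Omega(m^{k/2})$ (via standard Chernoff and Janson-type concentration). The argument then has two ingredients: a combinatorial decomposition of any d-rep into rectangles over balanced variable partitions, and a probabilistic upper bound on each such rectangle.

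The first, and hardest, ingredient is the following \emph{balanced rectangle cover lemma}: for any d-rep $C$ of $\Hom(\graphK_k,\graphG)$ of size $s$, there exists a collection of at most $N \le s \cdot \operatorname{polylog}(s)$ rectangles
\[
R_i = A_i \times B_i \subseteq V(\graphG)^{X_i} \times V(\graphG)^{Y_i},\qquad [k]=X_i \sqcup Y_i,\ \lceil k/3\rceil \le |X_i| \le \lceil 2k/3\rceil,
\]
whose union contains $\Hom(\graphK_k,\graphG)$. The construction descends through $C$: at each $\cup$-gate we branch into a child whose set contains the current homomorphism, at each $\times$-gate we split according to the child-domain partition, and for every homomorphism we identify a ``balanced cut'' $\times$-gate at which the partition of $\dom(g)$ becomes balanced (such a gate must exist because partitions refine from $[k]$ at the root down to singletons). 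Each such balanced cut is a gate of $C$ itself, not of the potentially exponentially larger transversal $\Trans(C)$, which is what keeps $N$ polynomial in $s$; the $\operatorname{polylog}(s)$ overhead arises from counting the number of balanced partitions that can be produced from a single gate along different descent paths.

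The second ingredient is a size bound on individual rectangles. For any balanced partition $(X,Y)$, if $A \times B \subseteq \Hom(\graphK_k, \graphG)$, then for every pair $(f,g) \in A \times B$ all $|X||Y|$ cross edges between $f(X)$ and $g(Y)$ must belong to $\graphG$; equivalently, the clique-images in $A$ and $B$ form a bipartite pattern that is fully connected in $\graphG$. A union bound over vertex subsets in $G(n,p)$, using the chosen $p$, shows that with high probability every such rectangle satisfies $|A| \cdot |B| \le \operatorname{polylog}(m)$. Combining the two ingredients,
\[
|\Hom(\graphK_k,\graphG)| \;\le\; \sum_{i=1}^{N} |A_i|\cdot|B_i| \;\le\; s \cdot \operatorname{polylog}(s) \cdot \operatorname{polylog}(m),
\]
and rearranging with the high-probability bound $|\Hom(\graphK_k,\graphG)| = \tilde\Omega(m^{k/2})$ yields $s = \Omega(m^{k/2}/\log^{3k-1}(m))$ after carefully tracking the $k$-dependence of the polylog factors.

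The main obstacle is constructing the balanced rectangle cover of size $s \cdot \operatorname{polylog}(s)$ from a d-rep: the naive idea of passing to the transversal $\Trans(C)$ fails because $\|\Trans(C)\|$ can be exponential in $\|C\|$, so one must identify balanced cut gates within $C$ itself and amortise the cover across the reuse of each shared gate. This sharing-aware bookkeeping is the sole source of the extra $\log^{2k-1}(m)$ gap between Theorem~\ref{thm: f-clique} and Theorem~\ref{thm: d-clique}; the random-graph concentration and rectangle-size bounds are essentially identical to the f-rep case.
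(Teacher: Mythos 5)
Your first ingredient is essentially sound and standard: after binarising $\times$-gates, a d-rep $C$ of $\Hom(\graphK_k,\graphG)$ yields a cover of $\Hom(\graphK_k,\graphG)$ by at most $O(\|C\|)$ rectangles $A_i\times B_i\subseteq \Hom(\graphK_k,\graphG)$ over balanced partitions of $[k]$ (the exchange argument at the cut gate gives the rectangle property, and one works with gates of $C$ itself, so no transversal blow-up and in fact no $\operatorname{polylog}(s)$ overhead is needed). The genuine gap is the second ingredient. The claim that with high probability \emph{every} balanced rectangle contained in $\Hom(\graphK_k,\graphG)$ has $|A_i|\cdot|B_i|\le \operatorname{polylog}(m)$ is false for any random graph dense enough for your first step: to get $|\Hom(\graphK_k,\graphG)|=\tilde\Omega(m^{k/2})$ you must take edge probability $p\ge 1/\operatorname{polylog}(n)$, and then, fixing a single partial homomorphism on $X_i$ (so $|A_i|=1$), the common neighbourhood of its image has size about $np^{|X_i|}=n/\operatorname{polylog}(n)$ and contains $n^{\Theta(k)}$ cliques on $|Y_i|$ vertices, each of which is a legitimate element of $B_i$. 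What the absence of large bicliques (property~(2) of Lemma~\ref{thm: G}) actually gives is only that \emph{one} side of each rectangle uses at most $3\log n$ vertices of $\graphG$, hence $\min(|A_i|,|B_i|)\le (3\log n)^{k}$, while the other side can be polynomially large, up to about $n^{|Y_i|}$. Feeding this into your final inequality certifies only $s=\tilde\Omega(n^{k/3})$, and even with perfectly balanced cuts at most $\tilde\Omega(n^{k/2})=\tilde\Omega(m^{k/4})$, far short of $\Omega(m^{k/2}/\log^{3k-1}(m))$: a single balanced cut inevitably discards the entire large side of the partition.

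This is precisely why the paper does not use a one-cut rectangle argument. Instead it converts the d-rep into an f-rep with only a $\operatorname{polylog}$ blow-up: shared \emph{small} gates are replaced by their flat representations (Claim~\ref{claim: small}), and for shared \emph{big} gates one shows via the same $\graphK_{a,a}$-freeness that each has at most $3^{k-1}\log^{k-1}(n)$ \emph{active} parents, all other parent wires being redundant (Claim~\ref{claim: big}). Then Theorem~\ref{thm: f-clique} finishes by a bottom-up induction proving $|S_g|\le |C_g|\cdot(3\log n)^{|\dom(g)|}$, i.e.\ every variable is charged only a $3\log n$ factor while the unique big child at each $\times$-gate is charged to circuit size recursively; this distributed, per-gate accounting is what reaches $m^{k/2}$. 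To rescue your plan you would need a multi-cut or recursive refinement of the rectangle cover that recovers exactly this accounting, at which point you are back to the paper's argument; as stated, the probabilistic rectangle-size bound fails and the approach cannot prove the theorem.
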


These bounds are almost tight since the number of
$k$-cliques in a graph with $m$ edges is bounded by
$m^{k/2}$. Moreover Theorem~\ref{thm: d-clique} is a crucial
ingredient for proving our main theorem in Section~\ref{s: dichotomy}.  We
will first prove Theorem~\ref{thm: f-clique} and then show how this implies the bound for d-reps.

The main idea is to exploit a correspondence between the structure of a (simple) graph $\graphG$ and
f-reps of $\Hom(\graphK_k,\graphG)$. To illustrate this
consider the case $k=2$, where $V(\graphK_2) =\{x_1, x_2\}$ and each
 $h \in \Hom(\graphK_2,\graphG)$ corresponds to an edge of $\graphG$. Let $C$ be a  f-rep
of $\Hom(\graphK_2,\graphG)$, with $\times$-gates $g_1, \dots,
g_{\alpha}$. Each
$g_i$ has two children $g^1_i$, $g^2_i$ with $\dom(g^1_i) = x_1$ and
$\dom(g^2_i)=x_2$. Since no $\times$-gates can occur in $C_{g^1_i}$
or $C_{g^2_i}$, $S^1_{g_i} =  \{  \{x_1 \mapsto a\}
\mid a \in A_i \}$
and $S_{g^2_i} =  \{  \{x_2 \mapsto b\} \mid b \in B_i \}$ for some disjoint $A_i,
B_i \subseteq V(\graphG)$. Therefore $A_i \times B_i$ is a complete bipartite
subgraph of $\graphG$. Since the ancestors of each $\times$-gate can only be
$\cup$-gates, each f-rep of $\Hom(\graphK_2,\graphG)$ corresponds to a set of complete bipartite subgraphs that cover every
edge of $\graphG$. Finding such sets and investigating their properties has
been studied in various contexts, for example see \cite{Chung1983,
  fleischner2009, jukna2009, mubayi2009}.

Moreover, the number of input gates appearing in $C$ is 
$\sum_{i=1}^{\alpha}|A_i|+|B_i|$ and so finding a f-rep of
$\Hom(\graphK_2,\graphG)$ of minimum size corresponds to minimising the sum of the sizes of
the partitions in our complete bipartite covering of $\graphG$, call this
the cost of the covering. Proving Theorem~\ref{thm: f-clique} for the
case $k=2$,
 corresponds to finding graphs where every
covering of the edges by complete bipartite subgraphs has high
cost. This is a problem investigated by Chung et al. in
\cite{Chung1983}, where one key idea is that if a graph contains no
large complete bipartite subgraphs and a large number of edges then the cost of any cover must be high. We deploy this
idea in our more general context. This motivates the following 
lemma, which follows from a simple probabilistic
argument. 

\begin{lemma} \label{thm: G}
  For every $k \in \mathbb{N}$ there exists some $c_k \in \mathbb{R}^{+}$ such that for every sufficiently large integer $n$ there is a
  graph $\graphG$ with $n$ vertices, such that
  \begin{enumerate}
  \item $\graphG$ has $m \ge \frac18 n^2$ edges, \label{label:manyedges}
  \item $\graphG$ contains no complete bipartite
      subgraph $\graphK_{a,a}$ for $a \geq3\log(n)$, and \label{label:nobipart}
  \item the number of $k$-cliques in $\graphG$ is at least $c_kn^{k}$. \label{label:manycliques}
    \end{enumerate}
  \end{lemma}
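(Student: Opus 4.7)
The natural plan is to apply the probabilistic method to the Erdős–Rényi random graph $G = G(n, 1/2)$ and show that all three properties hold simultaneously with positive probability, using a union bound over the three ``bad'' events.

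\smallskip
\noindent\textbf{Edge count.} The expected number of edges is $\binom{n}{2}/2$, which is at least $n^2/5$ for large $n$. Since the edge count is a sum of $\binom{n}{2}$ independent Bernoulli variables, a straightforward Chernoff bound shows that $\Pr[\,|E(G)| < n^2/8\,] = o(1)$.

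\smallskip
\noindent\textbf{Absence of large bipartite cliques.} Because any $K_{a,a}$ with $a \geq 3\log n$ contains a $K_{a_0, a_0}$ with $a_0 = \lceil 3\log n\rceil$, it suffices to bound the probability that $G$ contains a $K_{a_0, a_0}$. The expected number of such subgraphs is at most
\[
\binom{n}{a_0}^{\!2}\, 2^{-a_0^{2}} \;\leq\; \left(\frac{en}{a_0}\right)^{\!2a_0}\! 2^{-a_0^{2}}.
\]
Taking $\log_2$, the exponent is $2a_0 \log_2(en/a_0) - a_0^{2}$, which for $a_0 \sim 3 \log_2 n$ behaves like $6\log_2^{2} n - 9\log_2^{2} n = -3 \log_2^2 n \to -\infty$. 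Hence the expectation tends to $0$, and by Markov's inequality the bad event has probability $o(1)$.

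\smallskip
\noindent\textbf{Many $k$-cliques.} Let $X$ be the number of $k$-cliques; then $\mathbb{E}[X] = \binom{n}{k} 2^{-\binom{k}{2}} = \Theta(n^k)$. For concentration I would use the standard second moment calculation: splitting $\mathrm{Var}(X) = \sum_{S,T} \mathrm{Cov}(X_S, X_T)$ according to $i = |S\cap T|$, the covariance vanishes for $i \leq 1$ (disjoint edge sets) and for $i \geq 2$ contributes at most $\binom{n}{k}\binom{k}{i}\binom{n-k}{k-i}\cdot 2^{-2\binom{k}{2} + \binom{i}{2}}$. The dominant term is $i=2$, giving $\mathrm{Var}(X) = O(n^{2k-2})$, so $\mathrm{Var}(X) / \mathbb{E}[X]^2 = O(n^{-2}) \to 0$. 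By Chebyshev, $\Pr[X < \tfrac12 \mathbb{E}[X]] = o(1)$, and we may set $c_k := \frac{1}{2 \cdot k! \cdot 2^{\binom{k}{2}}}$.

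\smallskip
\noindent\textbf{Conclusion and main obstacle.} A union bound over the three events of probability $o(1)$ shows that for sufficiently large $n$ a graph satisfying all three conditions exists. The only step requiring genuine care is the second-moment estimate for the clique count, since the other two properties follow from essentially one-line expectation calculations; even there, the computation is routine once one observes that only intersections of size $\geq 2$ contribute and that the $i=2$ term dominates.
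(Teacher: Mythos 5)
Your proof is correct, and for properties (1) and (2) it is essentially the paper's argument: take $G(n,1/2)$, apply a Chernoff bound to the edge count, and a first-moment/union bound of exactly the same shape ($\binom{n}{a}^2 2^{-a^2}$) to rule out large complete bipartite subgraphs. The genuine difference is in property (3). You prove concentration of the $k$-clique count via the second moment method (covariance split by intersection size, dominant term $i=2$, then Chebyshev), obtaining failure probability $o(1)$, so that a three-way union bound finishes the proof. The paper avoids any variance computation: it only notes that the clique count is at most $\binom{n}{k}$, so from $\mathbb{E}[X]=\binom{n}{k}2^{-\binom{k}{2}}$ a reverse-Markov/averaging argument gives that $X\geq \binom{n}{k}2^{-s}$ (with $s=\binom{k}{2}+1$) holds with probability at least $1/(2^{s}-1)$ --- a constant bounded away from $0$ --- which suffices because the other two bad events have probability $o(1)$. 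So your route buys a stronger ($1-o(1)$) statement at the cost of the covariance estimate, while the paper's route is more elementary but yields only constant positive probability, which is all the lemma needs. One cosmetic nit: since $\binom{n}{k}<n^{k}/k!$, your choice $c_k=\tfrac{1}{2\,k!\,2^{\binom{k}{2}}}$ is off by a vanishing factor; taking, say, $c_k=\tfrac{1}{4\,k!\,2^{\binom{k}{2}}}$ (or invoking ``sufficiently large $n$'' with $\binom{n}{k}\geq n^k/(2k!)$) fixes this without affecting anything else.
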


\begin{proof}
We first prove the following claim.

\begin{claim} \label{thm: bipart}
Let $\graphG_n$ be a random graph on $n$ vertices with edge probability
$\frac{1}{2}$. Let $\epsilon >0$. Then for any $a = a(n) \geq (2+\epsilon)\log(n)$,
\[
 P_a:= \mathbb{P}(\graphG_n \textrm{ has } \graphK_{a,a} \textrm{ as a subgraph}) \to
   0 \textrm{ as } n \to \infty.
 \]
  
\end{claim}

\begin{proof}[Proof of Claim] By the union bound and the bound on $a$ we get 
  \begin{equation*}
    \label{eq:2}  
    P_a  \le  \binom{n}{a}^2 2^{-a^2} 
    \le n^{2a}2^{-a^2}  = 2^{2a\log(n)-a^2} \leq
    2^{-(\epsilon^2+2\epsilon)\log^2n}.
  \end{equation*} 

\end{proof}
    Now let $\graphG_n$ be as above, $s = s(k) := \binom{k}{2}+1$ and 
     $p$ be the probability that such a graph has at
    least $\binom{n}{k}2^{-s}
    $ $k$-cliques. The expected number of
    $k$-cliques in $\graphG_n$ is $\binom{n}{k}2^{-\binom{k}{2}}$. Therefore,
  \begin{equation*}
      \label{eq:3}
   \binom{n}{k}2^{-\binom{k}{2}} \le \binom{n}{k}2^{-s} (1-p) + \binom{n}{k}p
    \end{equation*}
and so $p \ge 1/(2^{s}-1)$. Moreover, by the Chernoff bound, (1) from the statement of the Lemma fails only
  with exponentially small probability. By Claim~\ref{thm: bipart} 
  there must exist a $\graphG$ satisfying
(1), (2), and (3) for sufficiently large $n$.
\end{proof} 
\noindent
  Equipped with Lemma~{\ref{thm: G}} we are already in a position to prove  Theorem~\ref{thm: f-clique}.

\begin{proof}[Proof of Theorem~\ref{thm: f-clique}]

Let $\graphG$ be an $n$-vertex graph provided by Lemma~\ref{thm: G} and suppose that
$C$ is a f-rep for $\graphK_k$ and $\graphG$. If $\max_{x\in\dom(g)}|\{a\mid h(x)=a, h\in S_g\}| \le 3\log(n)$ for a gate $g$ we say that $g$ is \emph{small}. Otherwise we say $g$ is \emph{big}. Note that a $\times$-gate cannot have two big children
$g_1$ and $g_2$ because otherwise there would be $x_1\in\dom(g_1)$
and $x_2\in\dom(g_2)$ such that 
\[
\{a\mid h(x_1)=a, h\in S_{g_1}\} \times \{a\mid h(x_2)=a, h\in S_{g_2}\}
\]
forms a complete bipartite subgraph
with partitions bigger than $3\log n$ in $\graphG$, contradicting \eqref{label:nobipart} from Lemma~\ref{thm: G}.
 
If $g$ is 
small, then $C_g$ represents $|S_g|\leq 3^{|\!\dom(g)|}\log^{|\!\dom(g)|}(n)$
homomorphisms. We claim that for any gate $g$ of $C$, $|S_g| \le
|C_g| \cdot 3^{|\!\dom(g)|}\log^{|\!\dom(g)|}(n)$. Clearly this holds for input gates. We can
therefore induct bottom up on $C$. Suppose our claim holds for all children $g_1, \dots, g_r$ of
some gate $g$. 
If $g$ is a $\times$-gate then we know
at most one of the $g_i$ is big, say $g_1$. Define $b:=\sum_{i=2}^r |\!\dom(g_i)|$. Then,
\[ |S_g| = \prod_{i=1}^{r} |S_{g_i}| \le
  |C_{g_1}|\cdot 3^{|\!\dom(g_1)|}\log^{|\!\dom(g_1)|}(n) \cdot
  3^b\log^b(n) \le |C_g|\cdot 3^{|\!\dom(g)|} \log^{|\!\dom(g)|}(n),
\]
 The $\cup$-gate case follows
immediately from the induction hypothesis because the circuit is treelike so if $g$ has children $g_1, \dots, g_r$ then $|C_g| = 1+ \sum_{i=1}^r |C_{g_i}|$.  

From the claim we infer in particular that $|\!\Hom(\graphK_k,\graphG)| = |S_s| \le
|C| \cdot 3^k\log^k(n)$ for the sink $s$ of $C$. By
\eqref{label:manycliques} from Lemma~\ref{thm:
  G} it follows that
$|C| \ge c_kn^k /(3^k\log^k(n)) $ which, combined with \eqref{label:manyedges} from Lemma~\ref{thm: G}, implies the claimed result.         
 \end{proof}

  We now transfer this bound to d-reps, by showing that, for the same
  graphs used above, any d-rep cannot be much smaller than the smallest
  f-rep. 

\begin{proof}[Proof of Theorem \ref{thm: d-clique}]

  Let $\graphG$ be an $n$-vertex graph provided by Lemma~\ref{thm: G} as above and $C$ a d-rep of
  $\Hom(\graphK_k,\graphG)$ with sink $s$. If a gate has
  out-degree of more than one we call it a \emph{definition}. As in the proof of Theorem~\ref{thm: f-clique}, if $\max_{x\in\dom(g)}|\{a\mid h(x)=a, h\in S_g\}| \le 3\log(n)$ for a gate $g$ we say that $g$ is \emph{small}. Otherwise we say $g$ is \emph{big}.

Our strategy is to convert $C$ into an equivalent f-rep that is not much bigger than $C$. 
For ease of analysis and exposition we will do this by first eliminating all small 
definitions and then all big definitions. First if $s$ is small
replace the whole circuit with its equivalent flat
representation. Otherwise, we mark all small gates $g$ that have a big
parent and compute the equivalent flat representation $F_g$ of
$C_g$. Since all umarked small gates are descendant of some marked
gate, we can now safely delete them. Afterwards we consider every wire
between a marked gate $g$ and one of its big parents $p$ and replace
it by a copy of $F_g$ as input to $p$. We obtain an equivalent circuit $\hat{C}$
where every small gate has only one parent. The size (number of
gates plus number of wires) increases only by a factor determined by
the maximum size of a flat representation:

\begin{claim} \label{claim: small}
$\|\hat{C}\| \le \|C\| \cdot (2k+3)3^k\log^{k}(n)$.%
\end{claim}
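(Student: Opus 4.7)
The plan is a simple charging argument: each gate or wire of the original $C$ will be shown to be responsible for at most $(2k+3)\,3^k\log^{k}(n)$ gates and wires of $\hat C$, so that the bound follows by summing over $C$.

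First I would compute a size bound on each $F_g$ used in the construction. Since $g$ is small, the number of distinct values each variable $x\in\dom(g)$ can take among $h\in S_g$ is at most $3\log(n)$, giving $|S_g|\le 3^{|\!\dom(g)|}\log^{|\!\dom(g)|}(n)$. Combined with the formula for the size of a flat representation, $1+t(2d+2)$ for $t$ mappings over a domain of size $d$, and the fact that $|\!\dom(g)|\le k$, this yields
\[
\|F_g\|\le 1+(2k+2)\,3^k\log^{k}(n).
\]

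Next I would exhibit a charging of every gate/wire of $\hat C$ to a gate/wire of $C$. The natural scheme has three categories:
(i) big gates and wires strictly between big gates are copied verbatim from $C$ and are charged to themselves in $C$;
(ii) surviving marked small gates and wires between surviving small gates are also charged to themselves in $C$;
(iii) the gates and internal wires of each inserted copy of $F_g$, together with the single output wire feeding that copy's sink into its big parent $p$, are charged collectively to the wire $(g,p)$ of $C$ that the insertion replaced.
Under this scheme each gate of $C$ is charged by at most one gate of $\hat C$; each unreplaced wire of $C$ by at most one wire; and each replaced wire $(g,p)$ by at most $\|F_g\|+1$ elements. I would then verify that $\|F_g\|+1\le(2k+3)\,3^k\log^{k}(n)$ using $3^k\log^{k}(n)\ge 2$, which holds for all graphs guaranteed by Lemma~\ref{thm: G}. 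Since then every element of $C$ contributes at most $(2k+3)\,3^k\log^{k}(n)$ gates or wires to $\hat C$, summing over all elements of $C$ gives $\|\hat C\|\le\|C\|\cdot(2k+3)\,3^k\log^{k}(n)$.

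The main obstacle is bookkeeping: ensuring the charging is exhaustive and injection-free, especially around the single sink-to-$p$ wire attached to each inserted copy of $F_g$. That wire is not part of the standalone $F_g$ circuit but must be counted somewhere, and the cleanest solution is to group it under the insertion it belongs to, so that every replaced wire of $C$ is charged by exactly $\|F_g\|+1$ elements of $\hat C$. Beyond this small subtlety, I expect the argument to be largely mechanical: categories (i)--(iii) are by construction mutually exclusive, and no element of $\hat C$ escapes this classification.
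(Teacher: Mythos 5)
Your charging argument is correct and is essentially the argument the paper intends: the claim is stated there without a separate proof, the point being exactly your per-wire accounting that each replaced wire $(g,p)$ costs at most $\|F_g\|+1\le 2+(2k+2)3^k\log^k(n)\le (2k+3)3^k\log^k(n)$ while all other gates and wires of $\hat C$ are verbatim copies of elements of $C$. The only (trivial) point left implicit is the degenerate case where the sink $s$ is small and the entire circuit is replaced by its flat representation, which also satisfies the bound since $\|C\|\ge 1$.
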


When we try and eliminate big definitions one challenge is that if $g$ is big, then $\|
\hat{C}_g\|$ can be large and so making lots of copies of it could blow up the size of our 
circuit. To overcome this we introduce the notion of an \emph{active parent}. We then show that non-active parents are effectively redundant and that there can't be too many active ones, which allows us to construct an equivalent treelike circuit of the appropriate size.

So let $g$ be a definition with parents $p_1, \dots, p_{\alpha}$, $\alpha > 1$, and 
suppose there is a unique path from $p_i$ to the sink $s$ for every $i$. Then for every 
gate $v$ on the unique path from $g$ to $s$ which passes through
$p_i$, we inductively define a set 
of (partial) homomorphisms $A_i^v=A_i^v(g)$ as follows, where $\hat{v}$ refers to the 
child of $v$ also lying on this path. 
\begin{itemize}
\item $A^{g}_i := S_g$,
\item if $v$ is a $\cup$-gate $A^v_i:= A_i^{\hat{v}}$,
\item otherwise $v$ is a $\times$-gate with children $u_1, \dots,
  u_{r-1}, \hat{v} $ and $A^v_i := $  \\
$  \{h_1 \cup \ldots \cup h_r \, \mid \, h_i \in S_{u_i}, \, i \in
  [r-1] ,\, h_r \in A_i^{\hat{v}} \big  \}. $
\end{itemize}
Write $A_i := A_i^s$, intuitively this is the set of homomorphisms that the wire from $g$ to $p_i$
contributes to. We say that a parent $p_i$ of $g$ is \emph{active} if $A_i \nsubseteq \cup_{j \neq
  i}A_j$. Now using a top-down traversal starting at the output gate of $\hat{C}$, we replace each 
gate with active parents $p_1, \dots, p_{\beta}$, by $\beta$ copies $g_1, \dots, g_{\beta}$ 
such that the children of each $g_i$ are exactly the children of $g$ and $g_i$ has 
exactly one out-edge going to $p_i$. At each stage we also clean-up the circuit by 
iteratively deleting all gates which have no incoming wires, as well as all the wires 
originating from such gates. We can think of this process as constructing a slimmed 
down version of the traversal, where at each stage we only keep wires going to active 
parents. Call the resulting circuit $C'$. 

We first note that this process is well-defined, as there is a unique
path from the sink to itself and since whenever 
we visit a gate we have already visited all of its parents. Moreover, by construction this 
results in a treelike circuit. In the next claim we bound the size of $C'$ and show it is indeed an equivalent circuit. The idea is that firstly a gate cannot have too many 
active parents, as otherwise we would get a large biclique in $\graphG$ which is ruled out by Lemma~\ref{thm: G}, and secondly that since only active parents contribute new homomorphisms we really do get an equivalent circuit, \confORfull{see the full version of this paper for details.}{see Appendix~\ref{ap: d-clique} for details.}{ see Appendix~\ref{ap: d-clique} for details.}

\begin{claim} \label{claim: big}
$C'$ is a f-rep of $\Hom(\graphK_k, \graphG)$ and $\|C'\| \le 3^{k}\log^{k-1}(n)\|\hat{C}\|$.
\end{claim}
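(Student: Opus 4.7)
To prove Claim~\ref{claim: big} I plan to split the argument into three stages: (i) checking that $C'$ is a well-defined treelike factorisation circuit, (ii) verifying the semantic equality $S_{C'} = \Hom(\graphK_k,\graphG)$, and (iii) bounding the size $\|C'\|$.

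The first two stages are essentially bookkeeping. Treelikeness follows from the construction: the top-down traversal ensures that when we visit a gate $g$, its ancestors have already been made treelike, so each newly-created copy of $g$ receives a unique parent, and the clean-up step only removes disconnected pieces. Every copy $g_i$ inherits the children of $g$, so $\dom(g_i)=\dom(g)$, and the $\cup/\times$-domain conditions continue to hold at all surviving parents. For semantic equality, the invariant to maintain is that, at every stage, the total contribution $\bigcup_i A_i(g)$ through any fixed gate $g$ is unchanged: deleting a wire to a non-active parent $p_i$ (one with $A_i(g)\subseteq\bigcup_{j\ne i} A_j(g)$) leaves this union intact, while replicating $g$ across its active parents preserves it by construction. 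A straightforward top-down induction then yields $S_{C'}=\Hom(\graphK_k,\graphG)$.

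The heart of the proof, and the step I expect to be the main obstacle, is the size bound. The key subsidiary lemma to establish is: \emph{for every big gate $g$ in $\hat{C}$, the number of partial functions $t\colon V(\graphK_k)\setminus\dom(g)\to V(\graphG)$ that extend every $h\in S_g$ to a valid homomorphism is at most $(3\log n)^{k-1}$}. Since $\graphK_k$ is complete, any such $t$ must send each $y$ into the common neighbourhood $N(B_{x^*})$ of $B_{x^*}:=\{h(x^*)\colon h\in S_g\}$, where $x^*$ is a coordinate witnessing $|B_{x^*}|>3\log n$. Because $\graphG$ is simple, $B_{x^*}\cap N(B_{x^*})=\emptyset$, so the two sets together induce a complete bipartite subgraph of $\graphG$; Lemma~\ref{thm: G}(2) then forces $|N(B_{x^*})|<3\log n$, and the number of valid extensions is at most $|N(B_{x^*})|^{|V(\graphK_k)\setminus\dom(g)|}\le (3\log n)^{k-1}$.

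To convert this into the size bound I would bound $c(g)$, the number of copies of $g$ in $C'$, by $(3\log n)^{k-1}$, by induction following the top-down processing order. For a big $g$, distinct active parent-copies of $g$ must produce distinct ``unique'' extensions $t^{(a)}\ne t^{(b)}$ in their respective $T$-sets (otherwise the uniquely contributed homomorphism of one copy factors through another, contradicting activeness), and each such extension is compatible with $S_g$; the subsidiary lemma then caps the number of active parent-copies, hence $c(g)$, by $(3\log n)^{k-1}$. For a small $g$, the construction of $\hat{C}$ in Claim~\ref{claim: small} ensures a unique parent $p$ in $\hat{C}$, so each copy of $g$ in $C'$ is produced from a surviving copy of $p$; this gives $c(g)\le c(p)\le(3\log n)^{k-1}$ by the inductive hypothesis. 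Since $C'$ is treelike we have $\|C'\|\le 2|C'|\le 2(3\log n)^{k-1}|\hat{C}|\le 3^k\log^{k-1}(n)\|\hat{C}\|$, which is the required bound.
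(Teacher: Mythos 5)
Your proposal is correct and follows essentially the same route as the paper's proof: first argue equivalence of $C'$ via the active/non-active wire analysis, then bound the number of copies of each gate of $\hat{C}$ in $C'$ by $3^{k-1}\log^{k-1}(n)$ (small gates inherit the bound from their unique parent in $\hat{C}$; big gates via the active-parent bound coming from the biclique-freeness of $\graphG$), and finish with the treelike size accounting. The only deviation is minor: where the paper pigeonholes over the extension coordinates to find one attaining $\beta^{1/(k-1)}$ values and pairs it with a big coordinate of $S_g$, you count all compatible extensions directly through the common neighbourhood of $\pi_{\{x^*\}}S_g$; both variants rest on the same closure property (an extension occurring in some $A_i$ extends \emph{every} $h\in S_g$), which you invoke only implicitly.
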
  
Pulling everything together we get that 
\[
\|C\| \;\underset{\text{\footnotesize (Claim~\ref{claim: small})}}{\ge}\; \frac{\|\hat{C}\|}{(2k+3)3^k\log^{k}(n)} \;\underset{\text{\footnotesize (Claim~\ref{claim: big})}}{\ge}\; \frac{\|C'\|}{(2k+3)3^{2k}\log^{2k-1}(n)} = \Omega \left(\frac{m^{k/2}}{\log^{3k-1}(m)}\right),
\]
where the final equality follows by Theorem~\ref{thm: f-clique} since $C'$ is a f-rep of $\Hom(\graphK_k,\graphG)$.
\end{proof}

\section{The representation dichotomy for structures of bounded
  arity} \label{s: dichotomy}
 
In this section we lift the lower bound for cliques to all classes of graphs with
unbounded treewidth. We first introduce a notion of reductions between representations and
show that having lower bounds for all graphs of unbounded treewidth
immediately implies our main dichotomy theorem for bounded-arity structures.

Afterwards, we introduce minor and almost-minor reductions and use
them to obtain a lower bound for representing homomorphisms from large
grids and from graphs having large grids as a minor. The
superpolynomial representation lower bound for all graph classes with
unbounded treewidth then follows from the excluded grid theorem.

\subsection{Reductions between representations} \label{ss: reduct}

In order to define reductions between representations we fix some
notation. For two structures $\strucA$ and $\strucB$ we let
$\dset(\strucA,\strucB)$ be the set of all d-representations of
$\Hom(\strucA,\strucB)$ and $\dsize(\strucA,\strucB) = \min_{C\in
  \dset(\strucA,\strucB)}\|C\|$ be the size of the smallest such
representation.

For a class $\classC$ of structures the function $\dsub{\strucA,\classC}\colon
\mathbb N\to \mathbb N$ expresses the required size of a
d-representation of homomorphisms between $\strucA$ and
$\strucC\in\classC$ in terms of the size $m$ of $\strucC$, i.\,e.,
$\dsub{\strucA,\classC}(m) = \max_{\{\strucC\in\classC \,\colon 
  \|\strucC\|\leq m\}}\dsize(\strucA,\strucC)$. We write
$\dsub{\strucA}$ as an abbreviation for $\dsub{\strucA,\classC}$ when
$\classC$ is the class of all structures.
Translated to this notation,  \cite{olteanu2015}
showed that $\dsub{\strucA}= O(m^{\tw(\strucA)+1})$, whereas
Theorem~\ref{thm: d-clique} states the lower bound $\dsub{K_k} =
\Omega(m^{k/2}/\log^{3k-1}(m))$. We also write, for a signature $\sigma$, $\classC_{\sigma}$ to denote the class of all $\sigma$-structures.

The main goal of this section is to prove, for some increasing function
$f$, a lower bound of the form
$\dsub{\strucA}= \Omega(m^{f(\tw(\strucA))/\ar(\strucA)})$ for every
structure $\strucA$, which immediately implies our main theorem.
To achieve this we use reductions with our $k$-clique lower
bound as a starting point. Suppose we already have a lower bound on
$\dsub{\strucA,\classC}$ for a class $\classC$ of arbitrarily large
hard instances (implying a lower bound on $\dsub{\strucA}$), then we can
use the following reduction \emph{from $\strucA$ to $\strucB$ via
  $\classC$} to obtain a lower bound on $\dsub{\strucB}$.

  \begin{definition} \label{def:reduction}
    Let $\strucA$ be a $\sigma$-structure and let
    $\classC$ be a class of $\sigma$-structures. Let $\strucB$ be a $\sigma'$-structure and $c \colon
    \mathbb{R}^{+} \to \mathbb{R}^{+}$ be a
    strictly increasing function. Then a \emph{$c$-reduction} from $\strucA$ to $\strucB$
    via $\classC$ is a pair $(\phi,
    (\psi_{\strucC})_{\strucC \in \classC})$, where $\phi \colon
    \classC \to \classC_{\sigma'}$ and $\psi_{\strucC} \colon
    \dset(\strucB,\phi(\strucC)) \to \dset(\strucA,\strucC)$
    such that:
    \begin{enumerate}
    \item for every $n\in \mathbb N$ there is a $\strucC\in\classC$
      such that $\|\phi(\strucC)\| \geq n$,
    \item $\|\phi(\strucC)\| \leq c(\|\strucC\|)$ for all $\strucC\in\classC$, and  
      \item $\|\psi_{\strucC}(C)\| \leq \|C\|$ for every structure
          $\strucC \in \classC$ and circuit $C \in \dset(\strucB,
        \phi(\strucC))$.
      \end{enumerate}
      If $c(m) = \alpha m$ for some $\alpha \in \mathbb{R}^{+}$, we say we have
      a \emph{linear reduction}. 
    \end{definition}
        
    \begin{lemma}\label{lem:reduct}
      Suppose there is a $c$-reduction $(\phi,
    (\psi_{\strucC})_{\strucC \in \classC})$ from $\strucA$ to $\strucB$ via
    $\classC$, let $\classD = \{\phi(\strucC) \, \mid \,
    \strucC\in\classC\}$ be the image of $\phi$. Then
    $
    \dsub{\strucB,\classD} = \Omega(\dsub{\strucA,\classC}\circ\lfloor c^{-1}\rfloor).
    $
    \end{lemma}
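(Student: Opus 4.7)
The plan is to unwind the definitions of the reduction and of $\dsize$, $\dsub{\cdot,\cdot}$ in a direct manner. The core observation is simply property~(3) of a $c$-reduction: for any structure $\strucC\in\classC$ and any d-representation $C\in\dset(\strucB,\phi(\strucC))$, the circuit $\psi_\strucC(C)$ lies in $\dset(\strucA,\strucC)$ and has size at most $\|C\|$. Taking a minimum over $C$ on both sides yields the key inequality
\[
\dsize(\strucA,\strucC) \;\leq\; \dsize(\strucB,\phi(\strucC)) \qquad\text{for every }\strucC\in\classC.
\]

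Next, I would fix $m\in\mathbb N$ (large enough that $\lfloor c^{-1}(m)\rfloor\geq 1$, which is possible by property~(1) together with~(2) forcing $c$ to take arbitrarily large values) and choose a witness $\strucC^\ast\in\classC$ with $\|\strucC^\ast\|\leq \lfloor c^{-1}(m)\rfloor$ attaining the maximum in the definition of $\dsub{\strucA,\classC}(\lfloor c^{-1}(m)\rfloor)$. By property~(2) of the reduction and the strict monotonicity of $c$,
\[
\|\phi(\strucC^\ast)\| \;\leq\; c(\|\strucC^\ast\|) \;\leq\; c\bigl(\lfloor c^{-1}(m)\rfloor\bigr) \;\leq\; c(c^{-1}(m)) \;=\; m,
\]
so $\phi(\strucC^\ast)$ is admissible in the maximum defining $\dsub{\strucB,\classD}(m)$. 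Combining this with the key inequality gives
\[
\dsub{\strucB,\classD}(m) \;\geq\; \dsize(\strucB,\phi(\strucC^\ast)) \;\geq\; \dsize(\strucA,\strucC^\ast) \;=\; \dsub{\strucA,\classC}\bigl(\lfloor c^{-1}(m)\rfloor\bigr),
\]
which is exactly the claimed $\Omega$-bound (with hidden constant $1$). Property~(1) of the reduction enters only to guarantee that $\classD$ contains arbitrarily large structures, so that $\dsub{\strucB,\classD}(m)$ is well-defined for all sufficiently large $m$ and the asymptotic statement is nonvacuous.

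I don't anticipate any real obstacle: the lemma is almost a tautological consequence of the definitions. The only point that requires a moment's thought is the monotonicity step $c(\lfloor c^{-1}(m)\rfloor)\leq m$, and the minor subtlety that the witness $\strucC^\ast$ attaining the defining maximum exists because we are maximising $\dsize(\strucA,\cdot)$ over the finite set of structures in $\classC$ of size at most $\lfloor c^{-1}(m)\rfloor$ (and if the supremum were not attained, replacing $\strucC^\ast$ by any structure of $\dsize(\strucA,\strucC^\ast)$ within a factor of $1$ of the supremum still works).
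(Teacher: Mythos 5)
Your proof is correct and follows essentially the same route as the paper: use property~(3) to get $\dsize(\strucA,\strucC)\le\dsize(\strucB,\phi(\strucC))$, use property~(2) and monotonicity of $c$ to see that $\phi(\strucC)$ is admissible in the maximum defining $\dsub{\strucB,\classD}$, and use property~(1) for the asymptotic statement. The only cosmetic difference is that the paper bounds $\dsub{\strucB,\classD}(c(m))\ge\dsub{\strucA,\classC}(m)$ and then inverts, whereas you substitute $\lfloor c^{-1}(m)\rfloor$ directly; these are equivalent formulations of the same argument.
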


    \begin{proof}
     Fix $m \in \mathbb{N}$, where $m \ge \min_{\{\strucC \in \classC\}}\|\strucC\|$. Let $\strucC \in \classC$ with $\|\strucC\| \leq m.$
     Then $\psi_{\strucC}$
     witnesses that $\dsize(\strucA, \strucC) \leq  \dsize(\strucB, \phi(\strucC))$. Also $\|\phi(\strucC)\| \leq c(m)$, since $c$ is an
     increasing function. So 
     $\dsub{\strucB,\classD}(c(m)) = \max_{\{\strucC \, \colon\|\phi(\strucC)\|
         \leq c(m)\} } \dsize(\mathcal{B},
       \phi(\strucC)) \geq \max_{\{\strucC \, \colon \|\strucC\| \leq m\}} \dsize(\strucA,
       \strucC) = \dsub{\strucA,\classC}(m).
     $
     Since $\classC$ and $\classD$ contain arbitrarily large structures, the asymptotic bound from the lemma follows.
    \end{proof}

We start illustrating the power of these reductions by making two
simplifications. First, we reduce the general problem of representing
homomorphisms to representing homomorphisms that respect a
partition. Second, we further reduce to graph homomorphisms that
respect a partition. \confORfull{All proofs from this subsection can be found in the full version of the paper.}{}{All proofs from this subsection can be found in Appendix~\ref{ap: dicot}.}

For the first reduction we need the notion of the \emph{individualisation} of a $\sigma$-structure $\strucA$, which is obtained from $\strucA$ by giving every element of the universe a distinct color. More precisely, we extend the vocabulary $\sigma$ with 
unary relations (= colours) $\sigma_A=\{P_a\colon a\in A\}$ and
let $ \strucA^{\text{id}}$ be the $\sigma\cup\sigma_{A}$-expansion of
$\strucA$ by adding $P^{\strucA^{\text{id}}}_a= \{a\}$.

\begin{lemma}\label{lem:id-reduct}
Let $\strucA$ be a $\sigma$-structure and let $\classC$ be the class
of all $\sigma \cup \sigma_{A}$-structures where \\ $\{P^{\strucC}_a\, \mid \, a\in A\}$ is a partition of the universe. Then
$
\dsize_{\strucA} = \Omega(\dsize_{\strucA^{\text{id}}, \classC}).
$
\end{lemma}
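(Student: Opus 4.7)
The plan is to invoke Lemma~\ref{lem:reduct} via a linear reduction from $\strucA^{\text{id}}$ to $\strucA$ (so $\strucA^{\text{id}}$ plays the role of the source structure in Definition~\ref{def:reduction} and $\strucA$ plays the role of the target). Define $\phi\colon\classC\to\classC_{\sigma}$ to send each $\strucC$ to its $\sigma$-reduct, i.e.\ $\phi(\strucC)$ is obtained from $\strucC$ by simply forgetting the colour relations $P_a$. The key observation is that for any $\strucC\in\classC$, a map $h\colon V(\strucA)\to V(\strucC)$ is a homomorphism $\strucA^{\text{id}}\to\strucC$ if and only if it is a $\sigma$-homomorphism $\strucA\to\phi(\strucC)$ that additionally satisfies $h(a)\in P^{\strucC}_a$ for every $a\in A$. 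This equivalence is immediate since $P^{\strucA^{\text{id}}}_a=\{a\}$, so preservation of the unary colour relations is exactly the pointwise membership constraint.

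Given this, the family $\psi_\strucC$ is obtained directly from Lemma~\ref{lem:technical}(2). For any $C\in\dset(\strucA,\phi(\strucC))$, apply that construction with $X=A$ (enumerated as $x_1,\ldots,x_{|A|}$) and $Y_i=P^{\strucC}_{x_i}$; this produces a d-rep $\psi_\strucC(C)$ whose represented set is
\[
\{\,h\in\Hom(\strucA,\phi(\strucC))\,\mid\, h(x_i)\in P^{\strucC}_{x_i}\text{ for all }i\,\}\;=\;\Hom(\strucA^{\text{id}},\strucC),
\]
with $\|\psi_\strucC(C)\|\le\|C\|$. Since $\phi$ only drops relations, $\|\phi(\strucC)\|\le\|\strucC\|$ for every $\strucC$, so we have a linear reduction with $c(m)=m$. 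The image $\phi(\classC)$ contains arbitrarily large $\sigma$-structures, since any $\sigma$-structure whose universe has size at least $|A|$ can be turned into a member of $\classC$ by choosing any partition of its universe indexed by $A$.

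Finally, applying Lemma~\ref{lem:reduct}, and using that $\lfloor c^{-1}\rfloor(m)=m$, we obtain $\dsize_{\strucA,\phi(\classC)}=\Omega(\dsize_{\strucA^{\text{id}},\classC})$. Because $\phi(\classC)$ is a subclass of all $\sigma$-structures, $\dsize_{\strucA}(m)\ge\dsize_{\strucA,\phi(\classC)}(m)$ for all $m$, and the claimed bound $\dsize_{\strucA}=\Omega(\dsize_{\strucA^{\text{id}},\classC})$ follows. I do not anticipate any genuine obstacle here: the argument is essentially bookkeeping once Lemma~\ref{lem:technical}(2) is available to perform the in-circuit restriction to colour-respecting homomorphisms, and the correspondence between colour-respecting maps $\strucA\to\phi(\strucC)$ and homomorphisms $\strucA^{\text{id}}\to\strucC$ is forced by the definition of $\strucA^{\text{id}}$.
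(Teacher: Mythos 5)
Your proposal is correct and follows essentially the same route as the paper's own proof: take $\phi$ to be the $\sigma$-reduct, use Lemma~\ref{lem:technical}(2) with $Y_a = P^{\strucC}_a$ to build $\psi_\strucC$, observe that colour-respecting $\sigma$-homomorphisms into the reduct are exactly $\Hom(\strucA^{\text{id}},\strucC)$, and conclude via Lemma~\ref{lem:reduct}. Your explicit checks of condition (1) of Definition~\ref{def:reduction} and of $\|\phi(\strucC)\|\le\|\strucC\|$ are fine details the paper leaves implicit.
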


\confORfull{}{%
\begin{proof}

 We give a linear reduction from $\strucA^{\text{id}}$ to $\strucA$  
  via $\classC$ and then conclude via Lemma~\ref{lem:reduct}.
For $\strucC \in \classC$ we define
$\phi(\strucC)$ to be the $\sigma$-reduct of $\strucC$ and for a
representation $C \in \dset(\strucA, \phi(\strucC))$   we let $\psi_{\strucC}(C) = C'$, where $C'$ is a factorised circuit with 
\[
S_{C'} = \{h \in \Hom(\strucA, \phi(\strucC)) \, \mid \, h(a) \in P^{\strucC}_a \text{ for all } a \in A \}
\] 
and $\|C'\| \le \|C\|$, the existence of which is guaranteed by Lemma~\ref{lem:technical}.

We claim this defines a reduction from $\strucA^{\text{id}}$ to $\strucA$
via $\classC$. We just need to check that $\psi_{\strucC}(C)=C' \in \dset(\strucA^{\text{id}}, \strucC)$, for every $\strucC \in \classC$ and $C \in \dset(\strucA, \phi(\strucC))$. To do this we show that $\Hom(\strucA^{\text{id}}, \strucC)= S_{C'}$. So let $h \in \Hom(\strucA^{\text{id}}, \strucC)$, then $h(a) \in P^{\strucC}_a$ for all $a$, since $P^{\strucA^{\text{id}}}_a = \{a\}$. Moreover since $\phi(\strucC)$ is the $\sigma$-reduct of $\strucC$ it is clear that $h \in \Hom(\strucA, \phi(\strucC))$, which implies $h\in S_{C'}$. Conversely, if $h\in S_{C'}$ it is a map from $\strucA^{\text{id}} \to \strucC$, that respects all relations in $\sigma \cup \sigma_{A}$ and so we do have the claimed reduction. 
\end{proof} %
}

We call structures and (vertex-coloured) graphs
\emph{individualised} if every vertex has a distinct colour.
In the next lemma we reduce from individualised structures to
individualised graphs. Recall the definition of the Gaifman graph
$\graphG_{\strucA}$ from the preliminaries. 

\begin{lemma} \label{lem:Gaifman-reduct}
  Let $\strucA$ be an individualised structure and
  $\graphG^{\text{id}}_\strucA$ the individualisation of its Gaifman
  graph. Let $\classC$ be the class of all structures $\strucC$ where
  $\{P^{\strucC}_a \, \mid \, a\in A\}$ is a partition of its universe and 
$\classH$ be the class of all
  vertex-coloured graphs $\graphH$ where
  $\{P^{\graphH}_a \, \mid \, a\in A\}$ is a
  partition of its vertex set.
  Then
  $\dsub{\strucA,\classC}(m) = \Omega\big(\,(\dsub{\graphG^{\text{id}}_{\strucA},\classH}(m))^{2/\ar(\strucA)}\,\big)$.
\end{lemma}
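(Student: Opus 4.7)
The plan is to construct a reduction from $\graphG^{\text{id}}_\strucA$ to $\strucA$ via $\classH$ in the sense of Definition~\ref{def:reduction} and then invoke Lemma~\ref{lem:reduct}. Given $\graphH \in \classH$, define the $\sigma$-structure $\phi(\graphH)$ to have universe $V(\graphH)$, colour classes $P^{\phi(\graphH)}_a := P^\graphH_a$ for $a \in A$, and, for each $R \in \sigma$,
\[
R^{\phi(\graphH)} := \big\{(b_1,\ldots,b_{r(R)}) : \exists (a_1,\ldots,a_{r(R)}) \in R^\strucA,\; b_i \in P^\graphH_{a_i}\;\forall i,\; \{b_i,b_j\} \in E(\graphH)\;\forall a_i \neq a_j \big\}.
\]
Clearly $\phi(\graphH) \in \classC$, and since $\classH$ contains arbitrarily large graphs the image of $\phi$ is unbounded in size.

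The first key step is to verify that $\Hom(\strucA, \phi(\graphH)) = \Hom(\graphG^{\text{id}}_\strucA, \graphH)$ as sets of maps $A \to V(\graphH)$. Any such colour-preserving $h$ sends $a$ into $P^\graphH_a$, and because the partition $\{P^\graphH_a\}_{a\in A}$ is disjoint, the existential witness in the definition of $R^{\phi(\graphH)}$ is forced to be the original tuple $(a_1,\ldots,a_{r(R)})$; thus the $\sigma$-relation-preservation condition collapses to ``$\{h(a_i), h(a_j)\} \in E(\graphH)$ whenever $a_i \neq a_j$ co-occur in some relation tuple of $\strucA$'', which is precisely the edge condition for the Gaifman graph. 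Consequently $\psi_\graphH$ may be taken to be the identity on circuits, trivially satisfying $\|\psi_\graphH(C)\| \le \|C\|$.

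The second key step is to bound $\|\phi(\graphH)\|$ in terms of $m := \|\graphH\|$. Fix a single tuple $(a_1,\ldots,a_{r(R)}) \in R^\strucA$; the tuples of $R^{\phi(\graphH)}$ arising from it are in bijection with the ordered coloured $s$-cliques of $\graphH$, where $s \le r := \ar(\strucA)$ is the number of distinct entries. Using the classical bound that a graph with $m$ edges has $O(m^{s/2})$ ordered $s$-cliques for $s \ge 2$ (and $|V(\graphH)| \le m$ choices for $s = 1$), and summing over the constantly many tuples of $\strucA$, we conclude $\|\phi(\graphH)\| \le c(m)$ with $c(m) = O(m^{r/2})$. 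Invoking Lemma~\ref{lem:reduct} with $c^{-1}(m) = \Theta(m^{2/r})$, and noting that the image $\classD$ of $\phi$ lies in $\classC$, yields $\dsub{\strucA, \classC}(m) = \Omega(\dsub{\graphG^{\text{id}}_\strucA, \classH}(m^{2/r}))$, which, for the polynomial-type bounds relevant here, coincides with $(\dsub{\graphG^{\text{id}}_\strucA, \classH}(m))^{2/r}$ up to constants.

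The main obstacle is the equality of the two $\Hom$ sets in the first step: one must argue carefully about relations of varying arities and tuples with repeated entries. The individualisation of $\strucA$ together with the partition property of $\classH$ are precisely what makes the gadget definition faithful, so the choice of $\phi$ is the crux of the construction; the size bound is then a routine application of the standard clique-counting inequality.
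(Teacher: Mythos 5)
Your overall route is the same as the paper's: a gadget $\phi$ that encodes adjacency of $\graphH$ into the relations over the colour classes, the identity map for $\psi_{\graphH}$ (justified by the hom-set equality), and an application of Lemma~\ref{lem:reduct} with $c(m)=O(m^{r/2})$ obtained by clique counting. Your proof of $\Hom(\strucA,\phi(\graphH))=\Hom(\graphG^{\text{id}}_{\strucA},\graphH)$ is sound; the forced-witness argument via disjointness of the colour classes works. The genuine gap is in the size bound. In your definition of $R^{\phi(\graphH)}$, two positions $i,j$ with $a_i=a_j$ are subject to no constraint at all: $b_i$ and $b_j$ are neither forced to be equal nor to be adjacent. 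Hence the tuples arising from a fixed tuple of $R^{\strucA}$ are \emph{not} in bijection with ordered coloured $s$-cliques once that tuple has repeated entries, and the per-tuple bound $O(m^{s/2})$ fails. Concretely, if $R^{\strucA}$ contains a tuple $(a,a,a')$ with $a\neq a'$ (so $r\geq 3$) and $\graphH$ is a star with $\Theta(m)$ leaves, the centre coloured $a'$ and the leaves coloured $a$, then $R^{\phi(\graphH)}$ contains all $\Theta(m^{2})$ tuples of the form (leaf, leaf, centre), not $O(m^{3/2})$; for a tuple all of whose entries equal $a$, your relation even contains all of $(P^{\graphH}_a)^{t}$, up to $m^{t}$ tuples. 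So as written your $\phi$ only yields a $c$-reduction with $c(m)$ as large as $\Theta(m^{r})$ in the worst case, and Lemma~\ref{lem:reduct} then gives an exponent strictly weaker than the claimed $2/\ar(\strucA)$ (still enough for the superpolynomial dichotomy, but not for the lemma as stated).

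The fix is small and restores exactly your counting argument: add to the definition of $R^{\phi(\graphH)}$ the requirement that $b_i=b_j$ whenever $a_i=a_j$. Homomorphic images of a tuple satisfy this automatically, so both directions of the hom-set equality go through verbatim, and now each tuple of $R^{\strucA}$ with $s\geq 2$ distinct entries contributes at most the number of ordered $s$-cliques of $\graphH$, i.e.\ $O(m^{s/2})$, while tuples with $s=1$ contribute at most $|V(\graphH)|\leq m$; summing over the constantly many tuples of $\strucA$ gives $c(m)=O(m^{r/2})$ as required. The paper's own gadget sidesteps the issue differently: it drops the colour pattern from the relation and simply takes all tuples whose pairwise distinct \emph{values} form a clique in $\graphH$ (so at most $t^{t}$ tuples per $i$-clique), and colour preservation of homomorphisms from the individualised $\strucA$ then does the rest of the work in the hom-set equality.
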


\confORfull{}{%

\begin{proof}
 Let $\graphH \in \classH$ and suppose $\strucA$ is an individualised $\sigma$-structure. We define $\phi(\graphH) \in \classC$ as follows. The universe of $\phi(\graphH)$ is $V(\graphH)$ and each element is coloured in the same way as in $\graphH$. Then we define for $R \in \sigma$ of arity $t$
\[
R^{\phi(\graphH)} = \{ (x_1, \dots, x_t) \, \mid \, \{x_i,x_j\} \in E(\graphH) \text{ whenever } x_i \neq x_j \}.
\]
If a tuple in  $R^{\phi(\graphH)}$ has $i > 1$ distinct coordinates this corresponds to an $i$-clique in $\graphH$. Since there are at most $i^{t}$ tuples in $R^{\phi(\graphH)}$ corresponding to each $i$-clique in $\graphH$. As the number of $i$-cliques in a graph with $m$ edges is upper bounded by $m^{i/2}$, one can see that
\[
|R^{\phi(\graphH)}| \le |V(\graphH)| + t^t \cdot \sum_{i=2}^t \|\graphH\|^{i/2} = O(\|\graphH\| + \|\graphH\|^{t/2}).
\]
 Where for the last equality recall that $\graphH$ is a vertex coloured graph, so $\|\graphH\| \ge |V(\graphH)|$. Since the arity of each relation in $\sigma$ is bounded by $r:=\ar(\strucA)$ it follows that $\|\phi(\graphH)\| = O(\|\graphH\|^{r/2})$.

We next claim that $\Hom(\strucA, \phi(\graphH)) = \Hom(\graphG_{\strucA}^{\text{id}}, \graphH)$. First assume the claim.  Set $\psi_{\graphH}(C) = C$ and observe that since $\classC$ contains structures of unbounded size and $\|\phi(\graphH)\| \ge \|\graphH\|$ for every $\graphH \in \classC$, we have that $\phi(\classC)$ contains structures of unbounded size. We therefore have a $O(m^{r/2})$ reduction from $\graphG_{\strucA}^{\text{id}}$ to $\strucA$ via $\classH$ and so by Lemma~\ref{lem:reduct} the result follows, noting again that $\phi(\graphH) \in \classC$ for every $\graphH \in \classH$.

We now prove the claim. Let $h \in \Hom(\strucA, \phi(\graphH))$ and suppose $\{u,v\} \in 
E(\graphG_{\strucA^{\text{id}}})$. Then $u\neq v$ by the definition of the Gaifman graph and there is some relation $R \in \sigma$ and $(a_1,
\dots,a_t) \in R^{\strucA}$ such that $u=a_i$, $v=a_j$, for some $i,j$. Since $h \in \Hom(\strucA, \phi(\graphH))$,
$(h(a_1),\dots,h(a_t)) \in R^{\phi(\graphH)}$. Moreover, as $h$   preserves the relations $P_a$ and $\{P_a^{\phi(\graphH)}\}_{a\in A}$  partitions the universe of $\phi(\graphH)$, $h(u) \neq h(v)$. By the definition of $R^{\phi(\graphH)}$ it 
follows that $\{h(u),h(v)\} \in E(\graphH)$ and so 
 $h \in \Hom(\graphG_{\strucA}^{\text{id}}, \graphH)$. Conversely suppose $h \in 
\Hom(\graphG_{\strucA}^{\text{id}}, \graphH)$. Let $(a_1, \dots, a_t) \in R^{\strucA}$ for 
some $R \in \sigma$. By the definition of the Gaifman graph $\{a_i,a_j\} \in 
E(\graphG_{\strucA}^{\text{id}})$ for every $a_i \neq a_j$. Therefore, since $h \in 
\Hom(\graphG_{\strucA}^{\text{id}}, \graphH), \{h(a_i), h(a_j)\} \in E(\graphH)$ for every $a_i  
\neq a_j$. By construction $(h(a_1),\dots, h(a_t)) \in R^{\phi(\graphH)}$ and we may 
infer that $h \in \Hom(\strucA, \phi(\graphH))$. 
\end{proof}

}

Taking both lemmas into account, we can now focus on individualised graphs $\graphG$ on
the left-hand side and on graphs $\graphH$ with the corresponding colouring
$\{P^{\graphH}_a \, \mid \, a\in V(\graphG)\}$ that partitions 
its vertex set on the right-hand side. We
call such graphs \emph{$V(\graphG)$-partitioned graphs}. However we would also like to deploy our lower bound from Section~\ref{s: clique}; the next lemma allows to transfer this lower bound to individualised structures.

\begin{lemma} \label{lem:reductgraph}
Let $\graphG$ be a graph and $\classC$ be the class of all $V(\graphG)$-partitioned graphs.
Then $
\dsize_{\graphG^{\text{id}},\classC} = \Omega(\dsize_{\graphG}).
$
\end{lemma}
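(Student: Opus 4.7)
The plan is to give a linear reduction from $\graphG$ to $\graphG^{\text{id}}$ via the class $\classall$ of all graphs, and then invoke Lemma~\ref{lem:reduct}. For each graph $\graphH$, define $\phi(\graphH)$ to be the $V(\graphG)$-partitioned graph with vertex set $V(\graphG)\times V(\graphH)$, colouring $P_v^{\phi(\graphH)}:=\{v\}\times V(\graphH)$, and edges $\{(u,a),(v,b)\}\in E(\phi(\graphH))$ iff $\{u,v\}\in E(\graphG)$ and $\{a,b\}\in E(\graphH)$. Since $\graphG$ is fixed, this is a constant-factor blow-up: $\|\phi(\graphH)\|=O(\|\graphH\|)$ with hidden constant depending only on $\|\graphG\|$, and $\phi(\classall)$ clearly contains arbitrarily large graphs.

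The map $\Phi\colon\Hom(\graphG,\graphH)\to\Hom(\graphG^{\text{id}},\phi(\graphH))$ defined by $\Phi(h)(v):=(v,h(v))$ is a bijection. In the forward direction, $\Phi(h)$ respects the colours by construction, and for $\{u,v\}\in E(\graphG)$ we have $\{(u,h(u)),(v,h(v))\}\in E(\phi(\graphH))$ by our edge definition. Conversely, any $h'\in\Hom(\graphG^{\text{id}},\phi(\graphH))$ must satisfy $h'(v)\in P_v^{\phi(\graphH)}$, forcing $h'=\Phi(h)$ for a unique $h\colon V(\graphG)\to V(\graphH)$; unpacking the edge constraint in $\phi(\graphH)$ then forces $h$ to be a homomorphism into $\graphH$.

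Now I construct $\psi_{\graphH}$. The key observation is that every input gate in any $C\in\dset(\graphG^{\text{id}},\phi(\graphH))$ must carry a colour-respecting label $\{v\mapsto(v,a)\}$. Indeed, if some input gate $g_0$ had label $\{v\mapsto w\}$ with $w\notin P_v^{\phi(\graphH)}$, then a straightforward bottom-up induction shows that every ancestor $g$ of $g_0$ has $S_g\neq\emptyset$ and contains at least one mapping that agrees with $\{v\mapsto w\}$ on $v$ (for $\cup$-gates because $S_{g_0}\subseteq S_g$ holds along the edge, and for $\times$-gates by combining with arbitrary non-empty choices from the other children). Since $C$ has a unique sink reachable from $g_0$, such a mapping lies in $S_C$, contradicting $S_C=\Hom(\graphG^{\text{id}},\phi(\graphH))$. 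Hence one may relabel each input gate $\{v\mapsto(v,a)\}$ to $\{v\mapsto a\}$ to obtain a factorisation circuit $\psi_{\graphH}(C):=C'$ over $V(\graphG)$ and $V(\graphH)$ with the same underlying DAG, so $\|C'\|=\|C\|$. Well-definedness is preserved since the gate domains depend only on input-gate first coordinates, and $S_{C'}=\Phi^{-1}(S_C)=\Hom(\graphG,\graphH)$.

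All conditions of Definition~\ref{def:reduction} are thus satisfied, so Lemma~\ref{lem:reduct} applied with $\classD:=\phi(\classall)\subseteq\classC$ yields $\dsub{\graphG^{\text{id}},\classC}\geq\dsub{\graphG^{\text{id}},\classD}=\Omega(\dsub{\graphG})$ (the composition with the inverse of a linear function being absorbed into the $\Omega$). The main subtlety is the ``no bad input gates'' claim: this leverages the monotonicity of $\cup$ and $\times$—once an input gate is present, the mapping it represents genuinely propagates to every ancestor and eventually to $S_C$, so colour constraints on $\Hom(\graphG^{\text{id}},\phi(\graphH))$ force every input label to respect the partition. Once this is established, the rest of the reduction is a syntactic renaming.
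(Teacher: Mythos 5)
Your proof is correct and takes essentially the same route as the paper's: the same tensor-product construction $\phi(\graphH)$ coloured by the $\graphG$-coordinate, the same relabelling of input gates to define $\psi_{\graphH}$, and the same appeal to Lemma~\ref{lem:reduct} via a linear reduction from $\graphG$ to $\graphG^{\text{id}}$. Your explicit verification that no input gate can carry a colour-violating label (so the relabelling really yields a d-rep of $\Hom(\graphG,\graphH)$) is a detail the paper's proof leaves implicit, and it is a sound addition.
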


\confORfull{}{%
\begin{proof}
Let $\graphH$ be a graph. We define $\phi(\graphH)$ to be the following $V(\graphG)$-partitioned graph:
\begin{align*}
 V(\phi(\graphH)) &= \{v_a \, \mid \, v \in V(\graphH), \, a \in V(\graphG) \},\\
 E(\phi(\graphH)) &= \left\{\{v_a,u_b\} \, \mid \, \{v,u\} \in E(\graphH), \{a,b\} \in E(\graphG)\right\}, \\
 P_a^{\graphH} &= \{v_a \, \mid \, v \in V(\graphH) \}, \text{ for every $a \in V(\graphG)$}.
\end{align*}
Recall from the preliminaries that all structures are assumed to be  connected. In particular, since $\graphG$ and $\graphH$ are both connected, $\phi(\graphH)$ is connected so $|V(\phi(\graphH))| \le 1/2|E(\phi(\graphH))| \le 1/2|E(\graphG)|\cdot|E(\graphH)|$ we have that
\[ \|\phi(\graphH)\| = |V(\phi(\graphH))| + |E(\phi(\graphH))| \le |V(\phi(\graphH))|+|E(\graphG) |\cdot |E(\graphH)|\le 2|E(\graphG)|\cdot|E(\graphH)|.
\]
Let $C \in \dset(\graphG^{\text{id}}, \phi(\graphH))$. Relabel every input gate of the 
form $(a \mapsto v_a)$ by $(a \mapsto v)$, then the resulting circuit, $C'$ is the same size as $C$ and we claim it is a d-rep of $\Hom(\graphG, \graphH)$. 

 To see this let $h\in \Hom(\graphG, H)$. For every $\{a,b\} \in E(\graphG)$, $\{h(a),h(b)\} \in E(\graphH)$ and so  
$\{h(a)_{a}, h(b)_{b}\} \in E(\phi(\graphH))$. Therefore the map $h' 
\colon \graphG^{\text{id}} \to \phi(\graphH)$, such that $h'(x)=h(x)_x$ is a 
homomorphism. Similarly for every $h \in \Hom(\graphG^{\text{id}}, \phi(\graphH))$, 
$gh \in \Hom(\graphG, \graphH)$, where $g \colon \phi(\graphH) \to \graphH$ with $g(v_a)=v$. We can infer that $C'$ is indeed a a d-rep of $\Hom(\graphG, \graphH)$ and so we have a linear
reduction from $\graphG$ to $\graphG^{\text{id}}$ via the class of all graphs. By 
Lemma~\ref{lem:reduct} the result follows. 
\end{proof} %
}

\subsection{Minor reductions}

 In this subsection we show that we can reduce $\graphG'$ to $\graphG$
 if $\graphG$ is a minor of $\graphG'$. 
 We start by illustrating how to handle edge
 contractions via an example.

 \begin{example}[Reduction from 4-cycle to 3-cycle] \label{ex: 3to4}
   Consider the 3-cycle $\graphK_3$ on vertices $x_1,x_2,x_3$, which is a
   minor of the 4-cycle $\graphC_4$ on vertices $x_1,x_2,x_3,x_4$ by
   contracting one edge $\{x_4,x_1\}$. We show that we can lift the
   lower bound for $\graphK^{\text{id}}_3$ (Theorem~\ref{thm: d-clique} + Lemma~\ref{lem:reductgraph})
   to $\graphC^{\text{id}}_4$ (and hence $\graphC_4$ by Lemma~\ref{lem:id-reduct}) by a simple
   linear reduction from $\graphK^{\text{id}}_3$ to $\graphC^{\text{id}}_4$ via the class of all
   $\{x_1,x_2,x_3\}$-partitioned graphs.
   Let $\graphH$ be a
   $\{x_1,x_2,x_3\}$-partitioned graph. We define the $\{x_1,x_2,x_3,x_4\}$-partitioned graph
   $\graphH'=\phi(\graphH)$
   by
 $P^{\graphH'}_{x}:= P^{\graphH}_{x}$ for $x\in\{x_1,x_2,x_3\}$,
   $P^{\graphH'}_{x_4}:= \{\widehat{v}\mid v\in P^{\graphH}_{x_1}\}$ and $E(\graphH')=$
 \begin{align*}  &\big\{\{v,\widehat{v}\} \mid  v\in P^{\graphH}_{x_1} \big\} \\
 \cup \,&\big\{\{v,w\}  \mid  v\in P^{\graphH}_{x_1},\, w\in P^{\graphH}_{x_2},\,\{v,w\}\in E(\graphH) \big\} \\
   \cup\, &\big\{\{v,w\}  \mid  v\in P^{\graphH}_{x_2},\, w\in P^{\graphH}_{x_3},\, \{v,w\}\in E(\graphH) \big\} \\
  \cup\, &\big\{\{v,\widehat{w}\}  \mid  v\in P^{\graphH}_{x_3},\, w\in P^{\graphH}_{x_1},\,\{v,w\}\in E(\graphH) \big\}.
   \end{align*}
   Note that the size of $\graphH'$
   is linear in the size of $\graphH$.
   The construction ensures that any mapping
   $h'\colon\{x_1,\ldots,x_4\}\to V(\graphH')$ is a
   homomorphism from $\graphC^{\text{id}}_4$ to $\graphH'$ if, and only if, $h'(x_4) =
   \widehat{h'(x_1)}$ and $h(x_i):=h'(x_i)$, for $i\in[3]$, is a homomorphism from
   $\graphK^{\text{id}}_3$ to $\graphH$. Therefore, $\Hom(\graphK^{\text{id}}_3,\graphH) =
   \pi_{\{x_1,x_2,x_3\}}\Hom(\graphC^{\text{id}}_4,\graphH')$ and a
   representation $C'$ of $\Hom(\graphK^{\text{id}}_3,\graphH)$ can be obtained from a
   representation $C$ of $\Hom(\graphC^{\text{id}}_4,\graphH')$ by
   Lemma~\ref{lem:technical} which, moreover, guarantees that $\|C'\|
   \le \|C\|$. Therefore we do have a linear reduction from
   $\graphC^{\text{id}}_4$ to $\graphK^{\text{id}}_3$. It follows that $\dsize_{\strucC_4}(m) = \Omega(\dsize_{\strucC_4^{\text{id}}, \classC}(m))
  =\Omega(\dsize_{\graphK_3^{\text{id}},\classH}(m)) = \Omega(\dsize_{\graphK_3}(m))=\Omega(m^{3/2}/ \log^7(m))$, where $\classC$ is the class of $V(\graphC_4^{\text{id}})$-partitioned graphs and $\classH$ is the class of $V(\graphK_3^{\text{id}})$-partitioned graphs. The first equality follows by Lemma~\ref{lem:id-reduct}, the second by Lemma~\ref{lem:reduct}, the third by Lemma~\ref{lem:reductgraph} and the last by Theorem~\ref{thm: d-clique}. 
 \end{example}

So to handle edge contractions we take the
 partitioned hard right-hand side instance and
 ``re-introduce'' the edge $\{x,y\}$ contracted to $x$ by copying
 $P_x$ to $P_y$ and adding a perfect matching
 between the two partitions $P_x$ and $P_y$. Handling edge
 deletions is even simpler: suppose that $\{x,y\}$ is deleted from
 $\graphG'$ to $\graphG$ and we want to reduce $\graphG'$ to $\graphG$. Then we take a
 partitioned hard instance for $\graphG$ and just introduce the complete
 bipartite graph between the partitions $P_x$ and $P_y$; this may
 square the size of the graph. Since the sets of (partition-respecting)
 homomorphisms are the same for both instances, we do not even have to
 modify the representations in the reduction.
 The next lemma summarises these findings. Its proof is omitted as it
 is subsumed by Lemma~\ref{lem:almost-minor-red}.

 \begin{lemma}\label{lem:minor-reduction}
   Let $\graphG_X, \graphG_Y$ be graphs with vertex sets $X$ and $Y$ respectively such that $\graphG_X$ is a minor of $\graphG_Y$. Let $\classH$ be the class of all
   $V(\graphG_X)$-partitioned graphs and $\classH'$ the class of all
   $V(\graphG_Y)$-partitioned graphs.
   Then there is a $c$-reduction
   $(\phi,(\psi_\graphH)_{\graphH\in\classH})$ from
   $\graphG_Y^{\text{id}}$ to $\graphG_X^{\text{id}}$
    via $\classH$ with $\phi(\classH)\subseteq \classH'$ and $c(m)=O(m^2)$.
 \end{lemma}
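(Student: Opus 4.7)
The plan is to generalise the construction in Example~\ref{ex: 3to4} so that all three types of minor operation (vertex deletion, edge deletion, edge contraction) are handled simultaneously. I would start from the standard branch-set characterisation of minors: since $\graphG_X$ is a minor of $\graphG_Y$, fix pairwise disjoint connected branch sets $(V_x)_{x\in V(\graphG_X)}$ in $\graphG_Y$ such that for each edge $\{x,x'\}\in E(\graphG_X)$ some edge of $\graphG_Y$ runs between $V_x$ and $V_{x'}$. For each $x$ I fix a spanning tree $T_x$ of the subgraph of $\graphG_Y$ induced by $V_x$ together with a distinguished representative $y_x\in V_x$; for each $\{x,x'\}\in E(\graphG_X)$ I also fix a distinguished cross edge $e_{x,x'}=\{y_{x,x'},y_{x',x}\}\in E(\graphG_Y)$ with $y_{x,x'}\in V_x$ and $y_{x',x}\in V_{x'}$.

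For $\graphH\in\classH$ I would define $\phi(\graphH)\in\classH'$ by putting $P_y^{\phi(\graphH)}:=\{v_y:v\in P_x^{\graphH}\}$ (a labelled copy of $P_x^{\graphH}$) for each $y\in V_x$, and $P_y^{\phi(\graphH)}:=\{*_y\}$ (a single dummy vertex) for every deleted vertex $y\in V(\graphG_Y)\setminus\bigcup_x V_x$. Edges come in three groups. Group (a): for each spanning-tree edge $\{y,y'\}\in T_x$, the perfect matching $\{\{v_y,v_{y'}\}:v\in P_x^{\graphH}\}$, which rigidly propagates the copy index along $T_x$. Group (b): for each distinguished cross edge $e_{x,x'}$, the edges $\{v_{y_{x,x'}},w_{y_{x',x}}\}$ whenever $\{v,w\}\in E(\graphH)$, faithfully recording each $\graphH$-edge across the branch. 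Group (c): for every remaining edge $\{y,y'\}\in E(\graphG_Y)$ (non-tree in-branch edges, edges between branches that are not the distinguished cross edge, and edges incident to a deleted vertex), the entire bipartite graph between $P_y^{\phi(\graphH)}$ and $P_{y'}^{\phi(\graphH)}$, so that these edges impose no constraint. Because $|V(\graphG_Y)|$ and $|E(\graphG_Y)|$ are constants while each class has size at most $|V(\graphH)|+1$, group (c) dominates and yields $\|\phi(\graphH)\|=O(\|\graphH\|^2)$; by construction $\phi(\graphH)\in\classH'$, and $\phi(\classH)$ contains graphs of unbounded size since $\graphH$ may be chosen arbitrarily large.

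The core step is to establish the correspondence
\[
  \Hom(\graphG_X^{\text{id}},\graphH)\;=\;\pi_{\{y_x:x\in V(\graphG_X)\}}\Hom(\graphG_Y^{\text{id}},\phi(\graphH))
\]
after identifying each label $v_{y_x}$ with $v$. In one direction, given $h\in\Hom(\graphG_X^{\text{id}},\graphH)$, set $h'(y):=h(x)_y$ for $y\in V_x$ and $h'(y):=*_y$ for deleted $y$; groups (a)--(c) then directly witness that $h'\in\Hom(\graphG_Y^{\text{id}},\phi(\graphH))$. Conversely, given such an $h'$, the matchings in (a) together with the connectedness of each $T_x$ force $h'$ to act as $y\mapsto v_y$ for a single $v=:h(x)\in P_x^{\graphH}$ throughout each $V_x$, and (b) then guarantees $\{h(x),h(x')\}\in E(\graphH)$ for every edge $\{x,x'\}\in E(\graphG_X)$. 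I would then take $\psi_{\graphH}$ to be the composition of Lemma~\ref{lem:technical}(1)---projection to the coordinates $\{y_x\}_x$, which does not increase circuit size---with the purely syntactic relabelling of input gates $(y_x\mapsto v_{y_x})\mapsto(x\mapsto v)$; this yields a size-preserving map $\dset(\graphG_Y^{\text{id}},\phi(\graphH))\to\dset(\graphG_X^{\text{id}},\graphH)$ and completes the $c$-reduction.

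The main obstacle is the converse direction of the correspondence above: verifying that the tree-matching gadget from group (a), combined with the deterministic propagation across the distinguished cross edges of (b), really does force every homomorphism into $\phi(\graphH)$ to be constant on each branch set. Once this rigidity is in place the remaining conditions of Definition~\ref{def:reduction}---signature consistency, $\phi(\classH)\subseteq\classH'$, the quadratic size bound, unboundedness of $\phi(\classH)$ and $\|\psi_{\graphH}(C)\|\le\|C\|$---are all routine; the quadratic blow-up $c(m)=O(m^2)$ comes entirely from the complete bipartite gadgets of group (c), while groups (a) and (b) contribute only a linear number of edges.
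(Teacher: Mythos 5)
Your proposal is correct, and in substance it is the same construction the paper uses: the paper omits a standalone proof of Lemma~\ref{lem:minor-reduction} because it is the special case $|M(y)|=1$ of Lemma~\ref{lem:almost-minor-red}, whose gadget is exactly your ``copy the partition $P_x$ along the branch set, enforce rigidity by perfect matchings, record $E(\graphH)$ on cross edges between adjacent branches, and fill everything unconstrained with complete bipartite graphs,'' followed by the same $\psi_\graphH$ (Lemma~\ref{lem:technical} projection onto one representative $y_x$ per branch, then relabelling), giving $\|\psi_\graphH(C)\|\le\|C\|$ and $c(m)=O(m^2)$. The differences are cosmetic but worth noting. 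First, you place matchings only on spanning-tree edges of each branch and keep $\graphH$-edges only on one distinguished cross edge per edge of $\graphG_X$, relegating all other edges of $\graphG_Y$ to complete bipartite gadgets; the paper's construction instead uses the matching for every intra-branch adjacency and the $\graphH$-edge relation for every cross adjacency with adjacent images. Both are sound, since constraints are conjunctive: your rigidity argument along the spanning tree is literally the paper's path-induction inside $Y_x$, and one constrained cross edge per $\graphG_X$-edge suffices for the converse direction. Second, your explicit dummy singleton classes $\{*_y\}$ for vertices of $\graphG_Y$ outside all branch sets handle vertex deletions directly, whereas the paper's almost-minor map must assign a nonempty $M(y)$ to every $y\in Y$, so deriving the minor case formally requires first absorbing uncovered vertices into adjacent branch sets (using connectedness of $\graphG_Y$) — a step the paper leaves implicit and which your route avoids. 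One presentational caution: the $\psi_\graphH$ you build maps $\dset(\graphG_Y^{\text{id}},\phi(\graphH))\to\dset(\graphG_X^{\text{id}},\graphH)$, which is the direction actually needed to transfer the clique/grid lower bound upward via Lemma~\ref{lem:reduct}, and it matches what the paper's proof of Lemma~\ref{lem:almost-minor-red} does; just be aware that the ``from $\dots$ to $\dots$'' phrasing in Definition~\ref{def:reduction} versus the lemma statements is used somewhat loosely in the paper, so state explicitly (as you did) which hom-set each circuit represents.
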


 This yields together with Lemmas~\ref{lem:reduct}, \ref{lem:id-reduct} and \ref{lem:reductgraph} along with Theorem~\ref{thm: d-clique} the following corollary.

 \begin{corollary}\label{cor:Kk-minor}
   If $\graphG$ has $\graphK_k$ as a minor, then $\dsub{\graphG}=\Omega(m^{k/4}/\log^{(3k-1)/2}(m))$.
 \end{corollary}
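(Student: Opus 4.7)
The plan is to chain Theorem \ref{thm: d-clique} together with the reductions developed in Section \ref{ss: reduct} and Lemma \ref{lem:minor-reduction}. Since $\graphK_k$ is a minor of $\graphG$, I would apply Lemma \ref{lem:minor-reduction} with $\graphG_X = \graphK_k$ and $\graphG_Y = \graphG$, obtaining a $c$-reduction from $\graphK_k^{\text{id}}$ to $\graphG^{\text{id}}$ via the class $\classH$ of all $V(\graphK_k)$-partitioned graphs, with $c(m) = O(m^2)$ and $\phi(\classH)\subseteq \classH'$, where $\classH'$ is the class of all $V(\graphG)$-partitioned graphs.

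Applying Lemma \ref{lem:reduct} to this reduction yields $\dsub{\graphG^{\text{id}},\phi(\classH)}(m) = \Omega\bigl(\dsub{\graphK_k^{\text{id}},\classH}(\lfloor c^{-1}(m)\rfloor)\bigr)$, and since $c(m)=O(m^2)$ gives $c^{-1}(m)=\Omega(\sqrt m)$, the right-hand side is $\Omega\bigl(\dsub{\graphK_k^{\text{id}},\classH}(\sqrt m)\bigr)$. Lemma \ref{lem:id-reduct} then lifts any such lower bound over a class of partitioned structures to a lower bound on $\dsub{\graphG}$ itself: because $\phi(\classH)$ is contained in the partitioned class that Lemma \ref{lem:id-reduct} quantifies over, this step gives $\dsub{\graphG}(m) = \Omega(\dsub{\graphG^{\text{id}},\phi(\classH)}(m))$.

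For the clique end of the chain, Lemma \ref{lem:reductgraph} applied to $\graphK_k$ gives $\dsub{\graphK_k^{\text{id}},\classH}(m) = \Omega(\dsub{\graphK_k}(m))$, and Theorem \ref{thm: d-clique} supplies the base bound $\dsub{\graphK_k}(m) = \Omega(m^{k/2}/\log^{3k-1}(m))$. After the substitution $m\mapsto \Omega(\sqrt m)$, together with $\log(\sqrt m)=\Theta(\log m)$, the exponent on $m$ collapses from $k/2$ to $k/4$, while the polylogarithmic factor is preserved up to constants. Concatenating all four inequalities yields the claimed bound.

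There is no real obstacle beyond careful bookkeeping of which class of right-hand-side structures is in play at each step: one must verify that the image $\phi(\classH)$ produced by the minor reduction lies inside the partitioned class over which Lemma \ref{lem:id-reduct} is stated, and that $\classH$ is the specific class of $V(\graphK_k)$-partitioned graphs needed to apply Lemma \ref{lem:reductgraph} to $\graphK_k$. Both compatibilities follow directly from the definitions of the classes involved.
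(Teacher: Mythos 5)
Your chain is exactly the paper's intended argument (the paper gives no explicit proof of the corollary, it simply cites Lemmas~\ref{lem:reduct}, \ref{lem:id-reduct}, \ref{lem:reductgraph}, \ref{lem:minor-reduction} and Theorem~\ref{thm: d-clique}), and your bookkeeping of the classes is right: $\phi(\classH)$ consists of $V(\graphG)$-partitioned graphs, hence lies inside the partitioned class over which Lemma~\ref{lem:id-reduct} takes the maximum, and $\classH$ is precisely the class to which Lemma~\ref{lem:reductgraph} applies for $\graphK_k$.

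The one point where your write-up does not deliver the statement as printed is the polylogarithmic factor. Composing the clique bound $g(m)=\Omega\bigl(m^{k/2}/\log^{3k-1}(m)\bigr)$ with $c^{-1}(m)=\Theta(\sqrt m)$ gives $g\bigl(\Theta(\sqrt m)\bigr)=\Omega\bigl(m^{k/4}/\log^{3k-1}(m)\bigr)$, since $\log\sqrt m=\tfrac12\log m$ only changes constants; it does not give $\Omega\bigl(m^{k/4}/\log^{(3k-1)/2}(m)\bigr)$, which is what the corollary asserts and which corresponds to taking the square root of the whole expression $g(m)$ rather than evaluating $g$ at $\sqrt m$. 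So your observation that the log factor ``is preserved up to constants'' is correct, but then your last sentence, claiming this ``yields the claimed bound,'' is not: what you have proved is the slightly weaker $\Omega\bigl(m^{k/4}/\log^{3k-1}(m)\bigr)$. This looseness is arguably inherited from the paper itself (its main-theorem computation likewise halves the logarithmic exponents, and only the polynomial exponent matters for Theorem~\ref{thm:maintheorem}), but as a standalone proof you should either state the corollary with $\log^{3k-1}(m)$ or argue more. For instance, the hard instances behind Theorem~\ref{thm: d-clique} are dense ($m'=\Theta(n^2)$), so for these instances the complete bipartite graphs inserted by the minor reduction between partitions of size $n$ contribute only $O(m')$ edges, making the blow-up linear rather than quadratic and in fact recovering exponent $k/2$ on $m$; but this requires opening up Lemma~\ref{lem:minor-reduction} (restricting $\classH$ to the dense hard instances) instead of using the $c(m)=O(m^2)$ reduction as a black box.
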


\subsection{Relaxation of the minor condition} \label{ss: am}

Every graph having $\graphK_k$ as a minor has treewidth at least $k-1$, so
Corollary~\ref{cor:Kk-minor} provides the desired lower bound of Theorem~\ref{thm:maintheorem} for certain
large-treewidth graphs. However, there are graphs of large treewidth
that do not have a large clique as a minor. Instead, the excluded grid
theorem \cite{Robertson1986} and its more efficient version \cite{DBLP:journals/jacm/ChekuriC16} tells us that
graphs of large treewidth always have a large $k\times k$-grid as a
minor.
  \begin{theorem}[\cite{DBLP:journals/jacm/ChekuriC16}] \label{thm: EG}
    There is a polynomial function $w:\mathbb N\to\mathbb N$ such that for every $k$ the $(k \times k
    )$-grid is a minor of every graph of treewidth at least $w(k)$. 
  \end{theorem}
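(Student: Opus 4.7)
The statement is the polynomial-bound excluded grid theorem of Chekuri and Chuzhoy, which is cited as a black box in this paper rather than reproved. A genuine self-contained proof is far beyond the scope of a sketch, so what follows is only the high-level architecture one would follow rather than a proof plan one would actually execute in this manuscript.

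The plan is to pass through the standard equivalence between treewidth and well-linkedness. First I would use the characterisation that a graph of treewidth at least $k$ contains a bramble of order roughly $k$, or equivalently a well-linked set $T$ of vertices of size polynomial in $k$: any two subsets of $T$ of equal size can be linked by vertex-disjoint paths. This step is classical and loses only a constant factor.

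From a large well-linked set one has to extract a grid. The engine here is the cut-matching game: using polynomial-time flow-cut arguments, one iteratively builds an expander on a subset $T'\subseteq T$, which certifies that $T'$ is very well-linked in the sense that every not-too-large bipartition of $T'$ can be routed through many internally disjoint paths. From such a set one then constructs a \emph{crossbar}: two collections of vertex-disjoint paths $\mathcal P_1$ and $\mathcal P_2$ of comparable size such that every path of $\mathcal P_1$ crosses every path of $\mathcal P_2$. Finally, from a crossbar of order polynomial in $k$ one carves out a $(k\times k)$-grid minor by greedily selecting crossings, contracting sub-paths between the chosen crossings into single edges, and deleting everything else; a routing/pigeonhole argument ensures that enough clean, non-interfering crossings survive.

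The principal obstacle, and what distinguishes the Chekuri–Chuzhoy bound from the original Robertson–Seymour argument, is keeping the size of the well-linked set one needs to start with polynomial in $k$. The earlier proofs blow up exponentially at the crossbar extraction step; the cut-matching game together with an amortised analysis of the iterative routing procedure is what drives the loss down to a fixed polynomial. Since we only need the existence of some polynomial $w$, the exact exponent produced by this machinery is immaterial for the applications in Section~\ref{s: dichotomy}, where Theorem~\ref{thm: EG} will be combined with Corollary~\ref{cor:Kk-minor} via a minor reduction from $\graphG_k$ to the graphs of large treewidth.
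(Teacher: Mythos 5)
The paper does not prove this statement at all: it is imported verbatim as a black box from Chekuri and Chuzhoy, exactly as you treat it, so there is nothing in the manuscript to compare your sketch against. Your high-level outline of the well-linkedness/cut-matching/crossbar architecture is a fair description of the cited proof, and your observation that only the existence of \emph{some} polynomial $w$ matters for the reductions in Section~\ref{s: dichotomy} is the correct way this theorem is used here.
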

Thus, in order to prove Theorem~\ref{thm:maintheorem} it suffices to combine
 Lemma~\ref{lem:minor-reduction} with
a lower bound for grid graphs. We cannot reduce immediately to our $k$-clique lower bound, as the grid does not have a $\graphK_k$
minor for $k\geq 5$. However, the complete graph $\graphK_k$
is ``almost a minor'' of $\graphG_{2k-2}$ for the following notion of
\emph{almost minor} that is good enough to prove a variant of
Lemma~\ref{lem:minor-reduction}.

 \begin{definition} \label{def: am}
   For two graphs $\graphG_X$, $\graphG_Y$ with vertex sets
   $X=V(\graphG_X)$ and $Y=V(\graphG_Y)$ %
   we say that a map
   $M \colon Y \to 2^{X}$ is \emph{almost minor} if the following
   conditions hold:
   \begin{enumerate}
     \item for every $y \in Y$, $|M(y)| \in \{1,2\}$;
   \item for every $x\in X$ there is a $y\in Y$ s.t.\ $M(y)=\{x\}$ 
 and for every $x,x'$ adjacent in $\graphG_X$ there exists $y,y'$ adjacent
        in $\graphG_Y$ such that $M(y) = \{x\}$ and $M(y') = \{x'\}$;
        \item for each $x \in X$, $ \{ y \colon x \in M(y)  \}$ is
          connected in $\graphG_Y$ and 
          \item if $M(y)=\{x,x'\}$ with $ x \neq x'$ and $y'$ is
            adjacent to $y$ in $\graphG_Y$, then $M(y') = \{x\}$ or  $M(y') =
            \{x'\}$.
        \end{enumerate}
       If such a map exists we say $\graphG_X$ is an almost minor of
       $\graphG_Y$.
     \end{definition}
 
For the special case when $|M(y)|=1$ for all $y$, $M$ is a
\emph{minor map} and $G_X$ is a minor of $G_Y$. The motivation for 
this definition is that whilst grids are planar, large cliques are 
not and so we introduce ``junctions'', i.e. nodes $y$ such that 
$M(y) = \{x_1, x_2\}$ which allows $\{v \, \mid \, x_i \in M(v)\}$, 
$i\in \{1,2\}$ to intersect in a controlled way, see 
Figure~\ref{fig: gridcon}. We should also observe here that this 
notion is related to Marx's notion of an \emph{embedding} 
\cite{marx2007}.\footnote{In particular the definition of a 
\emph{depth-2 embedding} can be obtained from our definition of an 
almost minor by the following modifications. First remove clause 
(4). Second replace (2) with the following condition: for every 
$x\in X$ there is a $y\in Y$ s.t.\ $x \in M(y)$ 
 and for every $x,x'$ adjacent in $\graphG_X$ there exists 
 \emph{either} $y,y'$ adjacent
        in $\graphG_Y$ such that $x\in M(y)$ and $x' \in M(y')$ \emph{or} there exists $y$ such that $\{x,x'\} \subseteq M(y)$. If we also remove clause (1) we get the general definition of an embedding.} 
Now we can state our reduction lemma for almost minors, which extends
Lemma~\ref{lem:minor-reduction}.

     \begin{lemma} \label{lem:almost-minor-red}
       Let $\graphG_X$, $\graphG_Y$ be the graphs with vertex sets $X$
       and $Y$, respectively, such that $\graphG_X$ is an almost minor of
       $\graphG_Y$. Let $\classH$ be the class of all
       $X$-partitioned graphs and $\classH'$ be the class of all $Y$-partitioned graphs, then there is a $c$-reduction
       $(\phi,(\psi_\graphH)_{\graphH \in \classH})$ from 
       $\graphG^{\text{id}}_Y$ to $\graphG^{\text{id}}_X$ 
        via $\classH$ with $\phi(\classH) \subseteq \classH'$ and
       $c = O(m^2).$
       \end{lemma}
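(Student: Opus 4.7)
The plan is to build $\phi(\graphH)$ by \emph{unfolding} $\graphH$ along the almost-minor map $M$, exhibit a natural bijection between $\Hom(\graphG_X^{\text{id}},\graphH)$ and $\Hom(\graphG_Y^{\text{id}},\phi(\graphH))$, and then extract a d-representation of the former from a d-representation of the latter by combining the projection tool from Lemma~\ref{lem:technical}(1) with a relabelling of input gates. The main technical hurdle is handling junctions: the value assigned to a junction $y$ with $M(y)=\{x,x'\}$ must be simultaneously determined by the $x$-branch and the $x'$-branch of the homomorphism in a consistent way, and it is precisely condition~(4)---which forbids two adjacent junctions---that makes this work cleanly.

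\textbf{Construction of $\phi(\graphH)$.} For every $y\in Y$ put
\[
P_y^{\phi(\graphH)} := \begin{cases} P_x^{\graphH}\times\{y\} & \text{if } M(y)=\{x\},\\ P_x^{\graphH}\times P_{x'}^{\graphH}\times\{y\} & \text{if } M(y)=\{x,x'\},\end{cases}
\]
and let $V(\phi(\graphH))$ be the union of these sets. For each $\{y,y'\}\in E(\graphG_Y)$ we add edges according to three cases, which by condition~(4) are exhaustive: \textbf{(a)} if $M(y)=M(y')=\{x\}$, add the matching edges $\{(v,y),(v,y')\}$ for every $v\in P_x^{\graphH}$; \textbf{(b)} if $M(y)=\{x\}$, $M(y')=\{x'\}$, $x\neq x'$, add $\{(v,y),(v',y')\}$ for every $(v,v')\in E(\graphH)$ when $\{x,x'\}\in E(\graphG_X)$ and for every $v\in P_x^{\graphH},v'\in P_{x'}^{\graphH}$ otherwise; \textbf{(c)} if $M(y)=\{x,x'\}$ is a junction and $M(y')=\{x\}$ (or symmetrically $\{x'\}$), add $\{(v_x,v_{x'},y),(v_x,y')\}$ (resp.\ $\{(v_x,v_{x'},y),(v_{x'},y')\}$) for all $v_x\in P_x^{\graphH}$, $v_{x'}\in P_{x'}^{\graphH}$. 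Since $|E(\graphG_Y)|$ is a fixed constant and each partition has size at most $m^2$ with $m:=\|\graphH\|$, we obtain $\|\phi(\graphH)\|=O(m^2)$, and $\phi(\graphH)\in\classH'$ by construction.

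\textbf{Bijection.} Define $\Phi\colon\Hom(\graphG_X^{\text{id}},\graphH)\to\Hom(\graphG_Y^{\text{id}},\phi(\graphH))$ by $\Phi(h)(y):=(h(x),y)$ if $M(y)=\{x\}$ and $\Phi(h)(y):=(h(x),h(x'),y)$ if $M(y)=\{x,x'\}$; a routine case check against \textbf{(a)}--\textbf{(c)} verifies that $\Phi(h)$ is a homomorphism. Conversely, given $h'\in\Hom(\graphG_Y^{\text{id}},\phi(\graphH))$, the matching edges~\textbf{(a)} propagate the $x$-coordinate of $h'$ along singleton-$x$ vertices while the junction edges~\textbf{(c)} propagate it through junctions containing $x$; by the connectedness of the $x$-branch (condition~(3)) there is a single value $h(x)$ equal to the $x$-component of $h'(y)$ for every $y$ with $x\in M(y)$. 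The edges~\textbf{(b)} corresponding to $\graphG_X$-edges, which exist by condition~(2), then force $\{h(x),h(x')\}\in E(\graphH)$ for every $\{x,x'\}\in E(\graphG_X)$, so $h\in\Hom(\graphG_X^{\text{id}},\graphH)$ and $\Phi(h)=h'$.

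\textbf{Extracting $\psi_{\graphH}(C)$.} By condition~(2), pick for each $x\in X$ some $\bar y(x)\in Y$ with $M(\bar y(x))=\{x\}$ and set $X':=\{\bar y(x)\mid x\in X\}\subseteq Y$. Given $C\in\dset(\graphG_Y^{\text{id}},\phi(\graphH))$, apply Lemma~\ref{lem:technical}(1) to obtain a circuit $C_1$ with $S_{C_1}=\pi_{X'}\Hom(\graphG_Y^{\text{id}},\phi(\graphH))$ and $\|C_1\|\le\|C\|$. By the bijection above, $\pi_{X'}\Phi(h)(\bar y(x))=(h(x),\bar y(x))$ for every $h$, so relabelling each input gate of $C_1$ of the form $(\bar y(x)\mapsto(v,\bar y(x)))$ as $(x\mapsto v)$ produces $\psi_{\graphH}(C)\in\dset(\graphG_X^{\text{id}},\graphH)$ with $\|\psi_{\graphH}(C)\|\le\|C\|$. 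Because $\classH$ contains $X$-partitioned graphs with arbitrarily large partitions and $\|\phi(\graphH)\|\geq|V(\graphH)|$, the image $\phi(\classH)\subseteq\classH'$ contains arbitrarily large structures, so all three conditions of Definition~\ref{def:reduction} are met with $c(m)=O(m^2)$.
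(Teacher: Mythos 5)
Your proposal is correct and follows essentially the same route as the paper: the same unfolding construction of $\phi(\graphH)$ (matching edges between copies of a singleton branch, edge-copies or complete bipartite graphs between distinct singletons, and product vertices at junctions), the same bijection between $\Hom(\graphG_X^{\text{id}},\graphH)$ and $\Hom(\graphG_Y^{\text{id}},\phi(\graphH))$ proved by propagating values along the connected sets $Y_x$ using conditions (3) and (4), and the same extraction of $\psi_\graphH$ via Lemma~\ref{lem:technical}(1) followed by relabelling of input gates at chosen representatives $y_x$. The only cosmetic differences are that you merge the paper's two singleton--singleton edge cases into one and explicitly verify that the image of $\phi$ contains arbitrarily large structures.
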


\begin{proof}
  We start by defining the $Y$-partitioned graph $\graphH^\ast=\phi(\graphH)$ for
  an arbitrary $X$-partitioned graph $\graphH$. To define the
  partitions, we consider two cases: if $M(y)=\{x\}$, we let
  $P^{\graphH^\ast}_y := \{v_a^{y} \,\mid \, a\in P^{\graphH}_x\}$ and if
  $M(y)=\{x,x'\}$, then $P^{\graphH^\ast}_y := \{v_{\{a,b\}}^{y}\, \mid \, a\in
  P^{\graphH}_x, b\in P^{\graphH}_{x'}\}$. 
  For every edge $\{y,y'\}\in E(\graphG_Y)$ we define the edge set
  $E_{\{y,y'\}}$ between the partitions $P^{\graphH^\ast}_y$ and
  $P^{\graphH^\ast}_{y'}$ by the following exhaustive cases:
  \begin{enumerate}
  \item\label{it:Edef1} if $M(y)=M(y')=\{x\}$: $E_{\{y,y'\}}:= \big\{\{v_a^{y},v_a^{y'}\} \, \mid \, a\in P^{\graphH}_x\big\}$
  \item\label{it:Edef2} if $M(y)=\{x\}$, $M(y')=\{x'\}$, and $\{x,x'\} \in
    E(\graphG_X)$:\\
    $E_{\{y,y'\}}:= \big\{\{v_a^{y},v_b^{y'}\} \, \mid \,
    a\in P^{\graphH}_x, b\in P^{\graphH}_{x'}, \{a,b\}\in E(\graphH)\big\}$
  \item\label{it:Edef3} if $M(y)=\{x\}$, $M(y')=\{x'\}$, $x\neq x'$, and $\{x,x'\} \notin
    E(\graphG_X)$: \\
    $E_{\{y,y'\}}:= \big\{\{v_a^{y},v_b^{y'}\}\, \mid \,
    a\in P^{\graphH}_x, \, b\in P^{\graphH}_{x'}\big\}$
  \item\label{it:Edef4} if $M(y)=\{x\}$ and $M(y')=\{x,x'\}$: $E_{\{y,y'\}}:=
    \big\{\{v_a^{y},v_{\{a,b\}}^{y'}\}\, \mid \, a\in P^{\graphH}_x, \, b\in
    P^{\graphH}_{x'}\big\}$
  \end{enumerate}
  Finally, we set $E(\graphH^\ast):= \bigcup_{e\in E(\graphG_Y)}E_e$
  and note that $\|\graphH^\ast\| = O(\|\graphH\|^2)$.
  For every homomorphism $h$ from
  $\graphG^{\text{id}}_X$ to $\graphH$ we define the mapping
  $h^\ast\colon Y\to V(\graphH^\ast)$ by
  $$
  h^\ast(y) :=
  \begin{cases}
    v^y_{h(x)},&\text{if }M(y)=\{x\}\\
    v^y_{\{h(x),h(x')\}},&\text{if }M(y)=\{x,x'\}
  \end{cases}
  $$
  The next claim provides the key property of our construction: 
  $h^\ast$ is a homomorphism from $\graphG^{\text{id}}_Y$ to
  $\graphH^\ast$ and every homomorphism from $\graphG^{\text{id}}_Y$ to
  $\graphH^\ast$ has this form\confORfull{, see the full version of this paper for a proof.}{.}{, see Appendix~\ref{ap: am} for a proof.}
  
  \begin{claim}\label{claim:HomSetsEqual}
    $\Hom(\graphG^{\text{id}}_Y,\graphH^\ast) = \{h^\ast\colon h\in \Hom(\graphG^{\text{id}}_X,\graphH)\}$
  \end{claim}
\confORfull{}{%
 \begin{proof}[Proof of Claim~\ref{claim:HomSetsEqual}]
    For the $\supseteq$-direction let $h\in
    \Hom(\graphG^{\text{id}}_X,\graphH)$. In order to
    verify that  $h^\ast$ is a homomorphism from
    $\graphG^{\text{id}}_Y$ to $\graphH^\ast$ we let $\{y,y'\}\in E(\graphG^{\text{id}}_Y)$ and
    need to check that $\{h^\ast(y),h^\ast(y')\}$ is an edge in 
    $\graphH^\ast$. By checking all cases in the definition of
    $E_{\{y,y'\}}$ it immediately follows from the definition of
    $h^\ast$ that   $\{h^\ast(y),h^\ast(y')\}\in E_{\{y,y'\}}\subseteq
    E(\graphH^\ast)$, completing one direction of the claim.

    For the $\subseteq$-direction, let $\hat h\in
    \Hom(\graphG^{\text{id}}_Y,\graphH^\ast)$. We define $Y_x := \{y\in Y \, \mid \,
    x\in M(y)\}$ for every $x\in X$ and note that this set is
    connected in $\graphG_Y$ by the definition of an almost minor
    map. We claim that there is a unique $a_x\in P_x^{\graphH}$ such that for any $y$ with $M(y)=\{x\}$, we have
    $\hat{h}(y)=v^y_{a_x}$. To see this, fix  some $y, y' \in Y_x$ and
    consider a path $y=y_0,\ldots,y_r=y'$ through the vertices in
    $Y_x$. Suppose that $\hat{h}(y_0)=v^{y_0}_{a}$, since $\hat{h}$ is a homomorphism $v^{y_0}_{a} \in P_Y^{\graphH^{\ast}}$, which implies that $a \in P_x^{\graphH}$.  By induction over the path
   we see that either $\hat{h}(y_i)=v^{y_i}_{a}$ or $\hat{h}(y_i)=v^{y_i}_{\{a,b\}}$ for
    some $b\notin P^{\graphH}_x$. This holds, since
    $\hat{h}(y_0),\ldots,\hat{h}(y_r)$ is a path in $\graphH^\ast$---as $\hat{h} \in \Hom(\graphG^{\text{id}}_Y,\graphH^\ast)$---and by points
    (\ref{it:Edef1}) and (\ref{it:Edef4}) in the definition of $E(\graphH^{\ast})$. Similarly if $M(y) = \{x, x'\}$, then $\hat{h}(y) = v^{y}_{\{a_x, a_{x'}\}}$. 
    
    Now we define $h\colon X\to V(\graphH)$ by $h(x)=a_x$ for all
    $x\in X$ and show
    that $h\in\Hom(\graphG_X^{\text{id}},\graphH)$. To see this, suppose that
    $\{x,x'\}$ is an edge in $\graphG_X$. By the definition of an
    almost minor map, there is an edge $\{y,y'\}\in E(\graphG_Y)$ with
    $M(y)=\{x\}$ and $M(y')=\{x'\}$. Therefore, $\hat{h}(y)=v^y_{a_x}$
    and $\hat{h}(y)=v^{y'}_{a_{x'}}$ are adjacent in
    $\graphH^\ast$. By point (\ref{it:Edef2}) in the definition of the
    edge set, it follows that $\{a_x,a_{x'}\}\in E(\graphH)$.
    Noting that $h^\ast = \hat{h}$ finishes the proof of
    the claim. 
  \end{proof} %
}

  We finish the lemma by
  defining the mapping $\psi_{\graphH}$ that transforms any
  d-representation for $\Hom(\graphG^{\text{id}}_X,\graphH)$ into a
  d-representation for $\Hom(\graphG^{\text{id}}_Y,\graphH^\ast)$. For
  each $x\in X$ we fix one $y_x\in Y$ such that $M(y_x)=\{x\}$ (those
  vertices exist by the definition of an almost minor map).
  Then we
  apply Lemma~\ref{lem:technical}
  and obtain a d-representation of $\pi_{\{y_x\colon x\in
    X\}}\Hom(\graphG^{\text{id}}_Y,\graphH^\ast)$. After renaming
  every $y_x$ to $x$ and every $v^y_a$ to $a$ in the input labels of
  this circuit, we get a d-representation of $\Hom(\graphG^{\text{id}}_X,\graphH)$.
\end{proof}

With the following lemma\confORfull{}{}{, whose proof is deferred to  Appendix~\ref{ap: am},} we have everything in hand to proof our main theorem. 

\begin{lemma} \label{lem:gridcon}
  For every $k$, $\graphK_k$ is an almost minor of $\graphG_{2k-2}$. 
\end{lemma}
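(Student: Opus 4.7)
The plan is to construct the almost-minor map $M\colon V(\graphG_{2k-2}) \to 2^{V(\graphK_k)}$ explicitly and verify each of the four conditions of Definition~\ref{def: am}.

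I would proceed by induction on $k$ using a strengthened hypothesis: the almost-minor embedding of $\graphK_{k-1}$ in $\graphG_{2k-4}$ has every cell on the outer boundary (row $2k-4$ and column $2k-4$) mapped to a singleton, and for each vertex $a \in [k-1]$ at least one such boundary cell $y$ satisfies $M(y) = \{a\}$. The base case $k = 2$ is witnessed by the trivial $2 \times 2$ embedding $M((i,1)) = \{1\}$, $M((i,2)) = \{2\}$, which satisfies both the four conditions and the strengthened hypothesis. For the inductive step, the embedding of $\graphK_{k-1}$ is placed in the top-left $(2k-4) \times (2k-4)$ subgrid of $\graphG_{2k-2}$, and the outer strip of two new rows and two new columns is used to introduce the new wire for vertex $k$ and to establish its adjacency to each existing wire. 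Concretely, wire $k$ is placed along most of the outermost row and column (forming a connected L-shape), while each existing wire $a \in [k-1]$ is extended from one of its boundary singletons into the buffer via a chain of fresh singletons. At the interface with wire $k$, an adjacent pair of singletons of wires $a$ and $k$ witnesses the new edge $\{a,k\}$. A few cells of the outermost row and column must be carved out as singletons of the extended old wires (rather than wire $k$) to maintain the strengthened hypothesis for the next induction step; wire $k$ then routes around these carve-outs through the corner of the L-shape to stay connected.

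The main technical obstacle is satisfying condition~(4): every junction $y$ with $M(y) = \{x, x'\}$ can only have neighbors $y'$ with $M(y') \in \{\{x\}, \{x'\}\}$. In particular, junctions cannot be adjacent to other junctions involving a different pair, nor to singletons of a third wire. Because the construction above introduces no new junctions---the only junctions are those inherited from the inductive sub-embedding---condition~(4) is preserved, provided the buffer extensions are routed so as to avoid the immediate neighbourhood of any existing junction; this is easily arranged given the space in the two new rows/columns. Conditions~(1) and~(3) are then immediate: every cell is mapped to a subset of size $1$ or $2$ by construction, the old wire extensions are contiguous, and wire $k$'s region is a connected L-shape. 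Condition~(2) follows from the adjacency-witnessing design: old edges inherit their witnesses from the inductive sub-embedding, and new edges $\{a,k\}$ are witnessed at the buffer--outer-strip interface.
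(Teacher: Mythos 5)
There is a fatal gap, and it is located exactly where you claim things come for free: the assertion that ``the construction above introduces no new junctions.'' Your base case $k=2$ uses only singletons, and your inductive step, by your own description, only inherits junctions from the sub-embedding; hence, by induction, the map $M$ you build satisfies $|M(y)|=1$ for every $y$. But, as the paper notes right after Definition~\ref{def: am}, an almost-minor map with all singletons is a minor map, so your argument would show that $\graphK_k$ is a \emph{minor} of $\graphG_{2k-2}$ for every $k$. This is false for $k\ge 5$: grids are planar, minors of planar graphs are planar, and $\graphK_5$ is not planar. So the routing you describe cannot exist once $k\ge 5$, and the fact that condition~(4) is ``automatically preserved'' is a symptom of the flaw rather than a convenience. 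The junctions are not an obstacle to be routed around; they are the entire point of the almost-minor notion --- they are what allows the $k-1$ old wires and the new wire to cross in a planar host. The paper's proof creates them explicitly and non-inductively: in the lower-triangular part of the grid, every cell with both coordinates even is a junction $\{u_{(i+2)/2},u_{j/2}\}$, letting the ``row wire'' of one clique vertex pass through the ``column wire'' of another, and conditions (1)--(4) are then verified directly by case analysis.

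Even viewed locally, the inductive step does not go through with the invariant you state. To restore your strengthened hypothesis, every old wire must own a cell of the new outermost row/column, so its extension must cross the width-two buffer strip from the old boundary to the new one. Each such crossing separates this simply connected width-two strip, and wire $k$, being connected and disjoint from all old wires (no new junctions), is confined to the region between two consecutive crossings; at most the two crossings bounding that region can be adjacent to wire $k$. Since your hypothesis supplies each old wire with only a single boundary singleton, that one access point would have to serve both as the start of a boundary-reaching crossing and as the place where adjacency to wire $k$ is witnessed, which is possible for at most a bounded number of the $k-1$ old wires; routing the others would force their extensions to cross one another inside the two-cell-wide buffer, i.e., would force exactly the junctions you exclude. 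Any repair along inductive lines would need a much stronger boundary invariant and, unavoidably for $k\ge5$, a mechanism for introducing and verifying new junctions against condition~(4); at that point the explicit global construction of the paper (with its staircase of junctions at even-even cells) is simpler to state and to check.
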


\confORfull{\input{short_grid_proof}}{%
\begin{figure}[h]
\centering
\scalebox{0.7}{
 \begin{tikzpicture}[thick,
  every node/.style={draw,circle,minimum size=1cm, fill=black},
   ssnode/.style={fill=black},
   ]
\begin{scope}[xshift=-5cm, yshift=0cm, start chain=going below,node
  distance=8mm]
  \node[ssnode, on chain] (11) [label={[text=white] center: $1$}] {};
    \node[ssnode, on chain] (12) [label={[text=white] center: $2$}]
    {};
      \node[ssnode, on chain] (13) [label={[text=white] center: $1$}]
      {};
        \node[ssnode, on chain] (14) [label={[text=white] center: $3$}]
        {};
          \node[ssnode, on chain] (15) [label={[text=white] center:
            $1$}] {};
            \node[ssnode, on chain] (16) [label={[text=white] center: $4$}] {};
 
        \end{scope}
  
\begin{scope}[xshift=-3cm,yshift=0cm,start chain=going below,node distance=8mm]
  \node[ssnode, on chain] (21) [label={[text=white] center: $1$}] {};
    \node[ssnode, on chain] (22) [label={[text=white] center: $1, 2$}]
    {};
      \node[ssnode, on chain] (23) [label={[text=white] center: $1$}]
      {};
        \node[ssnode, on chain] (24) [label={[text=white] center: $1,3$}]
        {};
          \node[ssnode, on chain] (25) [label={[text=white] center:
            $1$}] {};
            \node[ssnode, on chain] (26) [label={[text=white] center: $1,4$}] {};
\end{scope}

\begin{scope}[xshift=-1cm,yshift=0cm,start chain=going below,node
  distance=8mm]
\node[ssnode, on chain] (31) [label={[text=white] center: $1$}] {};
    \node[ssnode, on chain] (32) [label={[text=white] center: $2$}]
    {};
      \node[ssnode, on chain] (33) [label={[text=white] center: $2$}]
      {};
        \node[ssnode, on chain] (34) [label={[text=white] center: $3$}]
        {};
          \node[ssnode, on chain] (35) [label={[text=white] center:
            $2$}] {};
            \node[ssnode, on chain] (36) [label={[text=white] center: $4$}] {};
\end{scope}

\begin{scope}[xshift=1cm,yshift=0cm,start chain=going below,node distance=8mm]
\node[ssnode, on chain] (41) [label={[text=white] center: $1$}] {};
    \node[ssnode, on chain] (42) [label={[text=white] center: $1$}]
    {};
      \node[ssnode, on chain] (43) [label={[text=white] center: $2$}]
      {};
        \node[ssnode, on chain] (44) [label={[text=white] center: $2,3$}]
        {};
          \node[ssnode, on chain] (45) [label={[text=white] center:
            $2$}] {};
            \node[ssnode, on chain] (46) [label={[text=white] center: $2,4$}] {};\end{scope}

\begin{scope}[xshift=3cm,yshift=0cm,start chain=going below,node distance=8mm]
\node[ssnode, on chain] (51) [label={[text=white] center: $1$}] {};
    \node[ssnode, on chain] (52) [label={[text=white] center: $1$}]
    {};
      \node[ssnode, on chain] (53) [label={[text=white] center: $1$}]
      {};
        \node[ssnode, on chain] (54) [label={[text=white] center: $3$}]
        {};
          \node[ssnode, on chain] (55) [label={[text=white] center:
            $3$}] {};
            \node[ssnode, on chain] (56) [label={[text=white] center: $4$}] {};
\end{scope}

\begin{scope}[xshift=5cm,yshift=0cm,start chain=going below,node distance=8mm]
\node[ssnode, on chain] (61) [label={[text=white] center: $1$}] {};
    \node[ssnode, on chain] (62) [label={[text=white] center: $1$}]
    {};
      \node[ssnode, on chain] (63) [label={[text=white] center: $1$}]
      {};
        \node[ssnode, on chain] (64) [label={[text=white] center: $1$}]
        {};
          \node[ssnode, on chain] (65) [label={[text=white] center:
            $3$}] {};
            \node[ssnode, on chain] (66) [label={[text=white] center: $3,4$}] {};
\end{scope}

\foreach \i in {1,2,...,6}
\draw (\i 1) -- (\i 2);

\foreach \i in {1,2,...,6}
\draw (\i 2) -- (\i 3);

\foreach \i in {1,2,...,6}
\draw (\i 3) -- (\i 4);

\foreach \i in {1,2,...,6}
\draw (\i 4) -- (\i 5);

\foreach \i in {1,2,...,6}
\draw (\i 5) -- (\i 6);

\foreach \i in {1,2,...,6}
\draw (1\i) -- (2\i);

\foreach \i in {1,2,...,6}
\draw (2\i) -- (3\i);

\foreach \i in {1,2,...,6}
\draw (3\i) -- (4\i);

\foreach \i in {1,2,...,6}
\draw (4\i) -- (5\i);

\foreach \i in {1,2,...,6}
\draw (5\i) -- (6\i);

\end{tikzpicture}
}
\captionsetup{width=0.5\textwidth}
\caption{Construction from Lemma~\ref{lem:gridcon} for the case
  $k=4$. The node in the $i$th row and $j$th column is labelled by the
  $\{a \, \mid \, u_a \in M(v_{i,j})\}$.
   } \label{fig: gridcon}
\end{figure}

\begin{proof}[Proof of Lemma~\ref{lem:gridcon}]
Define 
\begin{align*}
X :=& V(\graphK_k) = \{u_i \, \mid \, i \in [k]\},\\ 
Y :=& V(\graphG_{2k-2})=\{v_{i,j} \, \mid \, i,j \in [2k-2]\},
\end{align*}
where $v_{i,j}$ is the vertex in the $i$th row and $j$th column of the grid and a map $M \colon Y \to 2^{X}$ as follows:
  \begin{enumerate}
  \item if $j-1>i$, $M(v_{i,j}) = \{u_1\}$,
  \item otherwise if $i \geq j-1$ then:
    \begin{enumerate}
  \item if $i$ and $j$ are both odd, $M(v_{i,j}) = \{u_{(j+1)/2}\} $,
    \item if $i$ is odd and is $j$ even, $M(v_{i,j}) = \{u_{j/2}\} $,
    \item if $i$ is even and $j$ is odd, $M(v_{i,j}) = \{ u_{(i+2)/2} \} $,
    \item if $i$ and $j$ are both even,  $M(v_{i,j}) = \{u_{(i+2)/2}, u_{j/2}\}$.
      \end{enumerate}
    \end{enumerate}

See Figure~\ref{fig: gridcon} for the case $k=4$. We claim this map is almost minor. Condition (1) in the definition is trivial and condition (4) follows directly from the definition of $M$. Moreover the following facts immediately imply that condition (2) holds:
\begin{itemize}
\item $M(v_{2i-2,1}) = \{u_i\}$, for every $i\in [k]\setminus\{1\}$,
\item $M(v_{1,1})=\{u_1\}$,
\item for any $i>j$, $M(v_{2i-2,2j-1}) = \{u_i\}$, $M(v_{2i-3,2j-1}) =\{u_j\}$ and,
\item $(v_{2i-2,2j-1}, v_{2i-3,2j-1}) \in E(\graphG_k)$.
\end{itemize} 

For (3) fix some $i\in [k]\setminus \{1\}$. Note that $u_i \in M(v_{a,b})$ if and
only if $(a,b)$ lies in one of the following
three categories.
\begin{enumerate}
\item $a$ odd and $b=2i-1$ with $a \ge 2i-1$.
\item $b=2i$ and $a \ge 2i-1$.
\item $a=2i-2$ and $2i-1 \ge b$.
\end{enumerate}

Let $T^i_{\alpha} =\{(v_{a,b}) \in Y \, \mid \, 
u_i \in M(v_{a,b}) \text{ and } (a,b) \text{ is in category } \alpha\}.$ Then $M^{-1}(\{u_i\}) = \bigcup_{i=1}^3 T_{\alpha}^{i}$. Clearly both $T_2$ and $T_3$ are 
connected. Furthermore since $v_{(a,2i-1)}$ is adjacent to
$v_{(a,2i)}$ for each odd $a$, $T_1 \cup T_2$ is also connected. Therefore, to prove $M^{-1}(\{u_i\})$ is connected it suffices
to show that there is a path from a vertex in $T_2$ to a
vertex in $T_3.$ $v_{2i-1,2i}, v_{2i-1,2i-1}, v_{2i-2,2i-1}$ is
such a path. Since the extra $(a,b)$ with $u_1 \in M(v_{a,b})$ are those where
$b-1>a$ and as $M(v_{1,b}) = \{u_1\}$ for all $b$, we also get connectivity in
this case. The result follows.
\end{proof} %
}

\begin{proof}[Proof of Theorem~\ref{thm:maintheorem}]

Let $\classA$ have unbounded treewidth.
Then for every $k$ there exists $\strucB_k \in \classA$ of treewidth at least $w(k)$. Then the Gaifman graph of $\strucB_k$, $\graphG_{\strucB_k}$ also has treewidth at least $w(k)$. By 
Theorem~\ref{thm: EG}, $\graphG_{\strucB_k}$ has $\graphG_k$ as a minor. Since by Lemma~\ref{lem:gridcon}, $\graphK_{(k+2/2)}$ is an almost minor of $\graphG_k$ we have:
\begin{align*}
\dsize_{\strucB_k}(m)\;&\underset{\text{\footnotesize (Lemma~\ref{lem:id-reduct})}}{=} 
\Omega\left(\dsize_{\strucB_k^{\text{id}}, \classC}(m)\right) \\ &\underset{\text{\footnotesize (Lemma~\ref{lem:Gaifman-reduct})}}{=}  
\Omega\left((\dsize_{\graphG_{\strucB_k}^{\text{id}}, \classH}(m))^{2/\ar(\strucB_k)}
\right) \\
 &\underset{\text{\footnotesize (Lemma~\ref{lem:minor-reduction})}}{=} 
\Omega\left((\dsize_{\graphG_k^{\text{id}},\classH'}(m))^{1/\ar(\strucB_k)}\right)  
\\ &\underset{\text{\footnotesize (Lemma~\ref{lem:almost-minor-red})}}{=} 
\Omega\left((\dsize_{\graphK_{(k+2)/2}^{\text{id}},\classD}(m))^{1/2\ar(\strucB_k)}\right) \\
&\underset{\text{\footnotesize (Lemma~\ref{lem:reductgraph})}}{=} 
\Omega\left(((\dsize_{\graphK_{(k+2)/2}}(m))^{1/2\ar(\strucB_k)}\right)  
\\ &\underset{\text{\footnotesize (Theorem~\ref{thm: d-clique})}}{=} 
 \Omega\left(m^{(k+2)/4r}/\log^{(3k+2)/2r}(m)\right) 
\end{align*}

Where $\classC$ is the class of $\sigma \cup \sigma_{B_k}$ structures $\strucC$ such that 
$\{P_a^\strucC \, \mid \, a \in B_k\}$ is a partition of the universe, $\classH$ the 
class of $V(\graphG_{\strucB_k}^{\text{id}})$-partitioned graphs, $\classH'$ the class of 
$V(\graphG_k^{\text{id}})$-partitioned graphs and $\classD$ the class of 
$V(\graphK_{(k+2)/2}^{\text{id}})$-partitioned graphs.
From the above we can conclude that (3) implies (1) in the statement of the theorem. Moreover, as discussed in Section~\ref{s: FR}, (1) implies (2) follows from \cite{olteanu2015} and (2) implies (3) trivially.
\end{proof}

\section{Conclusion}

Our main result characterises those bounded-arity classes of
structures $\classA$ where the set of homomorphisms from
$\strucA\in\classA$ to $\strucB$  can be succinctly represented. More
precisely, the known upper bound of
$O(\|\strucA\|^2\|\strucB\|^{\operatorname{tw}(\strucA)+1})$ is matched
by a corresponding lower bound of
$\Omega(\|\strucB\|^{\operatorname{tw}(\strucA)^\varepsilon})$, where
$\operatorname{tw}(\strucA)$ is the tree-width of $\strucA$ and
$\varepsilon>0$ is a constant depending on the excluded grid theorem and
the arity of the signature. A future task would be to further close the gap
between upper and lower bounds.

Another open question is to understand the representation complexity
for all classes of structures $\classA$ (of unbounded arity). As
mentioned in Section~\ref{s: FR}, a polynomial $O(\|\strucA\|^2\cdot\|\strucB\|^{\operatorname{fhtw}(\strucA)})$ upper bound was shown where $\operatorname{fhtw}(\strucA)$ is the fractional hypertreewidth of $\strucA$
\cite{olteanu2015} and one might wonder whether this is tight. At
least this is not the case in a parametrised setting, where a $f(\|\strucA\|)\|\strucB\|^w$ sized representation for some (not necessarily
polynomial-time) computable $f$, is considered \emph{tractable}. It is known that for structures $\strucA$ of \emph{bounded submodular
  width} $\operatorname{subw}(\strucA)$ the homomorphism problem can be decomposed into a (not
necessarily disjoint) union of
$f(\|\strucA\|)$ instances of bounded fractional hypertreewidth
\cite{marx2013,berkholz2020}, leading to a d-representation of size
$f(\|\strucA\|)\|\strucB\|^{\operatorname{subw}(\strucA)}$, \confORfull{see Appendix A in the full version of this paper for details}{see Appendix~\ref{sec: extendedUB}}{see Appendix~\ref{sec: extendedUB}}. Note that submodular width can be strictly
smaller than fractional hypertreewidth \cite{marx2011tractable}.
For a more concrete example in this direction, the
fractional hypertreewidth of $\graphC_4$ is 2, but one
can show that $\Hom(\graphC_4,\graphH)$ has deterministic
d-representations of size $O(\|\graphH\|^{3/2})$---almost matching the
$O(\|\graphH\|^{3/2}/\log^7(\|\graphH\|))$ lower bound in Example~\ref{ex: 3to4}.
Note that while submodular width characterises the FPT-fragment of
deciding the existence of homomorphisms on structures of unbounded
arity \cite{marx2013}, a tight characterisation for the
parameterised counting problem is, despite some recent progress \cite{sharpsub},  still
missing. In particular, it is not clear whether bounded submodular
width implies tractable counting. We may face similar difficulties when
studying the complexity of \emph{deterministic} d-representations that allow
efficient counting. 

In the course of proving our main result we have developed tools and
techniques for proving lower bounds on the size of d-representations,
in particular using our $k$-clique lower bound as a starting point,
defining an appropriate notion of reduction and showing that one can
always get such a reduction if the ``almost minor'' relation
holds. Whilst the proof of the clique lower bound in Section~\ref{s:
  clique} exploits the specific nature of d-representations, we
observe that much of the content of Section~\ref{s: dichotomy} can
easily be used for other forms of representations.
Since we now have understood the limitations of unrestricted
d-representations, it would be good to know whether there are even
more succinct representation formats that still allow efficient
enumeration.

 \vspace{0.5cm}
\textbf{Acknowledgements} Funded by the Deutsche Forschungsgemeinschaft (DFG, German Research Foundation) - project numbers 385256563;
414325841.

\bibliography{literature}

\clearpage
\appendix
\section{Extended Upper Bound Discussion} \label{sec: extendedUB}

In this section we expand the discussion from Section~\ref{s: FR} on upper bounds. After we review the bounds from \cite{olteanu2015} in more detail we highlight a new upper bound on representation size which, whilst it is directly implied by the results in \cite{berkholz2020}, has, to the best of our knowledge, not been observed elsewhere.

We begin by sketching how, given a structure $\strucA$ along with a tree decomposition of 
width $w-1$, we can generate d-reps of $\Hom(\strucA,\strucB)$ of size $O(\|\strucB\|^w)$. 
From the tree-decomposition we can infer information about how different elements of $
\mathcal{A}$ constrain the images of one another under a homomorphism. We may for example, 
learn that two sets of elements $X,Y$ are ``independent'', in the sense that the values a 
homomorphism takes on $X$ does not constrain the values that homomorphism takes on $Y$ 
and vice-versa. Therefore, if we can compute a d-rep $C_1$ of $\Hom(\mathcal{A}|_{X},
\mathcal{B})$ and $C_2$ of $\Hom(\mathcal{A}|_{Y},\mathcal{B})$, then we can construct a d-rep of $\Hom(\mathcal{A}|_{X \cup Y},\mathcal{B})$ by creating a new $\times$ gate and 
connecting the sources of $C_1$ and $C_2$ to this new gate. More generally we can infer a 
`factorisation structure', from which we can generate a d-rep of $\Hom(\mathcal{A},
\mathcal{B})$. To give a bit more detail, from a tree decomposition we can create a tree 
called a d-tree where each node is associated with an element $a \in \mathcal{A}$ and a 
set $\key(a)$ of elements of $\mathcal{A}$, which roughly correspond to all those 
elements lying above $a$ in the tree on which the image of $a$ depends on. Moreover it 
can be arranged that $\key(a)$ is a subset of a bag of our tree decomposition. This is 
what constrains the size of the produced d-rep, as well as the time needed to produce it.

If there is an ancestor of $a$ which is not in $\key(a)$ this process is not guaranteed to produce a f-rep. Therefore the class of d-reps considered in \cite{olteanu2015} is much more powerful than the class of f-reps considered. For example if $\mathcal{A}$ is a complete binary tree of height $h$, then for every structure $\mathcal{B}$, a d-rep of $\Hom(\mathcal{A}, \mathcal{B})$ can always be produced with size $O(2^h\|\mathcal{B}\|)$, since the treewidth of a tree is $1$. However there are arbitrarily large structures $\mathcal{B}$ such that any \emph{f-rep over a f-tree} of $\Hom(\mathcal{A}, \mathcal{B})$ has size $\Omega(\|\mathcal{B}\|^h)$. 

Indeed the results of \cite{olteanu2015} go further and in particular they provide the following tight upper bound on representation size, if we limit ourselves to the special class of \emph{d-reps over d-trees}.

\begin{theorem} \cite{olteanu2015} \label{thm: fhw}
Given a $\sigma$-structure $\mathcal{A}$ and a fractional hypertree
decomposition of $\mathcal{A}$ of width $w$, for any
$\sigma$-structure $\mathcal{B}$ we can compute a d-rep of
$\Hom(\mathcal{A}, \mathcal{B})$ of size
$O(\|\mathcal{A}\|^2\|\mathcal{B}\|^{w})$ in 
  time $O(poly(\|\mathcal{A}\|)\|\mathcal{B}\|^{w}
  \log\|\mathcal{B}\|).$

  Moreover there exists
  arbitrary large $\sigma$-structures $\mathcal{B}$ such that any d-rep over a
  d-tree of $\Hom(\mathcal{A}, \mathcal{B})$ has size
  $\Omega(\|\mathcal{B}\|^{w(\mathcal{A})})$, where $w(\mathcal{A})$ is the
    fractional hypertree width of $\mathcal{A}$. 
  \end{theorem}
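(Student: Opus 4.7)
The proof has two independent parts: a constructive upper bound, and a lower bound restricted to the sub-class of d-reps that respect a d-tree. I would treat them separately and then combine them.

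For the upper bound I would first convert the given fractional hypertree decomposition into a \emph{d-tree} $T$ in the sense of Olteanu and Závodný: a rooted tree on the universe $A$ of $\mathcal{A}$ such that for every $a\in A$, the set $\{a\}\cup \key(a)$ lies inside a single bag of the decomposition, where $\key(a)$ is the set of proper ancestors of $a$ in $T$ that still share a bag with $a$. Such a $T$ exists by a standard elimination-order argument on the decomposition. The central feature is that for every $a$, the ancestral constraints on $h(a)$ are entirely captured by $\kappa=h|_{\key(a)}$, and $|\{a\}\cup\key(a)|$ lies in a bag of fractional edge cover $\le w$, so by the Atserias--Grohe--Marx bound the number of partial homomorphisms extending any prefix into that bag is at most $\|\mathcal{B}\|^{w}$.

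With $T$ in hand I would construct the d-rep bottom-up. For each pair $(a,\kappa)$ with $\kappa\colon\key(a)\to B$ consistent with a partial homomorphism on $\mathcal{A}[\key(a)]$, I allocate a single gate $g_{a,\kappa}$ whose value $S_{g_{a,\kappa}}$ represents all $\kappa$-compatible homomorphisms on the subtree rooted at $a$. A $\cup$-gate enumerates the admissible images of $a$ consistent with $\kappa$, and a $\times$-gate factors the contributions of the children of $a$, which are independent conditional on $\kappa\cup\{a\mapsto h(a)\}$ because the children's subtrees have pairwise disjoint variables outside the bag. Reusing $g_{a,\kappa}$ whenever the same pair recurs turns the tree into a DAG and is precisely what yields $\|\mathcal{B}\|^{w}$ rather than $\|\mathcal{B}\|^{|A|}$. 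Summing over $a\in A$ gives at most $|A|\cdot\|\mathcal{B}\|^{w}$ gates, and bounding fan-in by the number of admissible extensions gives the $O(\|\mathcal{A}\|^{2}\|\mathcal{B}\|^{w})$ size bound. The running time is dominated by materialising the set of partial homomorphisms in each bag, which is done by a worst-case optimal join algorithm (such as Generic Join) in time $\|\mathcal{B}\|^{w}\log\|\mathcal{B}\|$ up to a $\operatorname{poly}(\|\mathcal{A}\|)$ overhead for indexing; one final pass along $T$ wires up the circuit.

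For the matching lower bound I would pick a family $\mathcal{B}_{n}$ witnessing tightness of the AGM bound for $\mathcal{A}$, so that $|\Hom(\mathcal{A},\mathcal{B}_{n})|=\Theta(\|\mathcal{B}_{n}\|^{w(\mathcal{A})})$; these are the same ``product-of-matchings / Atserias--Grohe--Marx'' hard instances used to show AGM tightness. The key structural fact is that in a d-rep \emph{over a d-tree}, every $\times$-gate's children partition the variables according to the subtrees of the chosen d-tree, so at the level corresponding to any bag $B_{t}$ the circuit must explicitly name each partial homomorphism on $B_{t}$ that contributes to the output. A counting argument on the gates at this level — essentially pigeonholing the $\Omega(\|\mathcal{B}_{n}\|^{w})$ distinct bag-assignments into gates, each contributing at least one wire — delivers the $\Omega(\|\mathcal{B}_{n}\|^{w})$ bound.

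The main obstacle is the lower bound rather than the upper bound: the upper bound is essentially bookkeeping on top of worst-case optimal join algorithms, whereas the lower bound requires carefully exploiting the d-tree restriction to argue that no sharing between $\cup$-children can cross a bag boundary and so compress the bag-projected relation below its AGM size. Getting the argument to go through cleanly hinges on formalising ``materialised at level of bag $B_{t}$'' in the DAG and showing that the semantics of $S_{g}$ at every gate on that frontier is in bijection with a distinct partial homomorphism on $B_{t}$.
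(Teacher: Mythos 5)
The paper does not actually prove this statement: Theorem~\ref{thm: fhw} is imported verbatim from Olteanu and Z\'avodn\'y \cite{olteanu2015}, so there is no in-paper proof to compare yours against. Judged on its own, your upper-bound sketch is essentially the known construction (d-tree from the decomposition, one gate per pair $(a,\kappa)$ with $\kappa$ an assignment of $\key(a)$, union over images of $a$, product over child subtrees, AGM bound per bag), and as a sketch it is fine, modulo the usual care that every atom's variables must lie on a root-to-leaf path of the d-tree so that the conditional-independence step at $\times$-gates is legitimate.

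The lower bound, however, has a genuine gap. First, AGM-tight families satisfy $|\Hom(\mathcal{A},\mathcal{B}_n)|=\Theta(\|\mathcal{B}_n\|^{\rho^\ast(\mathcal{A})})$, where $\rho^\ast$ is the fractional edge cover number, which is in general much larger than the fractional hypertree width $w(\mathcal{A})$; there is no family for which the homomorphism count itself is what certifies the exponent $w(\mathcal{A})$. Second, and more importantly, neither $|\Hom(\mathcal{A},\mathcal{B})|$ nor the size of a bag projection lower-bounds the size of a d-rep over a d-tree by pigeonholing alone: succinctness under factorisation is precisely the phenomenon that defeats such counting (a full Cartesian product has exponentially many homomorphisms but a linear-size d-rep). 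The actual argument needs two steps your sketch omits. (i) A structural step: for \emph{every} d-tree $T$ of $\mathcal{A}$ there is some element $a$ such that the fractional edge cover number of $\{a\}\cup\key(a)$ with respect to the atoms of $\mathcal{A}$ is at least $w(\mathcal{A})$ --- this is exactly how fractional hypertree width enters, as the minimum over d-trees of this per-key parameter; your ``bag $B_t$'' instead comes from the decomposition you were handed, not from the adversarially chosen d-tree, so the quantifiers are wrong ($\mathcal{B}$ must be hard against all d-trees, not against one decomposition). (ii) A hardness instance targeted at that key set, obtained from the AGM-tight construction for the substructure induced on $\{a\}\cup\key(a)$ and padded on the remaining variables, together with a proof that in a d-rep over $T$ distinct output-relevant assignments of $\{a\}\cup\key(a)$ cannot share gates, which is what converts the $\Omega(\|\mathcal{B}\|^{w(\mathcal{A})})$ count of key assignments into a gate count. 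Without (i) and (ii) the pigeonhole step you describe does not get off the ground.
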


 Given a class of structures $\classA$, Theorem~\ref{thm: fhw}
 tells us that the following are equivalent:
 \begin{enumerate}
 \item $\classA$ has bounded fractional hypertree width.
 \item There is a constant $c$, such that for every fixed $\sigma$-structure
   $\strucA \in \classA$ and any $\sigma$-structure
   $\strucB$, there is a d-rep over a d-tree of $\Hom(\strucA,\strucB)$ of size
   $O(\strucB^c)$. 
 \item For any fixed $\sigma$-structure
   $\strucA \in \classA$ there is a polynomial time algorithm, which for any 
   any $\sigma$-structure
   $\strucB$, constructs a d-rep over a d-tree of $\Hom(\strucA,\strucB)$.
 \end{enumerate}

 A priori, it could be that this equivalence breaks down when we consider general d-reps, not only d-reps over d-trees, but our main theorem implies that, on classes
 of bounded arity, this is not the case. This follows since a class of
 bounded arity has bounded fractional hypertree width if and only if
 it has bounded treewidth. In some sense therefore the approach of
 Olteanu and Z{\'a}vodn{\'y}, is the best we can do on classes of
 bounded arity. 

 What about on classes of unbounded arity? Here we lose the
 equivalence between bounded fractional hypertree width and
 bounded treewidth, meaning we cannot apply Theorem~\ref{thm:maintheorem} and so the situation is not as clear
 cut; we can use a width measure, introduced by Marx
 \cite{marx2013}, called submodular width to get smaller d-reps which
 allow for constant delay enumeration but which are not necessarily
 deterministic. The idea behind this approach is
 to use different tree decompositions on different parts of our 
 structure. In particular we use the results obtained by
 Berkholz and Schweikardt \cite{berkholz2020} in the context of
 enumerating the results of conjunctive queries and observe that they
 can easily be adapted. 
 
 \begin{theorem} \label{thm: subw}
   Let $\mathcal{A}$ and $\mathcal{B}$ be $\sigma$-structures, such
   that $\mathcal{A}$ has submodular width
   $\le w$. Then for all $\delta >0$ we can construct a $C \in \dset(\mathcal{A}, \mathcal{B})$ of size $O(g(\|\mathcal{A}\|) \cdot
   \|\mathcal{B}\|^{(1+\delta)w})$ in time $O(h(\|\mathcal{A}\|) \cdot \log(\|\mathcal{B}\|)
 \cdot  \|\mathcal{B}\|^{(1+\delta)w})$, where $g, h$ are computable
 functions. 

   \end{theorem}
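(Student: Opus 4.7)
\textbf{Proof plan for Theorem~\ref{thm: subw}.} The plan is to reduce the bounded submodular width case to the bounded fractional hypertreewidth case handled by Theorem~\ref{thm: fhw}, and then glue the resulting d-representations together with a single $\cup$-gate at the top. Concretely, I would invoke the decomposition underlying Marx's submodular-width tractability result \cite{marx2013}, in the form made algorithmic by Berkholz and Schweikardt \cite{berkholz2020}: given $\strucA$ with $\operatorname{subw}(\strucA)\le w$ and any $\delta>0$, together with an input structure $\strucB$, one can compute in time $h(\|\strucA\|)\cdot\operatorname{poly}(\|\strucB\|)$ a family $(\strucA_i,\strucB_i)_{i\in[t]}$ with $t\le g(\|\strucA\|)$, where each $\strucA_i$ comes equipped with a fractional hypertree decomposition of width at most $(1+\delta)w$, each $\strucB_i$ is obtained from $\strucB$ by suitable restriction of the unary domains of variables, and
\[
\Hom(\strucA,\strucB)\;=\;\bigcup_{i=1}^{t}\Hom(\strucA_i,\strucB_i).
\]
This is exactly the ``covering'' property exploited in \cite{berkholz2020} to obtain constant-delay enumeration, lifted from conjunctive queries to homomorphisms in the obvious way.

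Given the decomposition, I would construct for each $i\in[t]$ a d-representation $C_i$ of $\Hom(\strucA_i,\strucB_i)$ by applying Theorem~\ref{thm: fhw} to the pair $(\strucA_i,\strucB_i)$ together with its fractional hypertree decomposition of width $(1+\delta)w$. This produces $C_i$ with $\|C_i\|=O\bigl(\|\strucA_i\|^{2}\|\strucB_i\|^{(1+\delta)w}\bigr)$ in time $O\bigl(\operatorname{poly}(\|\strucA_i\|)\|\strucB_i\|^{(1+\delta)w}\log \|\strucB_i\|\bigr)$. Since $\|\strucA_i\|$ depends only on $\|\strucA\|$ (the $\strucA_i$ are produced by a procedure whose output depends only on $\strucA$ and $\delta$) and $\|\strucB_i\|\le\|\strucB\|$, each $C_i$ has size $O(g'(\|\strucA\|)\cdot \|\strucB\|^{(1+\delta)w})$ and is built in the claimed time bound.

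Finally, I would assemble $C$ by introducing a new $\cup$-gate $g^{\ast}$ whose children are the sinks of $C_1,\dots,C_t$. By the displayed identity, $S_{g^{\ast}}=\bigcup_i S_{C_i}=\Hom(\strucA,\strucB)$, so $C$ is a d-representation; well-definedness follows because every $C_i$ has sink domain equal to $V(\strucA)$ (after trivially extending by ``dummy'' single-value factors encoding the restrictions to $\strucB_i$, which is free via Lemma~\ref{lem:technical}). The total size and construction time are bounded by $t$ times the per-piece bounds, i.e.\ $O\bigl(g(\|\strucA\|)\cdot\|\strucB\|^{(1+\delta)w}\bigr)$ and $O\bigl(h(\|\strucA\|)\cdot\log(\|\strucB\|)\cdot\|\strucB\|^{(1+\delta)w}\bigr)$ respectively, after absorbing polynomial factors in $\|\strucA\|$ into $g$ and $h$.

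The main obstacle, and the reason this is not entirely routine, lies in the first step: the Marx/Berkholz--Schweikardt machinery is formulated for CSP-decision, respectively for CQ enumeration, and produces a cover of the \emph{solution space} (not merely a disjunction of decision instances) only once one is careful about how the domain restrictions imposed on each branch of the submodular-width split combine. One has to verify that the covering identity above indeed holds with equality (not mere inclusion) and that each $\strucA_i$ together with its decomposition of width $(1+\delta)w$ is produced with running time depending on $\|\strucB\|$ only through a polynomial factor that can be absorbed into the stated bound. Once these bookkeeping details are in place, the rest of the argument is purely a matter of plugging Theorem~\ref{thm: fhw} into each branch and taking a top-level union. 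Note that the resulting $C$ need not be deterministic, as the sets $S_{C_i}$ are generally not pairwise disjoint, which is consistent with the fact that the theorem only asserts existence of a (not necessarily deterministic) d-representation.
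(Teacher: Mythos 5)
Your proposal matches the paper's own proof essentially step for step: both invoke the Berkholz--Schweikardt refinement of Marx's submodular-width decomposition to write $\Hom(\strucA,\strucB)=\bigcup_{i}\Hom(\strucA_i,\strucB_i)$ with the number of pieces bounded by a function of $\|\strucA\|$, apply the Olteanu--Z{\'a}vodn{\'y} d-tree construction (Theorem~\ref{thm: fhw}) to each piece using its width-$(1+\delta)w$ decomposition, and join the sinks with a single top-level $\cup$-gate, noting the result need not be deterministic. Your bookkeeping remarks about the covering identity and time bounds are exactly the points the paper likewise defers to \cite{berkholz2020} and \cite{olteanu2015}.
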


   \begin{proof}[Proof (sketch)] 
 Fix some $\delta
 >0$, and suppose we have a $\sigma$-structure
 $\mathcal{A}$ with submodular width $\le w$. Using the results from \cite{berkholz2020} construct $\ell$ pairs of structures
 $(\strucA_1,\mathcal{B}_1), \dots, (\strucA_{\ell},\mathcal{B}_{\ell})$ such that
$\Hom(\mathcal{A},\mathcal{B}) = \cup_{i=1}^{\ell}
\Hom(\mathcal{A}_i,\mathcal{B}_i)$, where $\ell$ is bounded by a
function of $\|\mathcal{A}\|$. By the bound on the submodular width, for each
 pair we have an associated tree decomposition $(T_i,
 \beta_i)$ of $\mathcal{A}$ which has certain `nice' properties, crucially it is also a tree decomposition of $\mathcal{A}_i$. We
 can then apply a construction from \cite{olteanu2015} (Proposition
 9.3.) to turn each tree decomposition into a d-tree,
 $\mathcal{T}_i$. By the `niceness' of the original tree decomposition
 we get that the d-rep of  $\Hom(\mathcal{A}_i,\mathcal{B}_i)$ over $\mathcal{T}_i$ has size
 at most $O(\|\mathcal{A}\|^2 \cdot \|\mathcal{B}\|^{(1+\delta)w})$. We can
 then join the sinks of each of these d-reps to a new $\cup$-gate and
 we end up with a d-rep of $\Hom(\mathcal{A},\mathcal{B})$. The time
 bound follows from the time bounds given in \cite{berkholz2020} and \cite{olteanu2015}.
\end{proof}

The above theorem really does give us smaller representations. To see this observe that the same lower bound on $\dsize_{\graphC_4}$ from Example~\ref{ex: 3to4} also applies to $\dsize_{\graphC_k}$, for any $k>2$, since $\graphK_3$ is a minor of $\graphC_k$. Since the submodular
 width of $\graphC_k$ is $3/2$ for all $k>2$, we get that this lower bound is tight up to a logarithmic
 factor. On the other hand the fractional hypertree width of $\graphC_k$ is 2, meaning that within the class of d-reps over d-trees we have a lower bound of $m^2$. There is also an improvement in terms of classes of structures, since there are classes of bounded submodular width but unbounded fractional hypertree width (see
 \cite{marx2011tractable}, note that  bounded adaptive width and
 bounded submodular width are equivalent properties \cite{marx2013}).

 Since the $\ell$ in the proof of Theorem~\ref{thm: subw} only depends on
$\mathcal{A}$, by a
result from Durand and Strozecki \cite{durand2011}, the d-rep
constructed in the theorem can be used for constant-delay enumeration
of $\Hom(\mathcal{A}, \mathcal{B}).$  We can supplement this with a
partial converse based on the following conditional lower bound 
from \cite{berkholz2020}: assuming the exponential time hypothesis (ETH) for any
recursively enumerable class of self-join free structures $\classA$
there is an algorithm for Enum-CSP($\classA$, $\classall$) with
constant delay after FPT preprocessing if and only if $\classA$ has
bounded submodular width. Therefore, in particular, assuming ETH there
is a FPT algorithm which given a $\sigma$-structure $\strucA \in \classA$ and
a $\sigma$-structure $\mathcal{B}$, produces within FPT-time a d-rep which
allows us to enumerate $\Hom(\mathcal{A}, \mathcal{B})$ with delay
depending only on $\mathcal{A}$ if and only if
$\classA$ has bounded submodular width. 

However note that the d-reps produced by
Theorem~\ref{thm: subw} are not necessarily deterministic and cannot be used for counting. An interesting question is whether this separation between
counting and enumeration is intrinsic or down to our ignorance, see \cite{sharpsub} for work in this direction. Approaching this question from the perspective of this paper we would like to find either an algorithm that produces
deterministic d-reps of homomorphisms from classes of bounded
submodular width in FPT time or a proof that such an algorithm is
impossible.

\section{Proof of Lemma~\ref{lem:technical}}
\begin{proof}
To construct $C'$ first remove all input gates of the form $\{a \mapsto b\}$ for $a 
\not\in X$, then iteratively delete all gates which have no in-edges. At each stage we 
also delete any edges which originate from a deleted gate. This construction can be 
carried out via a single bottom up traversal of $C$ and therefore in time linear in $\|
C\|$. It is intuitively easy to see the construction has the claimed properties although 
the formal proof below is a little involved. 

Let $h \in \pi_{X}\Hom(\mathcal{A},\mathcal{B})$. Then there is some $h' \in S_{C}$ such 
that $\pi_{X} h' = h$. Let $g_1, \dots, g_{\alpha}=s$ be a path in $C$, such that $
\pi_{\dom(g_i)}h' \in S_{g_i}$ and $g_1$ is of the form $\{a \mapsto b\}$ for some $a \in 
X$. Such a path exists since $h \in C_g$, so we can construct this path by choosing an 
arbitrary $a \in X$, and working backwards starting at the source, such that if we have 
constructed the path back to gate $g$, we next choose some child $c$ of $g$, such that $
\pi_{\dom(c)}h' \in S_{c}$ and $a \in \dom(c)$. Note that all of the $g_i$ are also gates 
in $C'$. We prove by induction that $\pi_{\dom(g_i) \cap X} h' \in S_{g_i}^{C'}$, where 
$S_{g_i}^{C'}$ refers to the set of (partial) homomorphisms defined by Equation~\eqref{eq:1} for the circuit $C'$. We also write $\dom^{C'}(g)$ to denote the domain of gate $g$ in $C'$, and retain the notation $S_g$ and $\dom(g)$ to refer to the original circuit $C$. The base case 
is trivial from the definition of the path. Moreover note that for any gate $g$ in $C'$, 
$ \dom^{C'}(g) = \dom(g) \cap X$, this can be proved by a simple bottom up induction on 
$C'$. Suppose the claim is true for $g_i$ and that $\dom^{C'}(g_{i+1}) = \dom^{C'}(g) 
\cup \bigcup_{i=1}^t y_i$. Suppose $h'(y_i) = b_i$. Since $\pi_{\dom(g_{i+1})}h' \in 
S_{g_{i+1}}$, it follows there is an input gate of the form $\{y_i \mapsto b_i\}$ in 
$C_{g_{i+1}}$. Then if we consider the path in $C$ from this gate to $g_i$ it is clear 
that none of the gates or wires are deleted and so $\{y_i \mapsto b_i\}$ is a gate in 
$C'_{g_i}$. This implies that $\pi_{\dom(g_{i+1}) \cap X} h' = \pi_{\dom^{C'}(g_i)}h' \times \{y_1 
\mapsto b_1\} \times \dots \times \{y_t \mapsto b_t\} \in S^{C'}_{g_{i+1}}$. We therefore 
have that $h = \pi_X h' \in S_{g_\alpha} = S_{C'}$. Conversely one can show by bottom up 
induction on $C'$ that for every gate $g$ and every $h \in S^{C'}_g$, $h \in 
\pi_{\dom^{C'}(g)} \Hom(\mathcal{A}, \mathcal{B})$ which, combined with the observation 
that $\dom^{C'}(g) =\dom(g) \cap X$ completes the proof that $S_{C'}= \pi_{X}
\Hom(\mathcal{A},\mathcal{B})$. Finally in constructing $C'$ we only removed wires and 
gates from $C$ and so clearly $\|C'\| \le \|C\|$.

To construct $C''$ first remove all input gates of the form $\{x_i \mapsto b\}$ for $x_i \in X$, $b \not \in Y_i$. Then iteratively perform the following procedure:
\begin{itemize}
\item If $g$ is a $\cup$-gate and $g$ has no in-edges delete $g$
\item If $e=\{g', g\}$ is an edge in our circuit and $g'$ is deleted then delete $e$. If $g$ is a $\times$-gate also delete $g$. 
\end{itemize} 
As before this construction can be carried out via a single bottom up traversal of $C$ and therefore in time linear in $\|C\|$. Correctness follows from a very similar inductive argument to the above. 
\end{proof} %
 \label{ap: technical}

\section{Proof of Claim~\ref{claim: big}} \label{ap: d-clique}

\begin{proof}
We begin by proving the following claim which is the key to bounding the size of $C'$. 
\begin{claim} \label{claim: active}
Let $g$ be a definition in $\hat{C}$ such that there is a unique path from each parent of $g$ to $s$. Then $g$ has at most $3^{k-1}\log^{k-1}(n)$ active parents.
\end{claim}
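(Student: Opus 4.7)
The plan is to show that each active parent of $g$ gives rise to a distinct ``context'' partial homomorphism, and then use the no-biclique property of $\graphG$ from Lemma~\ref{thm: G} to bound how many such contexts can occur.

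First I would observe that, since $g$ is a definition (out-degree $>1$) and every small gate in $\hat{C}$ has at most one parent, $g$ must be big. So there is some $x \in \dom(g)$ with $T_x := \{e(x) : e \in S_g\}$ of size $|T_x| > 3\log(n)$. Next, using the uniqueness of the path from $p_i$ to $s$, I would unwind the recursive definition of $A_i$: at $\cup$-gates the set is passed through, and at $\times$-gates it is joined with the \emph{full} sibling sets $S_u$. A short induction up this path yields the factorisation
\[
A_i \;=\; \{\,e \cup d \,:\, e \in S_g,\; d \in D_i\,\},
\]
where $D_i$ is the Cartesian product of the sibling $S$-sets along the path; well-definedness guarantees that every $d \in D_i$ is a partial homomorphism on $V(\graphK_k)\setminus \dom(g)$. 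Activeness of $p_i$ then supplies some $h^* \in A_i\setminus\bigcup_{j\neq i}A_j$; writing $h^* = e^* \cup d^*_i$ with $e^*\in S_g$ and $d^*_i\in D_i$, the fact that $h^*\notin A_j = \{e\cup d : e\in S_g,\, d\in D_j\}$ together with $e^*\in S_g$ forces $d^*_i \notin D_j$ for every $j\neq i$. Hence the chosen $d^*_1,\dots,d^*_\alpha$ are pairwise distinct.

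Now comes the decisive biclique argument. We may assume $\dom(g)\subsetneq V(\graphK_k)$, since otherwise every $D_i$ consists of only the empty function and $\alpha\le 1$. Pick any $y\in V(\graphK_k)\setminus\dom(g)$ and set $U_y := \{d^*_i(y) : i \in [\alpha]\}$. For each $e\in S_g$ and each $i$, $e\cup d^*_i \in \Hom(\graphK_k,\graphG)$, so the edge $\{x,y\}$ of $\graphK_k$ produces the edge $\{e(x),d^*_i(y)\}\in E(\graphG)$. Moreover, since $\graphG$ has no self-loops, every homomorphism from $\graphK_k$ to $\graphG$ is injective, so $T_x \cap U_y = \emptyset$. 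Hence $T_x\times U_y$ is a complete bipartite subgraph of $\graphG$ on disjoint parts; combined with $|T_x|>3\log(n)$, Lemma~\ref{thm: G}(2) forces $|U_y|\le 3\log(n)$. Since each $d^*_i$ is determined by its values on the at most $k-1$ coordinates of $V(\graphK_k)\setminus \dom(g)$,
\[
\alpha \;\le\; \prod_{y\in V(\graphK_k)\setminus \dom(g)} |U_y| \;\le\; (3\log(n))^{k-1} \;=\; 3^{k-1}\log^{k-1}(n).
\]

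The most delicate point will be justifying the factorisation $A_i = \{e\cup d : e\in S_g,\, d\in D_i\}$ in paragraph two: it requires a careful induction along the unique $g$-to-$s$ path through $p_i$, using both the well-definedness of the circuit (for domain-disjointness at $\times$-gates) and the explicit product structure in the $\times$-gate case of the definition of $A_i^v$. Once this is in place, the biclique contradiction and the coordinatewise counting are routine.
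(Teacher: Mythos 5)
Your proof is correct and follows essentially the same route as the paper's: exploit the product structure of the sets $A_i$ along the unique paths to $s$, extract one distinct ``context'' partial homomorphism per active parent, and use the bigness of $g$ (forced by the construction of $\hat{C}$) together with property (2) of Lemma~\ref{thm: G} to bound how many contexts can exist. The only (harmless) difference is the final count: you bound each coordinate's value set by $3\log(n)$ separately and multiply, whereas the paper pigeonholes to find a single coordinate of the contexts with at least $\beta^{1/(k-1)}$ values and derives one biclique contradiction; both yield the same $3^{k-1}\log^{k-1}(n)$ bound.
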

\begin{proof}
First observe that we may assume WLOG that the domain of any definition in $C$ is a strict subset of $\{x_1,\ldots,x_k\}$, as otherwise we could replace $C$ with an equivalent but smaller d-rep where this property holds.\footnote{In more detail: note that if $\dom(g) = \{x_1, \dots, x_k\}$ then any out edge from $g$ goes to a $\cup$-gate. Let $(g,p)$ be such an edge. Consider the circuit where we replace this edge with an edge from $g$ to $s$. Clearly this is an equivalent circuit. By repeating this process for every such edge we get an equivalent circuit which is no bigger than our original and where there is no definition with domain $\{x_1,\ldots,x_k\}$.} Suppose that there are $\beta$ active parents of a gate $g$ in $\hat{C}$. If $h, h' \in S_g$ then for every mapping  $h^{\ast}$
with domain $\dom(C) \setminus \dom(g)$ it holds that $h
\cup h^{\ast} \in A_i$ if and only if $h' \cup h^{\ast} \in A_i$,
since the sets $A_i^v$ are only expanded at $\times$-gates. Therefore,
if $h \cup  h^{\ast} \in A_i \setminus \cup_{i \neq j}A_j $ then so is
$h' \cup h^{\ast}$ for all $h' \in S_g$.  It follows that $|\!\cup_{i}A_i| \ge \beta|S_g|$ and so 
$|\!\cup_{i} \pi_{\dom(C) \setminus \dom(g)} A_i | \ge
\beta$. Since these correspond to $\beta$ distinct $(k-|\!\dom(g)|)$-cliques in
$G$, there must be
$x_1\in \dom(C) \setminus \dom(g)$ such that $|\pi_{\{x_1\}}\big(\!\cup_{i} \pi_{\dom(C) \setminus \dom(g)} A_i\big) | \ge
\beta^{1/(k-|\!\dom(g)|)}\ge \beta^{1/(k-1)}$. 
Since $g$ is big there is also a $x_2\in\dom(g)$ such that
$|\pi_{\{x_2\}}S_g|>3\log(n)$. By the considerations above, the images of
both sets form a biclique  of size at least $\beta^{1/(k-1)}\times
3\log(n)$ in $\graphG$. Therefore, by Corollary~\ref{thm:
  G},  $\beta \le 3^{k-1}\log^{k-1}(n)$.
\end{proof}

We now show how this claim implies the size bound. Consider an edge $E$ going from $g$ to $p$ in $\hat{C}$. When our transversal reaches $p$, it has also visited every parent of $p$ so there is a unique path from every such parent to $s$. Consequently, we may apply Claim~\ref{claim: active} and so we make $\beta \le 3^{k-1}\log^{k-1}(n)$ copies of $p$; $E$ is copied the same number of times. Denote the copies of $p$ by $p_i, i \in [\beta]$ and the copy of $E$ going from $g$ to $p_i$ by $E_i$. When our transversal reaches $g$, each $E_i$, is either deleted or there is a unique copy of $g$ which has an edge to $p_i$ and we may regard this edge as being $E_i$. At later stages of the traversal the edges $E_i$ could be deleted but no new copies are made. Therefore the number of wires in $C'$ is at most $3^{k-1}\log^{k-1}(n)$ 
bigger than the number of wires in $\hat{C}$. Moreover note that the size of any f-rep is 2 times the number of wires plus 1. Therefore $\|C'\| 
\le 2 \cdot 3^{k-1}\log^{k-1}(n)\|\hat{C}\|$. It remains to show that $C'$ is indeed an equivalent 
circuit.
                                  
So suppose that up to the stage where our traversal visit gate $g$, with parents $p_1,\dots, 
p_{\alpha}$, the resulting circuit is equivalent to $\hat{C}$. Let $C(g)$ 
be the circuit formed by deleting $g$ and creating $\alpha$ new gates $g_1, \dots, 
g_{\alpha}$ such that the children of each $g_i$ are exactly the children of $g$ and 
$g_i$ has exactly one out-edge going to $p_i$. Clearly the resulting circuit is 
equivalent to $\hat{C}$. Now suppose $p=p_i$ is a parent of $g$ and that $p$ is not 
active. We form the circuit $C(g,p)$ from $C(g)$ by deleting the wire from $g$ to $p$ and then 
iteratively deleting all gates that have no incoming edges as well as all edges 
originating from such gates. Suppose $h \in A_j$ with $j \neq i$. Then $h\in S_{C(g,p)}$, 
since none of the wires or gates on the path
from $g$ to $s$, via $p_j$ have been altered. Therefore by construction
$S_{C(g,p)} \supseteq S_{C(g)} \setminus A_i$. Since $A_i \subseteq
\cup_{i \neq j} A_j$, $S_{C(g,p)}=S_{C(g)}$ and we can get a smaller
equivalent circuit by replacing $C(g)$ with $C(g,p)$. By repeating this process for every non-active parent of $g$, we get exactly the circuit that our process produces after it has visited $g$ and by the above such a circuit is equivalent to $\hat{C}$. 
\end{proof}

\end{document}